\newtheorem{thm}{Theorem}[section]
\newtheorem{lem}{Lemma}[section]
\newtheorem{prop}{Proposition}[section]
\newtheorem{cor}{Corollary}[section]
\theoremstyle{definition}
\newtheorem{defn}{Definition}[section]
\theoremstyle{remark}
\newtheorem{rem}{Remark}[section]
\newcounter{MyStepNo}
\DeclareRobustCommand{\mystepTwo}[1]{%
   \refstepcounter{MyStepNo}%
   \theMyStepNo\label{#1}}
\newcounter{MyStepNoTwo}
\numberwithin{equation}{section}
\begin{document}

\title{Mini-Flash Crashes, Model Risk, and Optimal Execution}



\author{Erhan Bayraktar}
\address{Department of Mathematics, University of Michigan, Ann Arbor, Michigan 48109}
\email{erhan@umich.edu}
\thanks{Erhan Bayraktar is supported by the National Science Foundation under grant DMS-1613170 and the
Susan M. Smith Professorship. Alexander Munk is supported by a Rackham Predoctoral Fellowship. 
We gratefully acknowledge E. Jerome Benveniste, Charles-Albert LeHalle, Sebastian Jaimungal as well as the participants in the 
de Finetti Risk Seminar (jointly organized by Bocconi University and the University of Milan), 
LUISS's Mathematical Economics and Finance Seminar,
University of Michigan's Financial/Actuarial Mathematics Seminar,
and the Fields Institute's Quantitative Finance Seminar, for their valuable suggestions.}


\author{Alexander Munk}
\address{Department of Mathematics, University of Michigan, Ann Arbor, Michigan 48109}
\email{amunk@umich.edu}

\subjclass[2010]{Primary 91G80; Secondary 60H30, 	60H10, 34M35}



\keywords{Flash crash, model error, optimal execution.}

%

\begin{abstract}

Oft-cited causes of mini-flash crashes include
human errors, 
endogenous feedback loops,
the nature of modern liquidity provision.
We develop a mathematical model which 
captures aspects of these
explanations. Empirical features of recent mini-flash 
crashes are present 
in our framework. 
For example, 
there are periods when 
no such events will occur. 
If they do, even just before their onset, 
market participants may not know with 
certainty 
that a disruption will unfold. 
Our mini-flash crashes can materialize
in both low and high trading volume
environments and may be accompanied by 
a partial synchronization in
order submission.

Instead of adopting a classically-inspired
equilibrium approach, we borrow 
ideas from the 
optimal execution literature.
Each of our agents begins with 
beliefs about how 
his own trades impact prices
and how prices would move
in his absence. 
They, along with 
other market participants, 
then  
submit orders which
are executed at a common venue. 
Naturally, this leads us to
explicitly distinguish between 
how prices actually 
evolve and our agents' opinions. 
In particular, every agent's 
beliefs will be expressly incorrect.

\end{abstract}

\maketitle

\section{Overview}\label{intro sect}

 Amidst the violent market disruption on May 6, 2010, the infamous Flash Crash,
\begin{quote}
``Over 20,000 trades across more than 300 securities were executed at prices more than 60\% 
away from their values just moments before. 
Moreover, many of these {\it trades were executed at prices of a penny or
less, or as high as \$100,000,} before prices of those securities returned to their `pre-crash' levels'' (\cite{sec+cftc+report}).
\end{quote}
Today, this particular event remains so memorable due to its
remarkable scale.

In fact, lesser versions of the Flash Crash, or {\it mini-flash crashes}, happen quite often. 
Anecdotal evidence suggests that
there may be over a dozen every day (\cite{cnn+num+crashes}). 
A rigorous empirical analysis uncovered ``18,520 crashes and spikes
with durations less than 1,500 ms'' in stock prices from 2006 through 2011 (\cite{nat+paper}). 
The exhaustive documentation on Nanex LLC's ``NxResearch'' site offers further corroboration 
as well (\cite{nanex+main+research}).

Why do such phenomena occur? Several answers have been proposed. 
Roughly, most point to 
human errors, 
endogenous feedback loops,
the nature of modern liquidity provision. 
These ideas can be viewed as different ways to
rationalize how an 
 extreme (local or global) dislocation in supply and demand
can arise in modern markets. 
Our contribution
is the development of a model 
which captures this.
Our model also appears to exhibit features of
 historical mini-flash crashes. 
For instance, there are 
periods in which extreme
 price moves will not manifest. 
 If they do, accompanying trade 
 volumes can be high or low. 
 Some market participants 
 may partially synchronize their 
 trading during a mini-flash crash. 
 Our agents may not know that
 a mini-flash crash is about to begin
 even just before its onset.

 Our results seem to be aligned 
 with intuitive expectations as well.
 For example, our mini-flash crashes
can begin if some of our agents are too
 uncertain about their initial beliefs, 
 inaccurate in their understanding of price dynamics, 
 slow to update their models and objectives, 
 or willing to take on risk.

We construct our model beginning with 
a finite population of agents trading in a single risky asset, 
each of whom must decide how to act 
 based upon his own preferences, beliefs, 
 and observations. 
 Our specifications are drawn from
 ideas in the price impact and optimal execution literature 
 and are given in Sections \ref{inv model subsec} - \ref{obj fxn subsec}.

 We imagine that our agents' orders are submitted to a single venue, 
 where they are executed together 
 with trades from other (unmodeled) market participants. 
This naturally compels us to
make an explicit distinction between 
how the risky asset price actually evolves
and our agents' beliefs about its future evolution (see Section \ref{exc price sect FIN}).

Since we view our agents as 
simultaneously
solving their own optimal execution 
problems, we avoid certain strong 
assumptions that would have been 
implicitly needed, if we had used 
a classical equilibrium-based approach. We use a similar framework to \cite{Cardaliaguet2018, 2018arXiv180304094C} in that our players' interaction with the rest of the world (in addition to each other) is given by a price impact model.
An additional consequence is that 
we precisely describe the errors in our 
agents' beliefs.
Potentially, each agent could be wrong
both about how
his trades affect prices and 
how prices would move in his absence.

To the best of our knowledge, 
this general setup appears to be a new
paradigm for modeling 
heterogeneous agent systems
in the contexts of optimal execution
and mini-flash crashes.

We are ready to begin presenting 
our work in detail. 
We highlight key background material and our paper's contributions 
in relation to it in Section \ref{back contrib sect}.
Our agents and their beliefs
are described in Section \ref{mod details sec}.
 We describe the correct dynamics 
 of the risky asset's price in Section \ref{exc price sect FIN}. 
 General analysis of the dynamical system when our agents 
 act as prescribed by Section \ref{mod details sec}
 but prices actually move as in Section \ref{exc price sect FIN}
 are given in Section \ref{true price dyn section}. 
 Using the material in Section \ref{true price dyn section},
we obtain explicit characterization of mini-flash crashes when \emph{uncertain agents} are semi-symmetric in
 \ref{exmp sect}. Numerical examples illustrating our main results are given in Section~\ref{sec:numerical}.
 Our longer proofs are contained in Appendices 
 \ref{sect mod details sec proofs app} - \ref{exmp sect PRF APP}.

\section{Background \& Contributions}\label{back contrib sect}

In this section, we  
clarify our contributions and explain
how they fit into the current literature. 
We already mentioned that existing theories 
on the causes of
mini-flash crashes could be 
viewed as falling into one of five categories (see Section \ref{intro sect}). 
Here are further details.

\begin{enumerate}[label=\roman*)]

\item \label{human errors crash}  
Human errors (and, relatedly, improper risk management) are among the 
most commonly cited causes of mini-flash crashes (\cite{flash+pound+algos}, \cite{bloom+fat+fing}, \cite{sec+merrill}). 
The SEC claims that the majority of mini-flash crashes originate from such sources, in fact (\cite{bloom+fat+fing}). 
When we read about 
{\it fat finger trades}, {\it rogue algorithms}, or {\it glitches}
in the media, typically human errors are indirectly responsible.
For example, due to a bug in the 
systems at the Tokyo Stock Exchange
and a typo in a trade submitted by Mizuho Securities, 
the share price of the recruitment agency J-Com
fell in minutes from $\yen$672,000
to $\yen$572,000
on December 8, 2005 (\cite{miz+jcom+crash}).

\item \label{same direct crash}
Mini-flash crashes may be caused by 
the rapid, endogenous formation of positive feedback loops
(\cite{nanex+flash+crash+sum}, \cite{nat+paper}, \cite{PhysRevE+pred+flash}, 
\cite{flash+pound+algos}, \cite{huang+wang+2008}, \cite{rock+clock}). 
As Johnson et al. put it, 
\begin{quote}
``Crowds of agents frequently converge on the same
strategy and hence simultaneously flood the market with the same
type of order, thereby generating the frequent extreme price-change
events'' (\cite{nat+paper}). 
\end{quote}
A separate empirical study on the Flash Crash of May 6, 2010, specifically, 
determined that at its peak, ``95\% of the trading was due to endogenous triggering effects''
(\cite{PhysRevE+pred+flash}).

\item \label{modern liq provision}

The nature of liquidity provision in modern markets is thought to 
cause some mini-flash crashes
(\cite{kirilenko2016flash}, \cite{easley+flash+crash}, \cite{dennis+erroneous}, 
\cite{hft+mini+flash+crash}, \cite{news+vol}, \cite{gayduk1}, \cite{easley+flow+defn}).
Today, the majority of liquidity is provided by participants that are
free from formal market-making obligations (\cite{dennis+erroneous}).
In particular, they can instantly vanish, effectively taking one 
or both sides of the order book at some venue with them. 
A mini-flash crash can arise either directly as bid-ask spreads
blow out or indirectly when a market order (of any size)
tears through a nearly empty collection of limit orders. 
Such a phenomenon has been called {\it fleeting liquidity} 
and may have contributed to the occurrence of 
38\% of mini-flash 
crashes from 2006 - 2011 (\cite{hft+mini+flash+crash}).

This proposed explanation is deeply intertwined 
with a crucial empirical observation:
Mini-flash crashes occur in both 
high and low trading volume regimes. For instance, the
trading volume during the 30s mini-flash crash 
of ``WisdomTree LargeCap'' Growth Fund 
on November 27, 2012
was nearly eight times the average daily trading volume 
for this security (\cite{sec+merrill}). 
The empirical study by Florescu et al. 
offers extensive evidence that mini-flash crashes 
often occur during low trading volume periods as well.

\end{enumerate}

Aspects of
(\ref{human errors crash}, 
(\ref{same direct crash}, 
and (\ref{modern liq provision}
are reflected in our work. 
For example, the human error theory arises
in each of the following ways:

\begin{enumerate}[label=\alph*)]

\item \label{first human error}
Every agent
believes that 
a mini-flash crash is a null event. 
On the contrary, there are cases in which 
one will occur almost surely
(see Theorems \ref{SS main blow up lem FIN} and 
\ref{no Inv expl lem SS FIN}).

\item \label{first human error B}
Every agent thinks that his trades
affect prices through specific 
temporary and permanent price impact
coefficients (see Section \ref{inv model subsec}). 
His estimates for these parameters might be 
wrong (see Section \ref{exc price sect FIN}).

\item \label{first human error C}
Every agent's trades may also indirectly impact prices 
by inducing others to make different 
decisions than they would otherwise 
(see Section \ref{obj fxn subsec} and Section \ref{exc price sect FIN}). 
This potential effect
is not modeled by our agents (see Section \ref{inv model subsec}). 
More generally, even if we have a single agent in our setup
trading with other unspecified market participants, 
the parameters in his fundamental value 
model might be inaccurate (see Section \ref{inv model subsec}
and Section \ref{exc price sect FIN}).

\item  \label{first human error D}  
No agent revises the general class of his beliefs, 
admissible strategies, 
or objectives on our time horizon
(see Sections \ref{inv model subsec} - Section \ref{obj fxn subsec}).
In some cases, a mini-flash crash will not occur if this period
is fairly short but will if it is too long (see Lemmas \ref{det A no root no blow up}
and \ref{te study lem SS FIN}).

\item \label{first human error E}
Every agent is averse to 
his position's apparent volatility risks 
(see Section \ref{obj fxn subsec}). 
In some cases, there will be no 
mini-flash crash when our agents are sufficiently 
averse to these risks; otherwise, there will be one
(see Lemmas \ref{det A no root no blow up}
and \ref{te study lem SS FIN}).

\item \label{first human error F}
Every agent has the opportunity to 
update the drift parameter in 
his price model based upon his
observations (see Section \ref{inv model subsec}). 
In some cases, a mini-flash crash will unfold 
because our agents are too easily persuaded 
to revise their priors 
(see Lemmas \ref{det A no root no blow up}
and \ref{te study lem SS FIN}).

\item \label{last human error}
Every agent has a model for how prices 
are affected by the temporary impact 
of trades (see Section \ref{inv model subsec}). 
In some cases, there will be a mini-flash 
crash if our agents sufficiently 
underestimate the role of aggregate temporary impact. 
No such disturbance will occur otherwise. 
Our agents may be more prone to induce 
mini-flash crashes in this way 
when there are many of them
(see Lemmas \ref{det A no root no blow up}
and \ref{te study lem SS FIN}).

\end{enumerate}
Notice that some of our agents' human errors directly 
cause mini-flash crashes, though not all (see Proposition \ref{det A no root no blow up}).
We highlight this observation in Figures \ref{fig: Opt_Inv_NoRoot_MAX} - \ref{fig: ExcPri_NoRoot_MAX}. 
Implicitly, the occasional absence of mini-flash crashes also agrees with (\ref{human errors crash}.
Despite the regularity of these disruptions 
on a market-wide basis, 
individual securities may rarely experience such an event. 
Similarly, traders' models and strategies do
roughly achieve their intended goals much of the time, 
which we observe as well (see Proposition \ref{det A no root no blow up}).

Several key ideas from the 
endogenous feedback loop theory 
are present in our paper. 
For example, if a mini-flash crash does occur, 
it almost surely does so because of 
``endogenous triggering effects.''
Specifically, our mini-flash crashes arise when some of our agents 
buy or sell at faster and faster rates, which they only do 
because they started trading more rapidly in the first place
(see Section \ref{exc price sect FIN} and Lemma \ref{act 1st order ODE lem}). 
As predicted by this theory, some of our agents also ``converge on the same strategy''
during mini-flash crashes: In certain cases, the agents driving these
events all buy or sell together with the same (exploding) growth rate
(see Theorems \ref{SS main blow up lem FIN} and \ref{no Inv expl lem SS FIN}). 
Figures \ref{fig: Opt_Spd_PosEV_Expl_MIN_MAX}, 
\ref{fig: Opt_Spd_NegEV_Expl_MIN_MAX_DOWN}, 
and \ref{fig: Opt_Spd_NegEV_Expl_MIN_MAX_UP}
graphically illustrate this partial synchronization.

We do not explicitly model liquidity providers in our framework, 
as we view our agents as submitting market orders to a single
venue (see Section \ref{exc price sect FIN}). 
We still view our paper as reflecting (\ref{modern liq provision}, 
at least in some sense, since our mini-flash crashes
can be accompanied by both high and low trading volumes
(see Corollary \ref{cert agents opt strats} and 
Theorems \ref{SS main blow up lem FIN} and \ref{no Inv expl lem SS FIN}). 
Visualizations of this point are provided 
in Figures 
\ref{fig: Opt_Inv_PosEV_Expl_MAX}, 
\ref{fig: ExcPri_PosEV_Expl_MIN_MAX},
\ref{fig: Opt_Inv_NegEV_Expl_MIN_MAX_DOWN}, 
\ref{fig: ExcPri_NegEV_Expl_MIN_MAX_DOWN},
\ref{fig: Opt_Inv_NegEV_Expl_MIN_MAX_UP},
and \ref{fig: ExcPri_NegEV_Expl_MIN_MAX_UP}.

\section{Agents and the Execution Price}\label{mod details sec}

In this section we describe our 
agents and their individual 
optimal execution problems. We end this section by introducing the ``true" execution price.

\subsection{Agents' Models}\label{inv model subsec}

Our agents trade continuously by optimally selecting a trading rate from a particular class of admissible strategies. 
To motivate our specifications of their choices and objectives, we first define their models and beliefs.

All trades submitted at time $t$ are executed immediately at the price $S_t^{exc}$. 
At each time $t$, every agent observes the correct value of $S_t^{exc}$. 
No agent knows the true dynamics of the stochastic process $S^{exc}$, though.

In our framework, Agent $j \in \{1,\cdots, N\}$'s models the the risky asset without his own trading by 
\begin{align}\label{fund price model j}
S_{j,t}^{unf} = S_{j,0} + \beta_j t +  W_{j,t}, \qquad t \in \left[ 0, T \right], 
\end{align}
where $W_j$ is a Wiener process and $\beta_j$ is a normally distributed with mean $\mu_j$  and variance $\nu_j^2$ which is independent of the Wiener process. (In what follows we will write down $P_j$ for the probability measure of agent $j$.)
This drift term represents the price pressure that Agent $j$ believes will arise due to the trades of 
(other) institutional investors. 
Agent $j$ approximates the average behavior of uninformed or noise traders using the Brownian term.\footnote{
Almgren \& Lorenz provide further details regarding the interpretation of (\ref{fund price model j}) (\cite{almgren2007adaptive}). 
A possible extension of our work could replace (\ref{fund price model j}) with one of the more recent
models considered in the literature on optimal trading problems with a learning aspect
(\cite{almgren2007adaptive}, \cite{eks+vaic+drift}, \cite{jaim+algo+learn}, \cite{garl+ped+dyn}, \cite{pass+vazq}, 
\cite{frey+gabith+port}, \cite{cola+eksi+shall}).  
} If $\nu_j^2 > 0$, then we call Agent $j$ an {\it uncertain} agent.
If $\nu_j^2 = 0$, then we call Agent $j$ a {\it certain} agent.
Regardless of whether he is certain or uncertain in this sense, 
we will soon see that Agent $j$ can always be viewed as certain about many things, 
e.g, he will not change the 
form of his models, objectives, or admissible strategies on $\left[ 0 , T \right]$. From this 
perspective, one might partially connect our work on mini-flash crashes 
to explanations of longer term financial bubbles based on overconfident investors (\cite{scheink+xion+34}).

Intuitively, Agent $j$'s selection of (\ref{fund price model j}) makes the most sense
when $N$ is large and $T$ is short. 
Notice that Agent $j$ makes no attempt to precisely estimate the number of other market participants, 
nor their individual goals or beliefs. 
That he believes he cannot improve the predictive accuracy of (\ref{fund price model j}) by doing so appears to suggest
that the population of traders is of sufficient size.
Together with the fact that real drifts and volatilities are non-constant, 
(\ref{fund price model j}) only seems even potentially plausible over short periods.

Let $\mathcal{A}_{j}$ be the space of $\mathcal{F}_{j,t}^{unf}$-adapted processes $\theta_{j}$ of trading speeds such that $\theta_{j,\cdot} \left( \omega \right)$ is continuous on $\left[ 0 , T \right]$ for $P_j$-almost surely, 
\begin{equation}\label{admiss strat int}
E^{P_j} \left[ \displaystyle\int_{0}^{T} \theta_{j,t} ^2 \, dt \right] < \infty, 
\end{equation}
and 
\begin{equation}\label{admiss strat term}
x_{j} + \displaystyle\int_{0}^T \theta_{j,t} \, dt = 0  \quad P_{j}-\text{a.s.} 
\end{equation}
For any $\theta_{j} \in \mathcal{A}_{j}$, we denote by 
\begin{equation}\label{x theta def}
X_{j,t}^{\theta_j} = x_j + \displaystyle\int_{0}^{t} \theta_{j,s} \, ds,
\end{equation}
the agent's inventory.

The agent models the execution price
as $S_{j,\theta_j}^{exc}$, which is given by
\begin{align}\label{exc price model j}
S_{j,\theta_{j},t}^{exc} = S_{j,t}^{unf} + \eta_{j,per} \displaystyle\int_{0}^{t} \theta_{j,s} \, ds + \displaystyle\frac{1}{2} \eta_{j,tem} \theta_{j,t}, \qquad t \in \left[ 0, T \right]. 
\end{align}
Agent $j$ has chosen the deterministic positive constants $\eta_{j,per}$ and $\eta_{j,tem}$ in (\ref{exc price model j}) prior to time $t =0$.
Agent $j$ could be viewed as 
taking into account his own effects on the execution price 
via an Almgren-Chriss reduced-form model (\cite{almgren2001optimal}, \cite{almgren1999value}, \cite{almgren2003optimal}). 
$\eta_{j,per}$ would denote Agent $j$'s estimate for his permanent price impact parameter, 
while he would approximate his temporary price impact parameter with $\eta_{j,tem}$.

\subsection{Each Agent's Optimization Problem}\label{obj fxn subsec}

Agent $j$'s objective is to maximize the following objective function:

\begin{align}\label{orig max prob 0}
E^{P_j} \left[ - \displaystyle\int_{0}^T \theta_{j,t} \, S_{j,\theta_{j},t}^{exc}  \, dt  -  \displaystyle\frac{\kappa_{j}}{2} \displaystyle\int_{0}^{T}  \left( X^{\theta_{j}}_{j,t} \right)^2 \, dt \right].
\end{align}
This frequently used criteria balances 
his realized trading revenue and risks associated with delayed liquidation.  Agent $j$ selects the deterministic risk aversion parameter $\kappa_j > 0$
based upon his appetite. 

Let us denote 
\begin{equation} \label{tau j notn}
\tau_{j} \left( t \right) \triangleq  \displaystyle\sqrt{\displaystyle\frac{ \kappa_{j}}{ \eta_{j,tem}}} \left( T - t \right) , \quad t \in \left[ 0 , T \right].
\end{equation}

\begin{lem}\label{1 pl soln acc to play model}

(\ref{orig max prob 0}) has a unique optimizer  $\theta_{j}^{\star} \in \mathcal{A}_{j}$ almost surely.
When $\omega \in \Omega_j$ is chosen such that $W_{j,\cdot} \left( \omega \right)$ is continuous on $\left[ 0 , T \right]$, 
$X^{\theta^\star_j}_j \left( \omega \right)$ satisfies the linear ODE 
\begin{align}\label{x theta star}
\theta^{\star}_{j,t} \left( \omega \right) &= - \displaystyle\sqrt{\displaystyle\frac{ \kappa_{j}}{ \eta_{j,tem}}}  
\coth \left( \tau_{j} \left( t \right) \right) X^{\theta^\star_j}_{j,t}  \left( \omega \right) 
\notag \\
&\qquad + 
\displaystyle\frac{  \tanh \left( \tau_{j} \left(t \right) /2 \right) \left[ \mu_{j}  + \nu_{j}^2 \left( S_{j,t}^{unf}  \left( \omega \right) - S_{j,0}\right) \right]  }{ \displaystyle\sqrt{ \eta_{j,tem} \kappa_{j}}   \left( 1 + \nu_{j}^2 t \right)} 
, \quad t \in \left( 0, T \right)
  \notag \\
X^{\theta^{\star}_j}_{j,0}  \left( \omega \right) &= x_j .
\end{align}


\end{lem}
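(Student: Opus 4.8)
\emph{Proof proposal.} The plan is to (1) exploit the fact that Agent $j$ can recover $S^{unf}_j$ from $S^{exc}$, together with Gaussian filtering, to rewrite (\ref{orig max prob 0}) as a genuinely linear--quadratic functional of $\theta_j$; (2) get existence and uniqueness of a maximizer in a Hilbert space by strict concavity and coercivity; and (3) identify that maximizer via the first-order condition, which reduces to a linear forward--backward system that is solvable in closed form and gives (\ref{x theta star}). For Step 1, note that since Agent $j$ observes $S^{exc}$ and knows $\theta_j,\eta_{j,per},\eta_{j,tem}$, relation (\ref{exc price model j}) lets him recover $S^{unf}_j$, so $\mathcal F^{unf}_{j,t}=\sigma(S^{unf}_{j,s}:s\le t)$. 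Substituting (\ref{exc price model j}) into (\ref{orig max prob 0}) and using $\int_0^t\theta_{j,s}\,ds=X^{\theta_j}_{j,t}-x_j$ with the terminal constraint (\ref{admiss strat term}), the permanent-impact contribution collapses to the constant $-\eta_{j,per}x_j^2/2$ (because $\int_0^T\theta_{j,t}(X^{\theta_j}_{j,t}-x_j)\,dt=x_j^2/2$), while $-\int_0^T\theta_{j,t}\cdot\tfrac12\eta_{j,tem}\theta_{j,t}\,dt=-\tfrac{\eta_{j,tem}}{2}\int_0^T\theta_{j,t}^2\,dt$. For the remaining term $-\int_0^T\theta_{j,t}S^{unf}_{j,t}\,dt$ I would integrate by parts pathwise (no covariation, since $X^{\theta_j}_j$ is $C^1$ in time) and then pass to the innovations Brownian motion $\bar W_j$ via $dS^{unf}_{j,t}=\hat\beta_{j,t}\,dt+d\bar W_{j,t}$, where classical linear-Gaussian filtering (a constant Gaussian drift observed in Brownian noise) gives $\hat\beta_{j,t}=E^{P_j}[\beta_j\mid\mathcal F^{unf}_{j,t}]=\frac{\mu_j+\nu_j^2(S^{unf}_{j,t}-S_{j,0})}{1+\nu_j^2 t}$ with $d\hat\beta_{j,t}=\frac{\nu_j^2}{1+\nu_j^2 t}\,d\bar W_{j,t}$, and $\bar W_j$ is an $\mathcal F^{unf}_j$-Brownian motion. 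Since (\ref{admiss strat int}) and Cauchy--Schwarz give $E^{P_j}[\int_0^T(X^{\theta_j}_{j,t})^2\,dt]<\infty$, the $\bar W_j$-integral has zero mean, so maximizing (\ref{orig max prob 0}) over $\mathcal A_j$ is equivalent to maximizing
\[
J(\theta_j):=E^{P_j}\!\left[\int_0^T\left(X^{\theta_j}_{j,t}\,\hat\beta_{j,t}-\tfrac{\eta_{j,tem}}{2}\theta_{j,t}^2-\tfrac{\kappa_j}{2}\big(X^{\theta_j}_{j,t}\big)^2\right)dt\right]
\]
over $\mathcal A_j$; when $\nu_j^2=0$ this is the degenerate case $\hat\beta_{j,t}\equiv\mu_j$.

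For Step 2, let $\widetilde{\mathcal A}_j$ be the nonempty closed convex set of $\mathcal F^{unf}_j$-progressive square-integrable rates satisfying (\ref{admiss strat term}), dropping path-continuity. Since $\theta_j\mapsto X^{\theta_j}_j$ is affine, $J$ is strictly concave because of the term $-\tfrac{\eta_{j,tem}}{2}E^{P_j}\!\int_0^T\theta_{j,t}^2\,dt$; and conditional Jensen gives $E^{P_j}[\hat\beta_{j,t}^2]\le\mu_j^2+\nu_j^2$, so a Cauchy--Schwarz bound yields $J(\theta_j)\le C+C\|\theta_j\|-\tfrac{\eta_{j,tem}}{2}\|\theta_j\|^2$, i.e.\ $J$ is coercive (with $\|\cdot\|$ the $L^2(\Omega\times[0,T])$ norm). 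A strictly concave, coercive, weakly upper semicontinuous functional on a nonempty closed convex subset of a Hilbert space has a unique maximizer $\theta_j^\star\in\widetilde{\mathcal A}_j$. Once Step 3 exhibits this $\theta_j^\star$ with continuous paths, $\theta_j^\star\in\mathcal A_j\subseteq\widetilde{\mathcal A}_j$, so it is a fortiori the unique maximizer over $\mathcal A_j$.

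For Step 3, the first-order condition for $J$ along admissible perturbations $\varepsilon$ ($\mathcal F^{unf}_j$-progressive, square-integrable, with $\int_0^T\varepsilon_t\,dt=0$ $P_j$-a.s.), after a Fubini/integration-by-parts rearrangement, says that $t\mapsto\eta_{j,tem}\theta^\star_{j,t}+\int_0^t(\hat\beta_{j,s}-\kappa_j X^{\theta^\star_j}_{j,s})\,ds$ is an $\mathcal F^{unf}_j$-martingale; equivalently $(X^{\theta^\star_j}_j,\theta^\star_j)$ solves the linear system $dX^{\theta^\star_j}_{j,t}=\theta^\star_{j,t}\,dt$, $X^{\theta^\star_j}_{j,0}=x_j$, $X^{\theta^\star_j}_{j,T}=0$, and $\eta_{j,tem}\,d\theta^\star_{j,t}=(\kappa_j X^{\theta^\star_j}_{j,t}-\hat\beta_{j,t})\,dt+dM_t$ for a martingale $M$. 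Under the strict concavity of Step 2 this condition is also sufficient, so it suffices to produce one admissible solution. I would decouple the system with the ansatz $\eta_{j,tem}\theta^\star_{j,t}=-2a_tX^{\theta^\star_j}_{j,t}-b_t$: then $a$ is deterministic and solves the Riccati equation $\dot a_t=\tfrac{2a_t^2}{\eta_{j,tem}}-\tfrac{\kappa_j}{2}$ with the singular terminal behavior $a_t\to+\infty$ as $t\uparrow T$ (which is exactly what forces $X^{\theta^\star_j}_{j,T}=0$), so $a_t=\tfrac12\sqrt{\eta_{j,tem}\kappa_j}\coth(\tau_j(t))$; and $b$ solves the linear backward equation $db_t=\big(\tfrac{2a_t}{\eta_{j,tem}}b_t+\hat\beta_{j,t}\big)\,dt+Z_t\,d\bar W_{j,t}$, whose explicit solution (using $d\hat\beta_{j,t}=\tfrac{\nu_j^2}{1+\nu_j^2 t}\,d\bar W_{j,t}$) is $b_t=-\sqrt{\eta_{j,tem}/\kappa_j}\,\tanh(\tau_j(t)/2)\,\hat\beta_{j,t}$. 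Substituting $a$, $b$, and $\hat\beta_{j,t}=\frac{\mu_j+\nu_j^2(S^{unf}_{j,t}-S_{j,0})}{1+\nu_j^2 t}$ produces exactly the feedback ODE (\ref{x theta star}). Equivalently, and more self-contained, one can verify the completion-of-squares identity $J(\theta_j)=-\tfrac{\eta_{j,tem}}{2}E^{P_j}\!\int_0^T\big(\theta_{j,t}+\tfrac{2a_t}{\eta_{j,tem}}X^{\theta_j}_{j,t}+\tfrac{b_t}{\eta_{j,tem}}\big)^2\,dt+c_j$ for all $\theta_j\in\mathcal A_j$ (the cross martingale term vanishing in expectation as in Step 1), whence the optimizer is characterized precisely by (\ref{x theta star}).

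The main obstacle is the singular terminal condition: the coefficient $\coth(\tau_j(t))$ in (\ref{x theta star}) blows up as $t\uparrow T$, and one must show that the linear ODE (\ref{x theta star}) nonetheless has a solution continuous on all of $[0,T]$ with $X^{\theta^\star_j}_{j,T}=0$, that $\theta^\star_j$ is continuous and square-integrable near $T$ (a Hardy-type estimate, since the cross term $\coth(\tau_j(t))X^{\theta_j}_{j,t}$ is only in $L^2$, not bounded, near $T$), and that the completion-of-squares identity remains legitimate there. Justifying the innovations representation and the closed form of $\hat\beta_j$ used in Step 1 is the other, more routine, technical ingredient, and the certain case $\nu_j^2=0$ is a direct specialization.
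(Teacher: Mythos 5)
Your Step 1 is essentially the paper's Step \ref{step 4 orig}: recover $S^{unf}_j$ from the observed execution price, collapse the permanent-impact term to a constant via the terminal constraint, and pass to the innovations representation with $\hat\beta_{j,t}=E^{P_j}[\beta_j\mid\mathcal F^{unf}_{j,t}]$ as in (\ref{cond beta raw filt form}); your reduced functional $J$ is exactly the paper's (\ref{orig max prob AUX 2}). Where you genuinely diverge is in Steps 2--3. The paper completes the square once more to reach the tracking form (\ref{orig max prob AUX FIN}) and then invokes Theorem~3.2 of \cite{bank+son+voss} wholesale, obtaining existence, uniqueness, the feedback representation (\ref{x theta star SON VOSS}) with the kernel $K_j$, and --- crucially --- the terminal behavior of $\theta^\star_j$ as $t\uparrow T$ from that paper's displays (28)--(29). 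You instead prove existence and uniqueness directly (strict concavity plus coercivity on the closed affine constraint set in $L^2(\Omega\times[0,T])$) and identify the optimizer via a Riccati ansatz $\eta_{j,tem}\theta^\star_{j,t}=-2a_tX^{\theta^\star_j}_{j,t}-b_t$; your computations for $a_t=\tfrac12\sqrt{\eta_{j,tem}\kappa_j}\coth(\tau_j(t))$ and $b_t=-\sqrt{\eta_{j,tem}/\kappa_j}\tanh(\tau_j(t)/2)\hat\beta_{j,t}$ check out and reproduce (\ref{x theta star}). Your route is more self-contained and makes the structure of the optimizer transparent, at the cost of having to carry out yourself the one genuinely delicate piece of analysis: showing that the feedback ODE with the singular coefficient $\coth(\tau_j(t))$ admits a solution with $X^{\theta^\star_j}_{j,T}=0$, that $\theta^\star_j$ stays continuous and square-integrable up to $T$, and that the completion-of-squares identity survives the singularity. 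You correctly name this obstacle but do not execute it; the paper buys exactly this step by citation, so to make your argument a complete replacement you would need to supply the Hardy-type estimate near $T$ (or, as the paper does in Step \ref{step 7 orig}, note that a singular-point analysis in the spirit of Section \ref{true price dyn section} would also close it).
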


\begin{proof}
See Appendix \ref{1 pl soln acc to play model PRF SS}.
\end{proof}

\begin{rem}\label{terms in agents strat rem}

The first term in (\ref{x theta star}) arises from our constraint that 
Agent $j$ must liquidate by the terminal time (see (\ref{admiss strat term})).
In fact, the weighting factor 
\begin{equation*}
- \displaystyle\sqrt{\displaystyle\frac{ \kappa_{j}}{ \eta_{j,tem}}}  
\coth \left( \tau_{j} \left( t \right) \right) 
\end{equation*}
tends to $- \infty$ as $t \uparrow T$. 
Intuitively, the reason that
Agent $j$ believes that 
 $X^{\theta^\star_j}_{j,t}$ and $\theta_{j,t}^{\star}$
remain finite as $t \uparrow T$ is that $X^{\theta^\star_j}_{j,t}$ tends very rapidly to zero.

Agent $j$ thinks that he learns about $\beta_j$'s realized value over time, which is captured by the second term 
in (\ref{x theta star}) 
since 
\begin{equation}\label{cond beta raw filt form}
E^{P_j} \left[ \beta_j \big| \mathcal{F}_{j,t}^{unf} \right] = \displaystyle\frac{ \mu_{j}  + \nu_{j}^2 \left( S_{j,t}^{unf} - S_{j,0}\right) }{1 + \nu_{j}^2 t} \quad P_j-\text{a.s.}
\end{equation}
(\cite{lipt+shiry+i}). 
The factor 
\begin{equation*}
\displaystyle\frac{  \tanh \left( \tau_{j} \left(t \right) /2 \right)}{ \displaystyle\sqrt{ \eta_{j,tem} \kappa_{j}}}
\end{equation*}
is bounded by $1 / \displaystyle\sqrt{ \eta_{j,tem} \kappa_{j}}$ and tends to zero as $t \uparrow T$.

The second term may either dampen or amplify the effects of the first. 
Agent $j$ believes that the weighting factors 
reflect that his need to liquidate must 
eventually overwhelm his desire to profit by trading in the direction 
of the risky asset's drift.

\end{rem}

An immediate observation from Lemma \ref{1 pl soln acc to play model} is the following observation for certain agents, which we record as a corollary for ease of referencing.

\begin{cor}\label{cert agents opt strats}

If $\nu_j^2 = 0$, then $X^{\theta^\star_j}_j$ 
does not depend on $S^{unf}_j$. In particular, it is 
deterministic and satisfies the linear ODE 
\begin{align}\label{x theta star DET}
\theta^{\star}_{j,t}  &= - \displaystyle\sqrt{\displaystyle\frac{ \kappa_{j}}{ \eta_{j,tem}}}  
\coth \left( \tau_{j} \left( t \right) \right) X^{\theta^\star_j}_{j,t} 
 + 
\displaystyle\frac{ \mu_{j}     \tanh \left( \tau_{j} \left(t \right) /2 \right)   }{ \displaystyle\sqrt{ \eta_{j,tem} \kappa_{j}}   } 
, \quad t \in \left( 0, T \right)
  \notag \\
X^{\theta^{\star}_j}_{j,0}  &= x_j .
\end{align}

\end{cor}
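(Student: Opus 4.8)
The plan is to specialize the conclusion of Lemma \ref{1 pl soln acc to play model} to the case $\nu_j^2 = 0$; no genuinely new work is required. First I would record that when $\nu_j^2 = 0$ the Gaussian prior on $\beta_j$ degenerates to the point mass at $\mu_j$, so $\beta_j = \mu_j$ $P_j$-a.s.\ and $S_{j,t}^{unf} = S_{j,0} + \mu_j t + W_{j,t}$. Substituting $\nu_j^2 = 0$ directly into the drift term of (\ref{x theta star}), the numerator $\mu_j + \nu_j^2\left(S_{j,t}^{unf} - S_{j,0}\right)$ collapses to $\mu_j$ and the denominator $1 + \nu_j^2 t$ collapses to $1$, while the first (liquidation) term is untouched. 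Hence, for $P_j$-a.e.\ $\omega$ — namely those for which $W_{j,\cdot}(\omega)$ is continuous on $[0,T]$, as in Lemma \ref{1 pl soln acc to play model} — the path $X^{\theta^\star_j}_j(\omega)$ solves the initial value problem (\ref{x theta star DET}).

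Next I would argue that (\ref{x theta star DET}) forces $X^{\theta^\star_j}_j$ to be deterministic and independent of $S^{unf}_j$. The point is that the right-hand side of (\ref{x theta star DET}) no longer involves $\omega$ at all: it is a linear inhomogeneous ODE in the unknown $X$, with coefficient $-\sqrt{\kappa_j/\eta_{j,tem}}\coth(\tau_j(t))$ and forcing term $\mu_j\tanh(\tau_j(t)/2)/\sqrt{\eta_{j,tem}\kappa_j}$, both continuous on $(0,T)$, together with the deterministic initial datum $x_j$. On any subinterval $[0,T-\varepsilon]$ these coefficients are bounded and continuous, so the IVP has a unique solution there by standard linear ODE theory; letting $\varepsilon \downarrow 0$ and using that $X^{\theta^\star_j}_j$ is continuous on $[0,T]$ (which follows from $\theta_j^\star \in \mathcal{A}_j$ in Lemma \ref{1 pl soln acc to play model}), the solution is unique on $[0,T)$ and extends continuously to $T$. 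Therefore $X^{\theta^\star_j}_j(\omega)$ equals one and the same function of $t$ for $P_j$-a.e.\ $\omega$; in particular it is deterministic and manifestly has no dependence on $S^{unf}_j$.

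The only mildly delicate point — and the one I would be most careful to phrase correctly — is the singularity of $\coth(\tau_j(t))$ at $t = T$, where $\tau_j(T) = 0$, which precludes a naive global application of Picard--Lindel\"of on all of $[0,T]$. I would dispose of it exactly as above: work on the regular intervals $[0,T-\varepsilon]$, where uniqueness is classical, and pass to the limit, leaning on the fact that Lemma \ref{1 pl soln acc to play model} already supplies a solution that is continuous up to $T$; uniqueness on the open interval together with this continuity pins the solution down uniquely. Beyond this, the corollary is pure substitution, so I do not anticipate any further obstacle.
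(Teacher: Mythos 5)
Your proposal is correct and follows exactly the route the paper intends: the corollary is presented there as an immediate consequence of Lemma \ref{1 pl soln acc to play model}, obtained by setting $\nu_j^2=0$ in (\ref{x theta star}) so that the stochastic drift term collapses to the deterministic forcing $\mu_j\tanh(\tau_j(t)/2)/\sqrt{\eta_{j,tem}\kappa_j}$. Your additional care with uniqueness near the $\coth(\tau_j(t))$ singularity at $t=T$ (working on $[0,T-\varepsilon]$ and using continuity up to $T$ from the lemma) is a sound way to make the ``immediate'' claim fully rigorous.
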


\subsection{Execution Price}\label{exc price sect FIN}

We now specify
how $S^{exc}$ actually evolves. 
While each agent observes the 
same realized path of this process, in general,
no agent knows the correct dynamics.\footnote{
There is a single trivial case where
this is not true. 
If $N = 1$, $\tilde{\beta} = \beta$, $\nu^2_1 = 0$, 
$\tilde{\eta}_{1,tem} = \eta_{1,tem}$, and $\tilde{\eta}_{1,per} = \eta_{1,per}$, 
our lone agent's model would be exactly right.
} 
An agent's trading decisions 
are entirely determined by 
his beliefs, preferences,
and observations of a single realized path of $S^{exc}$. 

Let
$\left( \tilde{\Omega}, \tilde{\mathcal{F}} , \left\{ \tilde{\mathcal{F}}_t  \right\}_{0 \leq t \leq T} , \tilde{P} \right)$
be a filtered probability space satisfying the usual conditions.
The space is equipped with an $\tilde{\mathcal{F}}_t$-Wiener process under $\tilde{P}$, which we denote by $\tilde{W}$.
We also have the following deterministic real constants:
\begin{equation*}
\tilde{\beta}, \quad S_0, \quad \tilde{\eta}_{1,per}, \dots, \tilde{\eta}_{N,per}, 
\quad \text{and} \quad 
\tilde{\eta}_{1,tem}, \dots, \tilde{\eta}_{N,tem} .
\end{equation*}
$\tilde{\beta}$ can be arbitrary; however, the remaining constants are strictly positive. 

The true execution price $S^{exc}$ under $\tilde{P}$
 is the $\tilde{\mathcal{F}}_t$-adapted process 
\begin{equation}\label{rt true dynamics}
S_t^{exc} = S_{0} +  \tilde{\beta}  t + \displaystyle\sum_{i = 1}^N \tilde{\eta}_{i,per}\left( X^{\theta_{i}^\star}_{i,t} -  x_{i} \right) + \frac{1}{2} \displaystyle\sum_{i = 1}^N \tilde{\eta}_{i,tem} \theta_{i,t}^{\star} + \tilde{W}_t, \qquad  t \in \left[ 0, T \right].
\end{equation}

Equation \eqref{rt true dynamics} can be viewed as a multi-agent extension of the Almgren-Chriss model (\cite{almgren2001optimal}, \cite{almgren1999value}, \cite{almgren2003optimal}).  
Models of this form, particularly when the $\tilde{\eta}_{j,tem}$'s ($\tilde{\eta}_{j,per}$'s) are all identical, 
have been applied in the context of predatory trading (\cite{carlin2007episodic}). On the other hand, \cite{Cardaliaguet2018, 2018arXiv180304094C} consider a mean-field game model where the interactions between the players are through the price as it is here. 
Although both of these papers address latency and learning in their models misspecification of agents's models is an important element in our framework. Moreover, our agents do not observe each other or know each other's parameters. In fact, in our finite player set-up they do not even know the number of players $N$.

In what follows we will say that a {\it mini-flash crash} occurs, if the $S_t^{exc} $
tends to either $+\infty$ or $-\infty$ on our time horizon.

The parameters $\tilde{\eta}_{j,per}$ and $\tilde{\eta}_{j,tem}$ are the {\it correct} 
values of Agent $j$'s permanent and temporary price impact parameters, respectively. 
We allow these quantities to have arbitrary relationships to Agent $j$'s corresponding {\it estimates}  
$\eta_{j,per}$ and $\eta_{j,tem}$. 
For instance, Agent $j$ might underestimate his permanent impact ($\eta_{j,per} < \tilde{\eta}_{j,per}$)
but perfectly estimate his temporary impact ($\eta_{j,tem} = \tilde{\eta}_{j,tem}$). 
Similarly, Agent $j$'s prior $\beta_j$ for the {\it correct} drift $\tilde{\beta}$ may be accurate or severely mistaken.
Comparing our descriptions of $S_{j,\theta_j}^{exc}$ in (\ref{exc price model j}) and
$S^{exc}$ in (\ref{rt true dynamics}), we see that Agent $j$ proxies each term in (\ref{rt true dynamics}) as follows:
\begin{align*}
\eta_{j,per} \left( X^{\theta_{j}^\star}_{j,t} -  x_{j} \right) 
\quad
&\longleftrightarrow 
\quad
\tilde{\eta}_{j,per} \left( X^{\theta_{j}^\star}_{j,t} -  x_{j} \right)
\\
\frac{1}{2} \eta_{j,tem} \theta_{j,t}^{\star}    
\quad
&\longleftrightarrow 
\quad
 \frac{1}{2} \tilde{\eta}_{j,tem} \theta_{j,t}^{\star}  
\\
S_{j,0} + \beta_j t +  W_{j,t}
\quad
&\longleftrightarrow 
\quad
 S_{0} +  \tilde{\beta}  t + \displaystyle\sum_{i \not = j}   \tilde{\eta}_{i,per}\left( X^{\theta_{i}^\star}_{i,t} -  x_{i} \right) 
+ \frac{1}{2} \displaystyle\sum_{i \not = j} \tilde{\eta}_{i,tem} \theta_{i,t}^{\star} + \tilde{W}_t  . 
\end{align*}

\section{Analysis of the Dynamical System}\label{true price dyn section}

When our agents implement the strategies that they believe are optimal (see Lemma \ref{1 pl soln acc to play model})
but $S^{exc}$ has the dynamics in (\ref{rt true dynamics}), what happens? 
The goal of Section \ref{true price dyn section} is to offer some general answers to this question.

To simplify our presentation, we begin by introducing and analyzing additional notation
(see Definition \ref{Phi j defn} and Lemma \ref{Phi j props lem}). 
We find that our agents' inventories
and trading rates evolve
according 
 to a particular ODE system with stochastic coefficients (see Lemma \ref{act 1st order ODE lem}). 
Under certain conditions, the system can have a singular point 
(see Lemma \ref{ODE HOM sing pt mult lem}). 
For convenience, we study what unfolds 
when this singular point is of the first kind (see Proposition \ref{opt trad speeds uncer}). 
We also examine the case in which there is no singular point (Proposition \ref{det A no root no blow up}).

We will have an even mix of deterministic 
and stochastic maps.
In what follows, we always explicitly indicate
$\omega$-dependence  
to distinguish between the two.
Our equations are solved
pathwise, so we do not encounter probabilistic 
concerns.  We will fix $\omega \in \tilde{\Omega}$ 
such that $\tilde{W}_{\cdot} \left( \omega \right)$ has a continuous path.

\begin{defn}\label{Phi j defn}
Define the maps
\begin{align*}
\begin{array}{lll}
\Phi_i : \left[ 0 , T \right] &\longrightarrow &\mathbb{R} \\
A : \left[ 0 , T \right] &\longrightarrow  &M_K \left( \mathbb{R} \right) \\
B : \left[ 0 , T \right) &\longrightarrow  &M_K \left( \mathbb{R} \right) \\
C \left( \cdot, \omega \right) : \left[ 0 , T \right] &\longrightarrow  &\mathbb{R}^K \\
\end{array}
\end{align*}
by 
\begin{align*}
\Phi_i \left( t \right) &\triangleq  \displaystyle\frac{  \tanh \left( \tau_{i} \left(t \right)  / 2 \right)  \nu_{i}^2    }{ \sqrt{ \eta_{i,tem}  \kappa_{i}}   \left( 1 + \nu_{i}^2 t \right)} \\
A_{ik} \left( t  \right) &\triangleq  
\left\{
\begin{array}{cc}
 1 -   \displaystyle\frac{1}{2} \left( \tilde{\eta}_{i,tem} - \eta_{i,tem}\right) \Phi_i \left( t \right)  
 & \quad \text{if } i = k \\
 -  \displaystyle\frac{1}{2}   \tilde{\eta}_{k,tem} \Phi_i \left( t \right) 
  & \quad \text{if } i \not =  k \\
\end{array}
\right.\\
B_{ik} \left( t  \right) &\triangleq  
\left\{
\begin{array}{cc}
\left( \tilde{\eta}_{i,per}  - \eta_{i,per} \right)  \Phi_i \left( t \right)  - \displaystyle\sqrt{\displaystyle\frac{ \kappa_{i}}{ \eta_{i,tem}}}   \coth \left( \tau_{i} \left( t \right) \right) 
 & \quad \text{if } i = k \\
\tilde{\eta}_{k,per}  \Phi_i \left( t \right)   
  & \quad \text{if } i \not =  k \\
\end{array}
\right.\\
C_{i} \left( t , \omega \right) &\triangleq  
 \Phi_i \left( t \right) \left[ \displaystyle\frac{ \mu_{i}}{\nu_{i}^2} +  \left( S_{0} - S_{i,0} \right)+  \tilde{\beta}  t 
 - \displaystyle\sum_{ \substack{ k \leq K \\ k \not = i }  } \tilde{\eta}_{k,per}   x_{k}  - x_{i} \left( \tilde{\eta}_{i,per}  - \eta_{i,per} \right)  \right. \\
 &\qquad \qquad \qquad \left.
  + \displaystyle\sum_{ k > K } \tilde{\eta}_{k ,per}\left( X^{\theta_{k}^\star}_{k,t}  -  x_{k} \right) 
      + \frac{1}{2} \displaystyle\sum_{  k > K  }  \tilde{\eta}_{k,tem} \theta_{k,t}^{\star}  + \tilde{W}_t \left( \omega \right) 
     \right] . 
\end{align*}
Here, $i \in \left\{ 1, \dots, K \right\}$ are the uncertain agents, whose behavior we are set out to characterize. The behavior of the certain agents are already described. They are not influenced by the execution price but they do have an influence on it.

\end{defn}

Observe that we can now write the dynamics in 
(\ref{x theta star}) as 
\begin{align*}
\theta^{\star}_{j,t} \left( \omega \right) &= - \displaystyle\sqrt{\displaystyle\frac{ \kappa_{j}}{ \eta_{j,tem}}}  
  \coth \left( \tau_{j} \left( t \right) \right) X^{\theta^\star_j}_{j,t}  \left( \omega \right) 
   + \Phi_j \left( t \right) \left[ \displaystyle\frac{ \mu_{j}}{\nu_{j}^2}  + \left( S_{j,t}^{unf}  \left( \omega \right) - S_{j,0}\right) \right] 
\end{align*}
when Agent $j$ is uncertain.

\begin{defn}\label{te defn FIN 1}
When $\det A$ has a root on $\left[ 0, T \right]$, we let $t_e$ denote the smallest one (see Lemma \ref{Phi j props lem}). 
\end{defn}

\begin{lem}\label{act 1st order ODE lem}

Suppose that $\det A$ has a root on $\left[ 0, T \right]$. 
If $t_e > 0$, then $S^{exc} \left( \omega \right)$, the $X^{\theta_{j}^\star}_{j} \left( \omega \right)$'s and 
the $\theta^{\star}_{j} \left( \omega \right)$'s are all uniquely defined and continuous on $\left[ 0, t_e \right)$. 
Moreover, 
the uncertain agents' strategies are characterized by
\begin{align}\label{DG syst lin ODE}
A \left(t \right)  \theta_{t}^{u,\star} \left( \omega \right) &= 
B \left( t \right) X^{u,\theta^\star}_{t} \left(\omega \right)  + C \left( t, \omega \right) , \qquad t \in \left( 0, t_e \right) \notag \\
X^{u,\theta^\star}_{0} \left(\omega \right) &= x^u,
\end{align}
where 
$\theta_{t}^{u,\star} \left( \omega \right)$ denotes the first $K$-entries of $\theta_{t}^{\star} \left( \omega \right)$.
When $\det A$ does not have a root on $\left[ 0, T \right]$, the same statements hold after replacing $t_e$ with $T$. 

\end{lem}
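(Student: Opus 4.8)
The plan is to assemble the claimed ODE system directly from the pathwise description of each uncertain agent's optimal strategy, and then invoke standard linear ODE theory on the interval where $A$ is invertible. First I would fix $\omega \in \tilde\Omega$ with $\tilde W_\cdot(\omega)$ continuous, as stipulated before Definition \ref{Phi j defn}. Since the certain agents' inventories $X^{\theta^\star_k}_k$ ($k>K$) and rates $\theta^\star_{k}$ are deterministic, continuous on $[0,T]$, and already given explicitly by Corollary \ref{cert agents opt strats}, they may be treated as known continuous inputs; this is exactly why they appear only inside $C(\cdot,\omega)$. For each uncertain agent $i \le K$, rewrite \eqref{x theta star} using the identity displayed just after Definition \ref{Phi j defn}, namely
\begin{align*}
\theta^{\star}_{i,t}(\omega) = - \sqrt{\tfrac{\kappa_i}{\eta_{i,tem}}}\,\coth(\tau_i(t))\,X^{\theta^\star_i}_{i,t}(\omega) + \Phi_i(t)\Bigl[\tfrac{\mu_i}{\nu_i^2} + \bigl(S^{unf}_{i,t}(\omega) - S_{i,0}\bigr)\Bigr].
\end{align*}

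Next I would substitute the relation between $S^{unf}_{i}$ and the true execution price. From \eqref{exc price model j} evaluated along Agent $i$'s own optimal strategy, $S^{unf}_{i,t} = S^{exc}_{i,\theta^\star_i,t} - \eta_{i,per}(X^{\theta^\star_i}_{i,t}-x_i) - \tfrac12\eta_{i,tem}\theta^\star_{i,t}$, and Agent $i$ observes the true price, so $S^{exc}_{i,\theta^\star_i,t} = S^{exc}_t$, which is given by \eqref{rt true dynamics}. Plugging \eqref{rt true dynamics} in and separating the $i$-th summands in the two sums from the rest, the bracket $[\,\cdot\,]$ above becomes precisely $\tfrac{1}{\Phi_i(t)}$ times the expression that, after collecting the $X^{\theta^\star_k}_{k,\cdot}(\omega)$ and $\theta^\star_{k,\cdot}(\omega)$ terms for $k\le K$ on one side, yields the $i$-th row of $A(t)\theta^{u,\star}_t(\omega) = B(t)X^{u,\theta^\star}_t(\omega) + C(t,\omega)$. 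Carrying this out for every $i \le K$ gives the matrix system \eqref{DG syst lin ODE}; the initial condition $X^{u,\theta^\star}_0 = x^u$ is immediate from \eqref{x theta star}. This is essentially bookkeeping, but it should be written carefully so the reader sees where each entry of $A$, $B$, $C$ comes from — in particular the diagonal correction $-\tfrac12(\tilde\eta_{i,tem}-\eta_{i,tem})\Phi_i$ in $A_{ii}$ and the term $(\tilde\eta_{i,per}-\eta_{i,per})\Phi_i$ in $B_{ii}$ arise from the mismatch between estimated and true impact of Agent $i$'s own trades.

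It remains to argue existence, uniqueness, and continuity on the stated interval. By Lemma \ref{Phi j props lem} (the properties of $\Phi_i$, $A$, $B$, $C$) the maps $A$, $B$, $C(\cdot,\omega)$ are continuous, and $A(0) = I$ is invertible; when $\det A$ has a root, $t_e > 0$ is its smallest one by Definition \ref{te defn FIN 1}, so $A(t)$ is invertible on $[0,t_e)$ (and on $[0,T]$ when there is no root, replacing $t_e$ by $T$). On that interval I would multiply through by $A(t)^{-1}$ to get $\theta^{u,\star}_t(\omega) = A(t)^{-1}B(t)X^{u,\theta^\star}_t(\omega) + A(t)^{-1}C(t,\omega)$; since $X^{u,\theta^\star}$ has derivative $\theta^{u,\star}$, this is a linear ODE for $X^{u,\theta^\star}$ with continuous coefficients on every compact subinterval $[0,T']\subset[0,t_e)$, hence by the Picard–Lindelöf theorem it has a unique $C^1$ solution there; letting $T'\uparrow t_e$ patches these into a unique continuous solution on $[0,t_e)$, and then $\theta^{u,\star}$, the certain agents' (already continuous) quantities, and \eqref{rt true dynamics} determine $S^{exc}(\omega)$ and all remaining objects uniquely and continuously on $[0,t_e)$. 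The one subtlety to flag is that the ODE is genuinely only defined where $A$ is invertible, so one must not claim more than continuity up to but not including $t_e$; the main obstacle in the writeup is therefore less a deep difficulty than the care needed in the algebraic reduction of step two, making sure the $i$-th row of the derived system matches Definition \ref{Phi j defn} term by term, including the split between the $k\le K$ and $k>K$ contributions.
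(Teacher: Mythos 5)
Your proposal follows essentially the same route as the paper's proof: substitute the true execution price \eqref{rt true dynamics} into the agent's observation identity $S^{exc}_t = S^{exc}_{j,\theta_j^\star,t}$ from \eqref{exc price model j}, rearrange \eqref{x theta star} row by row into the system \eqref{DG syst lin ODE}, and then apply standard existence and uniqueness theory for linear ODEs with continuous coefficients on the interval where $A$ is invertible (the paper cites Barbu rather than Picard--Lindel\"of, but that is immaterial). One small slip: $A(0)$ is not $I_K$ in general (Lemma \ref{Phi j props lem} gives $A(T)=I_K$, and $\Phi_i(0)>0$), but this does not affect your argument since invertibility of $A$ on $[0,t_e)$ already follows from $t_e$ being the smallest root of $\det A$ together with the hypothesis $t_e>0$.
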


\begin{proof}
See Appendix \ref{act 1st order ODE lem PRF SS}.
\end{proof}

Lemma \ref{act 1st order ODE lem} does not address 
the behavior of our uncertain agents' inventories
and trading rates as $t \uparrow t_e$ or $t \uparrow T$. 
The difficulties are that $A$ is non-invertible at $t_e$, 
while $B$'s entries explode at $T$ (see Lemma \ref{Phi j props lem}).

The approach for resolving these issues is well-established (see Chapter 6 of \cite{codd+carl+ode}). 
We sketch the key points when $\det A$ has a root on $\left[ 0, T \right]$ and $t_e > 0$.
Analyzing 
the effects of $B$'s explosion at $T$ is similar (see Proposition \ref{det A no root no blow up}).

We begin by considering the homogeneous equation corresponding to (\ref{DG syst lin ODE}):
\begin{align}\label{DG syst lin ODE HOM}
A \left(t \right)  \dot{X}_t^u \left( \omega \right) &= 
B \left( t \right) X_t^u \left(\omega \right)   , \qquad t \in \left( 0, t_e \right) \notag \\
X_0^u \left(\omega \right) &= x^u. 
\end{align}
We change notation to emphasize that (\ref{DG syst lin ODE HOM}) no longer 
describes the uncertain agents' optimal strategies. 
We next write (\ref{DG syst lin ODE HOM}) in a more convenient form.

\begin{lem}\label{ODE HOM sing pt mult lem}

Suppose that $\det A$ has a root on $\left[ 0, T \right]$ and $t_e > 0$.
Near $t_e$, the solution of (\ref{DG syst lin ODE HOM})
satisfies 
\begin{align}\label{DG syst lin ODE HOM REG}
\left(t - t_e \right)^{m+1} \dot{X}_{t}^{u} \left( \omega \right) &= 
D \left( t \right) X_{t}^{u}\left(\omega \right)   .
\end{align}
In (\ref{DG syst lin ODE HOM REG}), $m$ is a nonnegative integer such that the multiplicity of the zero of 
$\det A$ at $t_e$ is $\left( m + 1 \right)$.
$D$ is a particular analytic map for which $D \left( t_e \right)$ has rank 0 or 1 (see (\ref{D defn FIN})).

\end{lem}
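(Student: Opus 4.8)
The plan is to rewrite the homogeneous system \eqref{DG syst lin ODE HOM} in the standard form for a linear ODE near a singular point, following Chapter 6 of \cite{codd+carl+ode}. First I would record the structure of $\det A$ near $t_e$: since $A$ is analytic on $[0,T]$ (each $\Phi_i$ is analytic on $[0,T)$ and, by Lemma \ref{Phi j props lem}, $\tanh(\tau_i(t)/2)$ extends analytically through $T$), $\det A$ is analytic, so its zero at $t_e$ has some finite multiplicity $m+1$ with $m \ge 0$ a nonnegative integer; this is exactly the definition of $m$ in the statement. Away from $t_e$ but near it, $A(t)$ is invertible, so \eqref{DG syst lin ODE HOM} is equivalent to $\dot X_t^u = A(t)^{-1} B(t) X_t^u$. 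The key algebraic fact is Cramer's rule: $A(t)^{-1} = \mathrm{adj}(A(t)) / \det A(t)$, where the adjugate $\mathrm{adj}(A(t))$ is a matrix of $(K-1)\times(K-1)$ minors of $A$, hence analytic on all of $[0,T]$.

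Next I would factor out the zero. Write $\det A(t) = (t - t_e)^{m+1} g(t)$ with $g$ analytic and $g(t_e) \neq 0$, so $1/g$ is analytic in a neighborhood of $t_e$. Then $(t-t_e)^{m+1} \dot X_t^u = \big[ \mathrm{adj}(A(t)) B(t) / g(t) \big] X_t^u$ near $t_e$, and I would \emph{define}
\begin{equation}\label{D defn FIN}
D(t) \triangleq \frac{\mathrm{adj}(A(t))\, B(t)}{g(t)},
\end{equation}
which is analytic near $t_e$ because $\mathrm{adj}(A)$ and $B$ are analytic there (note $B$'s entries only blow up at $T$, and $t_e < T$ since $\det A$ has a root at $t_e$ and one can check $\det A(T) \ne 0$, or simply $t_e \le T$ with the blow-up of $B$ handled separately) and $1/g$ is analytic near $t_e$. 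This gives \eqref{DG syst lin ODE HOM REG} with the claimed analytic $D$.

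The one genuinely substantive point — and the step I expect to be the main obstacle — is the rank claim: $D(t_e)$ has rank $0$ or $1$. The idea is that $\mathrm{adj}(A(t_e)) = \mathrm{adj}(A)$ evaluated at the point where $A$ drops rank, and a standard linear-algebra fact says that if a $K \times K$ matrix $M$ has rank $\le K-1$ then $\mathrm{adj}(M)$ has rank $\le 1$ (indeed rank exactly $1$ if $M$ has rank $K-1$, and rank $0$ if $M$ has rank $\le K-2$). Here is where I would need to be slightly careful: this directly bounds the rank of $\mathrm{adj}(A(t_e))$, hence of $\mathrm{adj}(A(t_e)) B(t_e)/g(t_e) = D(t_e)$, by $1$. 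So the rank dichotomy for $D(t_e)$ follows immediately from $\mathrm{rank}\, A(t_e) \le K-1$, which holds because $\det A(t_e) = 0$. I would state the adjugate rank lemma explicitly (or cite it), verify $\det A$ is analytic and $t_e < T$ so that everything in sight is analytic near $t_e$, and then the factorization plus \eqref{D defn FIN} completes the argument. The routine verifications I would not belabor are the analyticity of each $\Phi_i$ and of $\tanh(\tau_i(\cdot)/2)$ near $T$, and the elementary computation of $\mathrm{adj}$ entries.
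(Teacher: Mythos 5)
Your proposal is correct and follows essentially the same route as the paper: multiply the homogeneous system through by $\operatorname{adj} A(t)$, factor $\det A(t) = (t-t_e)^{m+1} f(t)$ with $f$ analytic and non-vanishing near $t_e$ (the paper notes $f$ is non-vanishing because the zeros of $\det A$ are isolated and $\det A(T)=1$ since $A(T)=I_K$), define $D = \operatorname{adj}(A)\,B/f$, and deduce the rank bound from the standard fact that the adjugate of a singular $K\times K$ matrix has rank $1$ or $0$ according to whether the matrix has rank $K-1$ or less.
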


\begin{proof}
See Appendix \ref{ODE HOM sing pt mult lem PRF SS}.
\end{proof}

\begin{defn}\label{D f m}

Let us denote the unique non-zero eigenvalue of $D \left( t_e \right)$ in the above lemma by $\lambda$.
\end{defn}

\begin{prop}\label{opt trad speeds uncer}

Suppose that $\det A$ has a root on $\left[ 0, T \right]$, $t_e > 0$, 
and $m = 0$. 
If $\lambda \not \in \mathbb{Z}$,
then for some 
small $\rho > 0$, 
\begin{align}\label{gen limit diag R main blow up KZ}
 X^{u,\theta^\star}_{t} \left( \omega \right) &=
 P \left( t \right) 
 \left[
\displaystyle\sum_{j = 1}^{K-1} 
\left( y_j \left( \omega \right) -   \displaystyle\int_{t_e - \rho }^t 
\displaystyle\frac{ F_j \left( s, \omega \right)}
{ \left| s - t_e \right| }\, ds 
\right) 
v_{j} \right. \notag \\
&\qquad \qquad +  \left.
\left| t - t_e \right|^{ \lambda  }   \left( y_K \left( \omega \right)  -
 \displaystyle\int_{t_e - \rho }^t 
\displaystyle\frac{  F_K \left( s, \omega \right)   }
{ \left| s - t_e \right|^{1+ \lambda} }\, ds 
\right)
v_{K}
 \right] 
\end{align}
for $t \in \left( t_e - \rho, t_e \right)$. 
Here,
\begin{itemize}[label=$\bullet$]

\item $\left\{v_1, \dots, v_{K} \right\}$ is an eigenbasis for $D \left( t_e \right)$ ($v_K$ corresponds to $\lambda$); 

\item $P$ is a (non-singular-)matrix-valued analytic function on $\left[ t_e - \rho, t_e \right]$ such that $P \left( t_e \right) = I_K$
 (see (\ref{fund soln homog R nonzer FIN1}));

\item $\left\{ y_1 \left( \omega \right) , \dots , y_K \left( \omega \right) \right\}$ are constants (see (\ref{y F defn eign})); 

\item and $\left\{ F_1 \left( \cdot , \omega \right), \dots , F_K \left( \cdot , \omega \right) \right\}$ are continuous 
real-valued functions on $\left[ t_e - \rho , t_e \right]$ (see (\ref{y F defn eign})).

\end{itemize}
We get $\theta^{u,\star} \left( \omega \right)$ and $S^{exc} \left( \omega \right)$ on $\left( t_e - \rho, t_e \right)$ by differentiating (\ref{gen limit diag R main blow up KZ}) and by substituting 
 $X^{\theta^\star} \left( \omega \right)$ and $\theta^{\star} \left( \omega \right)$ into (\ref{rt true dynamics}), respectively.

\end{prop}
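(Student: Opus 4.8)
The plan is to treat (\ref{DG syst lin ODE}) near $t_e$ as a linear system with a regular singular point, to build a fundamental matrix of its homogeneous part by the classical Frobenius-type construction, and then to recover the full solution by variation of parameters. As in the earlier results, $\omega$ is fixed so that $\tilde W_\cdot(\omega)$ is continuous, and everything below is deterministic. First, since $\Phi_i(T)=0$ we have $A(T)=I_K$, so $t_e<T$; consequently the entries of $A$ and $B$, together with $\det A$ and $\mathrm{adj}\,A$, are real-analytic near $t_e$, and $\det A$ does not vanish on $(t_e-\rho,t_e)$ for any $\rho\in(0,t_e)$ because $t_e$ is its smallest root. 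Multiplying (\ref{DG syst lin ODE}) on the left by $(t-t_e)(\det A(t))^{-1}\mathrm{adj}\,A(t)$ — analytic and nonsingular near $t_e$ because $\det A$ has a simple zero there — and applying Lemma \ref{ODE HOM sing pt mult lem} (with $m=0$) puts the system in the form
\[
(t-t_e)\,\dot X^{u}_{t}(\omega)=D(t)\,X^{u}_{t}(\omega)+E(t,\omega),\qquad t\in(t_e-\rho,t_e),
\]
where $D$ is the analytic map of Lemma \ref{ODE HOM sing pt mult lem} and $E(\cdot,\omega)$ is merely continuous, since it carries the $\tilde W_\cdot(\omega)$ term of $C$. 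By Lemma \ref{act 1st order ODE lem}, $X^{u,\theta^\star}_{t_e-\rho}(\omega)$ is well defined and finite and will serve as the data at $t_e-\rho$.

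Next I would put the homogeneous equation into normal form. Because the nonzero eigenvalue $\lambda$ of $D(t_e)$ exists, $D(t_e)$ has rank exactly $1$; every rank-one matrix $M$ obeys $M^2=(\mathrm{tr}\,M)M$, so $D(t_e)^2=\lambda D(t_e)$ and $D(t_e)$ is diagonalizable with simple eigenvalue $\lambda$ and eigenvalue $0$ of multiplicity $K-1$. Fix an eigenbasis $\{v_1,\dots,v_K\}$ with $D(t_e)v_K=\lambda v_K$. The only nonzero differences of eigenvalues of $D(t_e)$ are $\pm\lambda$, and $\lambda\notin\mathbb Z$ by hypothesis, so the standard theory of regular singular points (\cite{codd+carl+ode}) furnishes, after shrinking $\rho$, a matrix-valued analytic $P$ on $[t_e-\rho,t_e]$ with $P(t_e)=I_K$ — nonsingular on all of $[t_e-\rho,t_e]$ after a further shrink — such that $X^u=PY^u$ transforms the homogeneous system into the constant-coefficient system $(t-t_e)\dot Y^u=D(t_e)Y^u$. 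With $V=[\,v_1\mid\cdots\mid v_K\,]$ and $\Lambda(t)=\mathrm{diag}(1,\dots,1,|t-t_e|^{\lambda})$ (real-valued on $(t_e-\rho,t_e)$ since $\lambda\in\mathbb R$), the columns of $V\Lambda(t)$ are the $K$ independent solutions of the constant system, so $\Psi(t):=P(t)V\Lambda(t)$ is a fundamental matrix of (\ref{DG syst lin ODE HOM REG}) on $(t_e-\rho,t_e)$.

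Then variation of parameters gives, on $(t_e-\rho,t_e)$,
\[
X^{u,\theta^\star}_{t}(\omega)=\Psi(t)\Bigl[\Psi(t_e-\rho)^{-1}X^{u,\theta^\star}_{t_e-\rho}(\omega)+\int_{t_e-\rho}^{t}\Psi(s)^{-1}\frac{E(s,\omega)}{s-t_e}\,ds\Bigr].
\]
Substituting $\Psi=PV\Lambda$, using $s-t_e=-|s-t_e|$ for $s<t_e$, and reading off coordinates in the basis $V$ with $g(s,\omega):=V^{-1}P(s)^{-1}E(s,\omega)$ (continuous in $s$), the bracketed vector has $j$-th coordinate equal to $y_j(\omega)-\int_{t_e-\rho}^{t}F_j(s,\omega)/|s-t_e|\,ds$ for $j<K$ and $y_K(\omega)-\int_{t_e-\rho}^{t}F_K(s,\omega)/|s-t_e|^{1+\lambda}\,ds$ for $j=K$, where the $y_j(\omega)$ are the constants produced by the datum at $t_e-\rho$ and $F_j:=g_j$; these are exactly the objects of (\ref{y F defn eign}). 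Applying $P(t)V\Lambda(t)$ to this vector reproduces (\ref{gen limit diag R main blow up KZ}). Every integral is taken over $[t_e-\rho,t]$ with $t<t_e$, hence is finite, so no claim about the limit $t\uparrow t_e$ is needed. Differentiating (\ref{gen limit diag R main blow up KZ}) yields $\theta^{u,\star}(\omega)$; together with the deterministic inventories and rates $X^{\theta^\star}_k(\omega),\theta^\star_k(\omega)$ of the certain agents (Corollary \ref{cert agents opt strats}) this determines, via (\ref{rt true dynamics}), $S^{exc}(\omega)$ on $(t_e-\rho,t_e)$.

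The main obstacle is the normal-form step: one must see that the hypothesis $\lambda\notin\mathbb Z$ is precisely the non-resonance condition guaranteeing the analytic gauge transformation $P$ to constant coefficients, and that the rank-one structure of $D(t_e)$ forces its diagonalizability, so that no logarithmic terms enter — it is here that $m=0$ (regular singular point) is essential, since the general reduction of Lemma \ref{ODE HOM sing pt mult lem} does not by itself rule out resonances or Jordan blocks. The remaining difficulties are bookkeeping: tracking the signs $|s-t_e|$ versus $s-t_e$, carrying out the eigenbasis decomposition, and shrinking $\rho$ enough times that $\det A\ne0$, $D$ is analytic, and $P$ is nonsingular on $[t_e-\rho,t_e]$ — plus the observation that $E(\cdot,\omega)$ is only continuous (because $\tilde W_\cdot(\omega)$ is), so the inhomogeneous part must be handled by variation of parameters rather than by power series.
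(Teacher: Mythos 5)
Your proof is correct and follows essentially the same route as the paper: both reduce the system to a singular point of the first kind at $t_e$, invoke the Coddington--Carlson theory (their Theorem 6.5, valid precisely because $\lambda\notin\mathbb{Z}$ rules out resonances) to obtain the fundamental matrix $P\left( t \right)\left| t - t_e\right|^{D\left( t_e \right)}$ of the homogeneous system, and then apply variation of parameters and an eigenbasis decomposition to read off \eqref{gen limit diag R main blow up KZ}. Your explicit observation that the rank-one structure of $D\left( t_e\right)$ forces $D\left( t_e\right)^2 = \lambda D\left( t_e\right)$ and hence diagonalizability is a small but welcome addition -- the paper takes the existence of an eigenbasis for granted at this point and only verifies it later in the semi-symmetric case.
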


\begin{proof}
See Appendix \ref{opt trad speeds uncer PRF SS}.
\end{proof}

\begin{prop}\label{det A no root no blow up}

Suppose that $\det A$ does not have a root on $\left[ 0, T \right]$. 
Then $S^{exc} \left( \omega \right)$, the $X^{\theta_{j}^\star}_{j} \left( \omega \right)$'s and 
the $\theta^{\star}_{j} \left( \omega \right)$'s are all uniquely defined and continuous on $\left[ 0, T \right]$. 
Moreover, 
\begin{equation}\label{liq in right mod too}
\displaystyle\lim_{t \uparrow T} X^{\theta^\star}_{t} \left( \omega \right) = 0. 
\end{equation}

\end{prop}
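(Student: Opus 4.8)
The plan is to localize the whole problem at the single remaining singularity, $t=T$, where the entries of $B$ blow up, and to treat it with the regular--singular--point machinery of Chapter 6 of \cite{codd+carl+ode}, exactly as in the proof of Proposition \ref{opt trad speeds uncer} (the excerpt already flags that the ``$B$ at $T$'' analysis belongs here). Existence, uniqueness, and continuity on $[0,T)$ are handed to us by the last sentence of Lemma \ref{act 1st order ODE lem}, so the entire content is the extension up to $t=T$ together with \eqref{liq in right mod too}. First I would fix $\omega$ with $\tilde W_\cdot(\omega)$ continuous; since $\det A$ is continuous and nowhere zero on the compact interval $[0,T]$, $A(t)$ is invertible for every $t\in[0,T]$, so on $[0,T)$ the system \eqref{DG syst lin ODE} may be recast in explicit form $\dot X^{u,\theta^\star}_t(\omega)=\theta^{u,\star}_t(\omega)=A(t)^{-1}B(t)X^{u,\theta^\star}_t(\omega)+A(t)^{-1}C(t,\omega)$. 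The certain agents' inventories and speeds entering $C$ (Definition \ref{Phi j defn}) are, by Corollary \ref{cert agents opt strats} and the explicit Almgren--Chriss solution, continuous on all of $[0,T]$ with inventories vanishing at $T$; combined with continuity of $\tilde W_\cdot(\omega)$ this makes $C(\cdot,\omega)$ continuous on $[0,T]$.

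Next I would isolate the singular part near $T$. From $\coth y=y^{-1}+O(y)$ and $\tau_i(t)=\sqrt{\kappa_i/\eta_{i,tem}}\,(T-t)$ one gets $\sqrt{\kappa_i/\eta_{i,tem}}\coth(\tau_i(t))=(T-t)^{-1}+(\text{analytic in }T-t)$, with the leading coefficient \emph{independent of $i$}; and $\tanh(\tau_i(t)/2)=O(T-t)$ forces $\Phi_i(t)=O(T-t)$. Reading off Definition \ref{Phi j defn}, this gives $A(t)=I_K+(T-t)\cdot(\text{analytic})$, $B(t)=-(T-t)^{-1}I_K+(\text{analytic})$, and $C(t,\omega)=O(T-t)$ with $C(T,\omega)=0$. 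Hence $A(t)^{-1}B(t)=-(T-t)^{-1}I_K+\mathcal R_1(t)$ with $\mathcal R_1$ analytic near $T$, and $A(t)^{-1}C(t,\omega)=\mathcal R_2(t,\omega)$ with $\mathcal R_2(\cdot,\omega)$ continuous near $T$, $\mathcal R_2(T,\omega)=0$, $\mathcal R_2(T-s,\omega)=O(s)$. After $s=T-t$ this is a regular singular point of the first kind at $s=0$ with residue matrix $I_K$. Because that residue matrix is scalar, the indicial roots all equal $1$ and the Frobenius recursion meets no resonance, so the homogeneous equation \eqref{DG syst lin ODE HOM} admits, near $T$, a fundamental matrix of the form $\Phi(T-t)=(T-t)\,Q(T-t)$ with $Q$ analytic and $Q(0)=I_K$ --- in particular no logarithmic terms appear.

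Finally, variation of parameters gives, for $t$ near $T$,
\[
X^{u,\theta^\star}_t(\omega)=\Phi(T-t)\left(y(\omega)-\int_{T-t}^{\rho}\Phi(r)^{-1}\mathcal R_2(T-r,\omega)\,dr\right),
\]
where $y(\omega)$ is fixed by the value at $t=T-\rho$ from the first step. Since $\Phi(r)^{-1}=O(r^{-1})$ and $\mathcal R_2(T-r,\omega)=O(r)$, the integrand is $O(1)$, the integral converges as $t\uparrow T$, and therefore $X^{u,\theta^\star}_t(\omega)=\Phi(T-t)\cdot(\text{bounded})=O(T-t)\to0$; this extends $X^{u,\theta^\star}(\omega)$ continuously to $[0,T]$ and, with the certain block vanishing at $T$ (Corollary \ref{cert agents opt strats}), yields \eqref{liq in right mod too}. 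Moreover $(T-t)^{-1}X^{u,\theta^\star}_t(\omega)$ equals $Q(T-t)$ times the same bracket and so has a finite limit, whence $\theta^{u,\star}_t(\omega)=A(t)^{-1}\bigl(B(t)X^{u,\theta^\star}_t(\omega)+C(t,\omega)\bigr)$ has a finite limit at $T$ and extends continuously; then $S^{exc}(\omega)$ extends continuously via \eqref{rt true dynamics}. The main obstacle is the middle step: checking that the $1/(T-t)$ part of $A(t)^{-1}B(t)$ is \emph{exactly} $-I_K$ --- a consequence of $\sqrt{\kappa_i/\eta_{i,tem}}\coth(\tau_i(t))\sim(T-t)^{-1}$ with an $i$-independent leading term --- so that the residue matrix is scalar, the indicial roots coincide, and Frobenius produces log-free solutions; this is precisely what keeps $X^{u,\theta^\star}_t(\omega)/(T-t)$, and hence $\theta^{u,\star}_t(\omega)$, bounded up to $T$. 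The remaining estimates are routine.
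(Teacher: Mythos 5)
Your proposal is correct and follows essentially the same route as the paper's proof: isolate the singular point at $t=T$, show it is of the first kind with residue matrix $I_K$ (the paper does this by writing $\sinh(\tau_j(t))=(t-T)g_j(t)$ and checking $A^{-1}(T)=[(t-T)B(t)]_{t=T}=I_K$, which is the same computation as your $i$-independent $\coth$ asymptotics), invoke Theorem 6.5 of Coddington--Carlson for a fundamental solution of the form $Q(t)\lvert t-T\rvert$, and conclude via variation of parameters using $C(\cdot,\omega)=O(T-t)$. The only cosmetic difference is that you phrase the absence of logarithmic terms through the Frobenius/indicial-root language, while the paper simply cites the theorem directly.
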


\begin{rem}

Each agent believes that his terminal inventory will be zero almost surely (see (\ref{admiss strat term})). 
Proposition \ref{det A no root no blow up} specifies conditions 
under which the agents are effectively correct in this regard.

\end{rem}

\begin{proof}
See Appendix \ref{det A no root no blow up PRF SS}.
\end{proof}

\section{Explicit Characterizations of Flash Crashes for Semi-Symmetric Uncertain Agents}\label{exmp sect}

In this section we thoroughly analyze
a broad but tractable class of scenarios. 
This will enable us to both theoretically
and numerically investigate the 
occurrence of mini-flash crashes. 
We specify that our uncertain agents' parameters are 
identical, except for their 
initial inventories $x_j$, means of their initial drift priors $\mu_j$, 
and their initial estimates for the fundamental price $S_{j,0}$. 
Such agents are nearly symmetric, so we call them {\it semi-symmetric}.

\begin{defn}\label{semi symm defn}

We say that our uncertain agents are {\it semi-symmetric} 
when there are positive constants 
\begin{align*}
\tilde{\eta}_{tem}, \quad \eta_{tem}, \quad \tilde{\eta}_{per}, \quad \eta_{per}, \quad \nu^2,  \quad \text{and} \quad \kappa
\end{align*}
such that for each $i \in \left\{ 1, \dots, K \right\}$
\begin{align*}
\begin{array}{lll}
\tilde{\eta}_{i,tem} = \tilde{\eta}_{tem},  &\eta_{i,tem} = \eta_{tem},  &\tilde{\eta}_{per} = \tilde{\eta}_{i,per}, \\
\eta_{i,per} = \eta_{per}, &\nu_{i}^2 = \nu^2 ,  &\kappa_{i} = \kappa . 
\end{array}
\end{align*}
\end{defn}
Since $\tau_j$'s and the $\Phi_j$'s are the same for $j \leq K$ 
(see Definitions \ref{tau j notn}, \ref{Phi j defn}, and \ref{semi symm defn}). 
We denote these functions by $\tau$ and $\Phi$, respectively.

Definition \ref{semi symm defn} implies that the diagonal entries of $A$ are identical, 
as are the off-diagonal entries. 
The same is true for $B$ (see Definition \ref{Phi j defn}). 
Such a simplification considerably reduces the difficulties in computing $\det A$, $\lambda$, 
and an eigenbasis for $D \left( t_e \right)$
(see (\ref{det A semi symm tild FIN}) and Lemma \ref{EV comps SS FIN}). 
The $x_j$'s, $\mu_j$'s, and $S_{j,0}$'s only enter in $C$, 
which also has a nice structure (see (\ref{evals adj A t eqn 1})).

\subsection{Results}\label{ss inv exmp sect}

\begin{lem}\label{te study lem SS FIN}

Suppose that the uncertain agents are semi-symmetric. 
Then $\det A$ has a root on $\left[ 0 , T \right]$ and $t_e > 0$ if and only if
\begin{equation}\label{semi-symmetric gull te in 0 T FIN REWRITE}
 \displaystyle\frac{  \nu^2  \left( K \tilde{\eta}_{tem} - \eta_{tem} \right)  \tanh \left( \displaystyle\frac{T}{2} \displaystyle\sqrt{\displaystyle\frac{ \kappa }{ \eta_{tem}}} \right)     }{ \sqrt{ \eta_{tem}  \kappa }  } 
> 2. 
\end{equation}
In this case, the zero of $\det A \left( \cdot \right)$ at $t_e$ is of multiplicity 1. 

\end{lem}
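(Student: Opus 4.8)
The plan is to exploit the special structure that semi-symmetry imposes on $A$, reduce the vanishing of $\det A$ to the vanishing of a single scalar function of $t$, and then read off both assertions from the monotonicity of $\Phi$.

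First I would record that, by Definitions \ref{Phi j defn} and \ref{semi symm defn}, every diagonal entry of $A(t)$ equals $a(t) := 1 - \frac{1}{2}(\tilde{\eta}_{tem} - \eta_{tem})\Phi(t)$ and every off-diagonal entry equals $c(t) := -\frac{1}{2}\tilde{\eta}_{tem}\Phi(t)$, so that $A(t) = (a(t)-c(t))\,I_K + c(t)\,\mathbf 1\mathbf 1^{\top}$ with $\mathbf 1 = (1,\dots,1)^{\top}\in\mathbb{R}^K$. Since $\mathbf 1\mathbf 1^{\top}$ has eigenvalue $K$ (eigenvector $\mathbf 1$) and eigenvalue $0$ with multiplicity $K-1$,
\begin{equation*}
\det A(t) = \bigl(a(t) + (K-1)c(t)\bigr)\bigl(a(t) - c(t)\bigr)^{K-1} = \Bigl(1 - \tfrac{1}{2}(K\tilde{\eta}_{tem} - \eta_{tem})\Phi(t)\Bigr)\Bigl(1 + \tfrac{1}{2}\eta_{tem}\Phi(t)\Bigr)^{K-1}.
\end{equation*}
By Lemma \ref{Phi j props lem}, $\Phi \ge 0$ on $[0,T]$, so the second factor is $\ge 1$ and never vanishes; hence $\det A(t) = 0$ if and only if $g(t) := 1 - \frac{1}{2}(K\tilde{\eta}_{tem} - \eta_{tem})\Phi(t) = 0$, and the order of any zero of $\det A$ coincides with that of the corresponding zero of $g$.

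Next I would invoke the properties of $\Phi$ from Lemma \ref{Phi j props lem}: $\Phi$ is continuous and strictly decreasing on $[0,T]$ with $\Phi'(t) < 0$ for $t \in [0,T)$, $\Phi(T) = 0$, and $\Phi(0) = \nu^2\tanh\!\bigl(\tfrac{T}{2}\sqrt{\kappa/\eta_{tem}}\bigr)/\sqrt{\eta_{tem}\kappa}$. If $K\tilde{\eta}_{tem} - \eta_{tem} \le 0$, then $g \ge 1 > 0$ has no root, while the left-hand side of \eqref{semi-symmetric gull te in 0 T FIN REWRITE} is $\le 0$; both sides of the claimed equivalence are then false. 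If $K\tilde{\eta}_{tem} - \eta_{tem} > 0$, then $g$ is continuous and strictly increasing on $[0,T]$ with $g(T) = 1 > 0$, so $g$ has a root in $[0,T]$ iff $g(0) \le 0$, and — the root then being unique and hence equal to $t_e$ — one gets $t_e > 0$ iff $g(0) < 0$. Finally $g(0) < 0$ is precisely $\frac{1}{2}(K\tilde{\eta}_{tem} - \eta_{tem})\Phi(0) > 1$, which after substituting the value of $\Phi(0)$ is exactly \eqref{semi-symmetric gull te in 0 T FIN REWRITE}; and since $\nu^2$, $\tanh(\tfrac{T}{2}\sqrt{\kappa/\eta_{tem}})$ and $\sqrt{\eta_{tem}\kappa}$ are all positive, \eqref{semi-symmetric gull te in 0 T FIN REWRITE} can hold only in this second case. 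This establishes the equivalence.

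For the multiplicity claim I would assume \eqref{semi-symmetric gull te in 0 T FIN REWRITE}, so that $K\tilde{\eta}_{tem} - \eta_{tem} > 0$ and $t_e \in (0,T)$. Then $g'(t_e) = -\frac{1}{2}(K\tilde{\eta}_{tem} - \eta_{tem})\Phi'(t_e) > 0$, because $\Phi'(t_e) < 0$ (as $t_e < T$); hence $t_e$ is a simple zero of $g$, and since $1 + \frac{1}{2}\eta_{tem}\Phi(t_e) > 0$, the factorization above shows it is a simple zero of $\det A$. The only mildly delicate points are the eigenvalue computation for the rank-one perturbation (routine) and the strict negativity of $\Phi'$ on $[0,T)$; the latter I would take from Lemma \ref{Phi j props lem}, or prove directly by observing that $\Phi$ is a product of the strictly positive, strictly decreasing functions $t \mapsto \tanh(\tfrac{1}{2}\sqrt{\kappa/\eta_{tem}}(T-t))$ and $t \mapsto (1+\nu^2 t)^{-1}$, both of which have strictly negative derivative on $[0,T)$. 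I expect the main obstacle to be purely bookkeeping: keeping the strict versus non-strict inequalities aligned so that the ``$t_e > 0$'' clause corresponds precisely to the strict ``$> 2$'' in \eqref{semi-symmetric gull te in 0 T FIN REWRITE}.
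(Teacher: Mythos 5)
Your proposal is correct and follows essentially the same route as the paper: factor $\det A(t) = \bigl(1+\tfrac{1}{2}\eta_{tem}\Phi(t)\bigr)^{K-1}\bigl(1-\tfrac{1}{2}(K\tilde{\eta}_{tem}-\eta_{tem})\Phi(t)\bigr)$, note the first factor is bounded below by $1$, and use the strict monotonicity of $\Phi$ together with $\Phi(T)=0$ to locate the root of the second factor and to get $\tfrac{d}{dt}\det A(t_e)>0$ for the multiplicity claim. Your explicit rank-one eigenvalue computation and the boundary case analysis ($K\tilde{\eta}_{tem}-\eta_{tem}\le 0$, $g(0)=0$) merely spell out what the paper leaves as ``a short calculation.''
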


\begin{rem}\label{t_e exists rem FIN}

By varying our parameters in (\ref{semi-symmetric gull te in 0 T FIN REWRITE}) one at a time we can obtain the following interpretations discussed in Section \ref{back contrib sect}:

\begin{enumerate}[label=\roman*)]

\item \label{a SS te exists lem} 
(\ref{semi-symmetric gull te in 0 T FIN REWRITE}) holds when $\nu^2$ is high. 
Since $\nu^2$ is the variance of the uncertain agents' drift priors, 
we are led to (\ref{first human error F} in Section \ref{back contrib sect}.

\item \label{b SS te exists lem} 
(\ref{semi-symmetric gull te in 0 T FIN REWRITE}) holds when $\left( K \tilde{\eta}_{tem} - \eta_{tem} \right)$ is high. 
A given uncertain agent believes that his own temporary impact parameter is $\eta_{tem}$, 
while
the actual collective temporary impact parameter induced by the uncertain agents is 
$K \tilde{\eta}_{tem}$. 
Then $\left( K \tilde{\eta}_{tem} - \eta_{tem} \right)$ is large whenever
each uncertain agent severely underestimates his own temporary impact or 
there are many uncertain agents, giving (\ref{last human error} in Section \ref{back contrib sect}.

\item \label{c SS te exists lem} 
(\ref{semi-symmetric gull te in 0 T FIN REWRITE}) holds when $T$ is high. 
Since $\left[ 0 , T \right]$ is our time horizon, we get 
(\ref{first human error D} in Section \ref{back contrib sect}.
Note that $T$ must be small enough for our 
agents' modeling rationale to hold (see Section \ref{mod details sec}); 
however, $T$ need not be too large here, as the value of
$\tanh$ reaches $95\%$ of its supremum 
on $\left[ 0 , \infty \right)$ for arguments 
greater than 1.8.

\item \label{d SS te exists lem} 
(\ref{semi-symmetric gull te in 0 T FIN REWRITE}) holds when $\kappa$ is low.
We conclude 
(\ref{first human error E} in Section \ref{back contrib sect}, 
as $\kappa$ measures our uncertain agents' aversion to volatility risks (see Section \ref{obj fxn subsec}). 
Observe that both the numerator and the denominator of the LHS in (\ref{semi-symmetric gull te in 0 T FIN}) 
roughly look like 
$\sqrt{\kappa}$ for small $\kappa$; however, when $\kappa$ is large, 
the whole LHS looks like $1 / \sqrt{\kappa}$
since $\tanh$ is bounded by 1 on $\left[ 0 , \infty \right)$.

\end{enumerate}

\end{rem}

\begin{proof}
See Appendix \ref{te study lem SS FIN PRF SS}. 
\end{proof}

\begin{rem}\label{te impl formula FIN}

As observed in (\ref{key eqn Phi te FIN}), when $\det A$ has a root on $\left[  0, T \right]$, we have
\begin{align}\label{spec Phi te relat FIN}
 \Phi \left( t_e \right)   = 
 \displaystyle\frac{2}{ K \tilde{\eta}_{tem} - \eta_{tem} } .
\end{align}
No agent would think to compute $t_e$ since they all believe that a mini-flash crash is a null event; however,
(\ref{spec Phi te relat FIN}) makes it especially clear that they could not do so anyway.

\end{rem}

\begin{lem}\label{EV comps SS FIN}

Suppose that the uncertain agents are semi-symmetric and (\ref{semi-symmetric gull te in 0 T FIN REWRITE}) holds. 
Then
\begin{align}\label{lambda formula FIN}
\lambda &= 
\displaystyle\frac{
2
\left[
\displaystyle\sqrt{\displaystyle\frac{ \kappa }{ \eta_{tem}}}   \coth \left( \tau \left( t_e \right) \right)  
-  2 \left(  \displaystyle\frac{ K \tilde{\eta}_{per} - \eta_{per} }{ K \tilde{\eta}_{tem} - \eta_{tem} } \right)
\right] 
}
{ \left( K \tilde{\eta}_{tem} - \eta_{tem} \right) \dot{\Phi} \left( t_e \right)  } 
\end{align}
and the corresponding eigenvector is $v_K = \left[ 1, \dots, 1 \right]^{\top}$. 
By slightly perturbing $\tilde{\eta}_{per}$ and/or $\eta_{per}$, if necessary, 
we can ensure that $\lambda \not \in \mathbb{Z}$. 
In this case, $D \left( t_e \right)$ is diagonalizable and the remaining vectors in an 
eigenbasis for $D \left( t_e \right)$ (all with the eigenvalue zero) are 
\begin{align*}
v_1 = \left[ -1, 1, 0, \dots, 0 \right]^{\top}
\quad , \dots, \quad
  v_{K-1} = \left[ -1, 0, \dots, 0, 1 \right]^{\top}. 
\end{align*}

\end{lem}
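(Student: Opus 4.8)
The plan is to compute $D(t_e)$ explicitly using the semi-symmetric structure, then read off its eigenvalues and eigenvectors. First I would recall from the proof of Lemma \ref{ODE HOM sing pt mult lem} (Appendix \ref{ODE HOM sing pt mult lem PRF SS}) the precise formula for the analytic map $D$; since $m=0$ by Lemma \ref{te study lem SS FIN}, $D$ is obtained from $A$ and $B$ by the standard reduction near a simple singular point, and $D(t_e)$ is expressible through $\operatorname{adj} A(t_e)$, $B(t_e)$, and $\dot{A}(t_e)$ (equivalently $\dot\Phi(t_e)$). The key structural input is that under semi-symmetry $A(t)$ has the form $a(t) I_K + b(t) (\mathbf{1}\mathbf{1}^\top - I_K)$-type pattern (constant diagonal, constant off-diagonal), and likewise for $B(t)$; such matrices are simultaneously diagonalized by the decomposition $\mathbb{R}^K = \operatorname{span}\{\mathbf{1}\} \oplus \mathbf{1}^\perp$. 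On $\mathbf{1}^\perp$ the matrix $A(t_e)$ acts as a nonzero scalar (so that eigendirection survives the reduction with eigenvalue $0$ for $D(t_e)$), while on $\operatorname{span}\{\mathbf{1}\}$ it is exactly the scalar that vanishes at $t_e$ — this is precisely the simple zero of $\det A$. Hence $v_K = \mathbf{1} = [1,\dots,1]^\top$ is the only eigendirection of $D(t_e)$ that can carry a nonzero eigenvalue, and $v_1,\dots,v_{K-1}$ as listed span $\mathbf{1}^\perp$ and lie in the kernel.

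Next I would compute $\lambda$ itself. Restricting everything to $\operatorname{span}\{\mathbf{1}\}$ turns the $K\times K$ system into a scalar problem: on this line $A(t)$ acts by the scalar $\alpha(t) \triangleq 1 - \tfrac12(\tilde\eta_{tem} - \eta_{tem})\Phi(t) - \tfrac12 (K-1)\tilde\eta_{tem}\Phi(t) = 1 - \tfrac12(K\tilde\eta_{tem}-\eta_{tem})\Phi(t)$, which by Remark \ref{te impl formula FIN}, equation (\ref{spec Phi te relat FIN}), indeed vanishes at $t_e$; similarly $B(t)$ acts by the scalar $\beta(t) \triangleq (\tilde\eta_{per}-\eta_{per})\Phi(t) + (K-1)\tilde\eta_{per}\Phi(t) - \sqrt{\kappa/\eta_{tem}}\coth(\tau(t)) = (K\tilde\eta_{per}-\eta_{per})\Phi(t) - \sqrt{\kappa/\eta_{tem}}\coth(\tau(t))$. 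The reduction in Lemma \ref{ODE HOM sing pt mult lem} with $m=0$ gives, on this eigenline, $\lambda = \beta(t_e)/\alpha'(t_e) = \beta(t_e) / (-\tfrac12(K\tilde\eta_{tem}-\eta_{tem})\dot\Phi(t_e))$. Substituting $\beta(t_e)$ and simplifying using (\ref{spec Phi te relat FIN}) to rewrite $(K\tilde\eta_{per}-\eta_{per})\Phi(t_e) = 2(K\tilde\eta_{per}-\eta_{per})/(K\tilde\eta_{tem}-\eta_{tem})$ yields exactly the stated formula (\ref{lambda formula FIN}) (up to the sign bookkeeping forced by the $(t-t_e)^{m+1}$ normalization on the left of (\ref{DG syst lin ODE HOM REG}), which I would track carefully to land on the displayed sign).

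For the perturbation claim, I would note that $\lambda$ in (\ref{lambda formula FIN}) depends continuously — in fact affinely through the combination $K\tilde\eta_{per}-\eta_{per}$ in the numerator — on $\tilde\eta_{per}$ and $\eta_{per}$, while $t_e$, $\Phi(t_e)$, $\dot\Phi(t_e)$, $\tau(t_e)$ and $\coth(\tau(t_e))$ are determined by (\ref{semi-symmetric gull te in 0 T FIN REWRITE}) and hence unaffected by perturbing only the permanent-impact parameters; since the coefficient of $(K\tilde\eta_{per}-\eta_{per})$ in $\lambda$ is nonzero, the map $(\tilde\eta_{per},\eta_{per}) \mapsto \lambda$ is non-constant and continuous, so an arbitrarily small perturbation keeping all parameters positive moves $\lambda$ off the discrete set $\mathbb{Z}$. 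When $\lambda \notin \mathbb{Z}$ the nonzero eigenvalue is simple and distinct from $0$, so $D(t_e)$ — acting as $0$ on the $(K-1)$-dimensional $\mathbf{1}^\perp$ and as $\lambda$ on $\operatorname{span}\{\mathbf{1}\}$ — is diagonalizable with the stated eigenbasis. The main obstacle I anticipate is purely bookkeeping: extracting the exact constant and sign of $D(t_e)$ from the general reduction formula in Appendix \ref{ODE HOM sing pt mult lem PRF SS} and verifying that the scalar restriction I described reproduces it faithfully, since a stray factor of $\tfrac12$ or a sign would corrupt (\ref{lambda formula FIN}); everything else reduces to the elementary spectral decomposition of ``constant-diagonal, constant-off-diagonal'' matrices.
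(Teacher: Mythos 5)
Your proposal is correct and follows essentially the same route as the paper's proof: both compute $D(t_e) = \operatorname{adj}[A(t_e)]\,B(t_e)/f(t_e)$ with $f(t_e) = \frac{d}{dt}\det A(t)\big|_{t=t_e}$, exploit the constant-diagonal/constant-off-diagonal structure of $A$ and $B$, identify $\operatorname{span}\{\mathbf{1}\}\oplus\mathbf{1}^{\perp}$ as the common eigendecomposition (the paper's $v_1,\dots,v_{K-1}$ span exactly $\mathbf{1}^{\perp}$), and then apply (\ref{spec Phi te relat FIN}) and the independence of $t_e$, $\Phi$, $\tau$ from the permanent-impact parameters for the perturbation claim. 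Your scalar reduction $\lambda = \beta(t_e)/\alpha'(t_e)$ is exactly what the paper's entrywise computation of (\ref{adj A B form SS FIN}) and (\ref{lambda SS nearly FIN}) amounts to, and the sign you were worried about works out as displayed.
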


\begin{rem}\label{lambda sign remark}

With the exceptions of $\tilde{\eta}_{per}$ and $\eta_{per}$, all parameters
in (\ref{lambda formula FIN}) determine whether or not $\det A$ has a root on $\left[ 0, T \right]$ 
(see Lemma \ref{te study lem SS FIN}). 
They also fix the value of $t_e$ (see Remark \ref{te impl formula FIN}). 
Hence, to interpret (\ref{lambda formula FIN}), we only consider the roles 
of $\tilde{\eta}_{per}$ and $\eta_{per}$. 
These parameters enter (\ref{lambda formula FIN}) via 
\begin{equation}\label{mistake ration FIN}
 \displaystyle\frac{ K \tilde{\eta}_{per} - \eta_{per} }{ K \tilde{\eta}_{tem} - \eta_{tem} } . 
\end{equation}

Intuitively, (\ref{mistake ration FIN}) can be viewed as the ratio of two terms:
The numerator measures how far a given uncertain agent's estimate of his own permanent impact is
from the uncertain agents' actual collective permanent impact. 
The denominator, which must be positive due to Lemma \ref{te study lem SS FIN}, is the corresponding measure for the temporary impact.
One might call (\ref{mistake ration FIN}) a {\it mistake ratio}.

Since $\dot{\Phi} \left( t_e \right) < 0$ by Lemma \ref{Phi j props lem}, $\lambda$ is positive only when
 (\ref{mistake ration FIN}) is high enough. 
 We have $\lambda < 0$ when the uncertain agents' total permanent impact and a single uncertain agent's' estimate of his own 
permanent impact are too close or when his estimate exceeds the cumulative permanent impact. 
 More precisely, 
\begin{align}\label{mistake rat lamb sign 3}
&\left\{ \lambda > 0 \right\} 
\quad \iff \quad 
\left\{
\frac{1}{2} \displaystyle\sqrt{\displaystyle\frac{ \kappa }{ \eta_{tem}}}   \coth \left( \tau \left( t_e \right) \right)  
\left( K \tilde{\eta}_{tem} - \eta_{tem} \right) 
<   K \tilde{\eta}_{per} - \eta_{per} 
\right\} \notag \\
&\left\{ \lambda < 0 \right\} 
\quad \iff \quad 
\left\{
\frac{1}{2} \displaystyle\sqrt{\displaystyle\frac{ \kappa }{ \eta_{tem}}}   \coth \left( \tau \left( t_e \right) \right)  
\left( K \tilde{\eta}_{tem} - \eta_{tem} \right) 
>   K \tilde{\eta}_{per} - \eta_{per} 
\right\} . 
\end{align}
Whether a mini-flash crash is accompanied by high or low trading volumes 
is effectively determined by which inequality in (\ref{mistake rat lamb sign 3}) 
holds (see 
Theorems \ref{SS main blow up lem FIN} and \ref{no Inv expl lem SS FIN}
and Sections 
 \ref{back contrib sect}, \ref{case 2 subsect}, and \ref{case 3 subsect}).

\end{rem}

\begin{proof}
See Appendix \ref{EV comps SS FIN PRF SS}. 
\end{proof}

\begin{thm}\label{SS main blow up lem FIN}

Suppose that the uncertain agents are semi-symmetric and (\ref{semi-symmetric gull te in 0 T FIN REWRITE}) holds. 
Assume that $\lambda \not \in \mathbb{Z}$ and $\lambda < 0$ (see Lemma \ref{EV comps SS FIN}). 
Let $\rho$, $y_K \left( \omega \right)$, and $F_K \left( \cdot , \omega \right)$ be defined as in Proposition \ref{opt trad speeds uncer}. 
Then 
\begin{align}\label{neg Lamb SS FIN pos expl}
&\left\{
y_K \left( \omega \right)
>  \displaystyle\lim_{t \uparrow t_e} 
 \displaystyle\int_{ t_e - \rho   }^t 
\displaystyle\frac{  F_K \left( s, \omega \right)}
{ \left| s - t_e \right|^{1  +  \lambda } }\, ds 
\right\} \\
&\quad \implies \quad
  \left\{  \displaystyle\lim_{t \uparrow t_e} X^{u, \theta^\star}_{t} \left( \omega \right)
 =  \displaystyle\lim_{t \uparrow t_e} \theta^{u,\star}_{t} \left( \omega \right) 
    = \left[ + \infty, \dots, + \infty \right]^{\top} , \quad
    \displaystyle\lim_{t \uparrow t_e} S^{exc}_{t} \left( \omega \right) = + \infty
  \right\} \notag
  \end{align}
  and 
 \begin{align}\label{neg Lamb SS FIN neg expl}
&\left\{
y_K \left( \omega \right)
< \displaystyle\lim_{t \uparrow t_e} 
 \displaystyle\int_{   t_e - \rho  }^t 
\displaystyle\frac{  F_K \left( s, \omega \right)}
{ \left| s - t_e \right|^{1  +  \lambda } }\, ds 
\right\}     \\
&\quad \implies \quad
  \left\{  \displaystyle\lim_{t \uparrow t_e} X^{u, \theta^\star}_{t} \left( \omega \right)
 =  \displaystyle\lim_{t \uparrow t_e} \theta^{u,\star}_{t} \left( \omega \right) 
    = \left[ - \infty, \dots, - \infty \right]^{\top} , \quad
    \displaystyle\lim_{t \uparrow t_e} S^{exc}_{t} \left( \omega \right) = - \infty
  \right\} \notag.
  \end{align}
 Moreover, 
 \begin{enumerate}[label=\roman*)]
  
  \item  \label{neg EV SS lem A FIN} 
  The integral limits in (\ref{neg Lamb SS FIN pos expl}) and (\ref{neg Lamb SS FIN neg expl})
exist and are finite. 

\item \label{neg EV SS lem B FIN} 
Either (\ref{neg Lamb SS FIN pos expl}) or (\ref{neg Lamb SS FIN neg expl}) holds $\tilde{P}$-a.s.

\item \label{neg EV SS lem C FIN} 
At $t_e - \rho$, 
the events (\ref{neg Lamb SS FIN pos expl}) and (\ref{neg Lamb SS FIN neg expl})
both have positive $\tilde{P}$-probability; however, the $\tilde{P}$-probability of one event tends
to 1 (while the other tends to 0) if we let $\rho \downarrow 0$. 

 \end{enumerate}

\end{thm}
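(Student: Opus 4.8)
The plan is to run the whole argument off the representation \eqref{gen limit diag R main blow up KZ} of Proposition~\ref{opt trad speeds uncer}, which applies here since $m=0$, $t_e>0$ and $\lambda\notin\mathbb Z$, and to use in an essential way that $\lambda<0$. I would first dispose of item~\ref{neg EV SS lem A FIN} (needed even to parse the events in \eqref{neg Lamb SS FIN pos expl}--\eqref{neg Lamb SS FIN neg expl}): since $\lambda<0$ we have $1+\lambda<1$, so $s\mapsto|s-t_e|^{-(1+\lambda)}$ is integrable near $t_e$; as $F_K(\cdot,\omega)$ is continuous, hence bounded, on $[t_e-\rho,t_e]$, the integral $\int_{t_e-\rho}^{t}F_K(s,\omega)\,|s-t_e|^{-(1+\lambda)}\,ds$ converges absolutely as $t\uparrow t_e$ to a finite limit $Z(\omega)$. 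Put $L(\omega):=y_K(\omega)-Z(\omega)$, so that the triggering events in \eqref{neg Lamb SS FIN pos expl} and \eqref{neg Lamb SS FIN neg expl} are exactly $\{L(\omega)>0\}$ and $\{L(\omega)<0\}$.

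Next I would read off the leading-order behaviour of \eqref{gen limit diag R main blow up KZ} as $t\uparrow t_e$. Each $j<K$ summand is $O(\log(1/|t-t_e|))$ in the worst case (namely when $F_j(t_e,\omega)\neq0$), hence $o(|t-t_e|^{\lambda})$ since $|t-t_e|^{\lambda}\to+\infty$ polynomially, while the $v_K$-summand equals $|t-t_e|^{\lambda}$ times a scalar converging to $L(\omega)$. As $P$ is analytic with $P(t_e)=I_K$ and $v_K=[1,\dots,1]^{\top}$, this gives $X^{u,\theta^\star}_t(\omega)=|t-t_e|^{\lambda}L(\omega)\,P(t)v_K+o(|t-t_e|^{\lambda})$, so every component tends to $+\infty$ if $L(\omega)>0$ and to $-\infty$ if $L(\omega)<0$. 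Differentiating \eqref{gen limit diag R main blow up KZ}, and using $\frac{d}{dt}|t-t_e|=-1$ on $(t_e-\rho,t_e)$, the leading term of $\theta^{u,\star}_t(\omega)=\dot X^{u,\theta^\star}_t(\omega)$ is $-\lambda\,|t-t_e|^{\lambda-1}L(\omega)\,P(t)v_K$; as $-\lambda>0$ and $\lambda-1<-1$, it dominates the $\dot P$-term (of order $|t-t_e|^{\lambda}$), the $j<K$ derivative terms, and the term $-F_K(t,\omega)\,|t-t_e|^{-1}v_K$ (both of order $|t-t_e|^{-1}$), so $\theta^{u,\star}_t(\omega)$ has the same componentwise $\pm\infty$ limit as the inventory. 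Finally I would substitute into \eqref{rt true dynamics}: the uncertain agents' inventories and trading rates blow up with a common sign against strictly positive weights $\tilde\eta_{i,per},\tilde\eta_{i,tem}$, whereas the certain agents' $X^{\theta_i^\star}_i$ and $\theta^\star_i$ (deterministic by Corollary~\ref{cert agents opt strats}, solving \eqref{x theta star DET} with coefficients bounded on the compact set $[0,t_e]\subset[0,T)$) and $S_0+\tilde\beta t+\tilde W_t(\omega)$ stay bounded; hence $S^{exc}_t(\omega)\to+\infty$, respectively $-\infty$. This proves \eqref{neg Lamb SS FIN pos expl}, \eqref{neg Lamb SS FIN neg expl}, and item~\ref{neg EV SS lem A FIN}.

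For item~\ref{neg EV SS lem B FIN} I would trace how the noise enters. By Definition~\ref{Phi j defn}, $C(t,\omega)$ is affine in $\tilde W_t(\omega)$ with deterministic coefficients---the certain agents' contributions are deterministic by Corollary~\ref{cert agents opt strats}---and its $\tilde W_t(\omega)$-coefficient is $\Phi(t)\,[1,\dots,1]^{\top}=\Phi(t)\,v_K$; since $y_K(\cdot)$ and $F_K(\cdot,\cdot)$ arise from $C$ through the deterministic variation-of-parameters construction of \eqref{y F defn eign}, $L(\omega)=y_K(\omega)-Z(\omega)$ is an affine functional of the Gaussian path $(\tilde W_s)_{s\in[0,t_e]}$, hence Gaussian, with $y_K$ measurable with respect to $\mathcal F_{t_e-\rho}$ (Lemma~\ref{act 1st order ODE lem}). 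Conditioning on $\mathcal F_{t_e-\rho}$ and writing $\tilde W_s=\tilde W_{t_e-\rho}+(\tilde W_s-\tilde W_{t_e-\rho})$ on $[t_e-\rho,t_e]$, one obtains $L=A_\rho-\xi_\rho$, where $A_\rho$ is $\mathcal F_{t_e-\rho}$-measurable and $\xi_\rho=\int_{t_e-\rho}^{t_e}g(s)\,(\tilde W_s-\tilde W_{t_e-\rho})\,ds$ for a deterministic kernel $g$ that behaves like $\Phi(t_e)\,|s-t_e|^{-(1+\lambda)}$ as $s\to t_e$ (using $P(t_e)=I_K$ and $\Phi(t_e)>0$ from Remark~\ref{te impl formula FIN}), hence is nonzero near $t_e$; so for $\rho$ small $\xi_\rho$ is a non-degenerate Gaussian independent of $\mathcal F_{t_e-\rho}$, with variance tending to $0$ as $\rho\downarrow0$. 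Therefore $\tilde P(L=0\mid\mathcal F_{t_e-\rho})=0$ a.s., whence $\tilde P(L=0)=0$ and a.s.\ exactly one of the triggering events of \eqref{neg Lamb SS FIN pos expl}, \eqref{neg Lamb SS FIN neg expl} occurs (the matching conclusion then following from the first part of the proof). Item~\ref{neg EV SS lem C FIN} is then immediate: $L$ is a non-degenerate Gaussian (its variance is at least that of $\xi_\rho$), so $\tilde P(L>0),\tilde P(L<0)\in(0,1)$; and since $L$ is $\big(\bigvee_{\rho>0}\mathcal F_{t_e-\rho}\big)$-measurable while its $\mathcal F_{t_e-\rho}$-conditional variance equals the variance of $\xi_\rho$ and thus $\to0$ as $\rho\downarrow0$, L\'evy's $0$--$1$ law gives $\tilde P(L>0\mid\mathcal F_{t_e-\rho})\to\mathbf{1}_{\{L>0\}}$ a.s.\ as $\rho\downarrow0$, i.e.\ the probability assessed at time $t_e-\rho$ of one event tends to $1$ and that of the other to $0$.

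The main obstacle is item~\ref{neg EV SS lem B FIN}: the leading-order asymptotics are essentially bookkeeping once \eqref{gen limit diag R main blow up KZ} is in hand, but ruling out $\{L=0\}$ requires making the affine-in-$\tilde W$ dependence of $y_K$ and $F_K$ in \eqref{y F defn eign} explicit enough to identify the noise kernel $g$ and to check that it is not a.e.\ zero; granting this, item~\ref{neg EV SS lem C FIN} is a short filtration argument.
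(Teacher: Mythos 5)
Your proposal is correct and follows essentially the same route as the paper's proof: it runs off the representation \eqref{gen limit diag R main blow up KZ}, uses $\lambda<0$ to get absolute convergence of the $F_K$-integral and dominance of the $|t-t_e|^{\lambda}$ term (the paper's decomposition $F_K=\tilde W F_{K,1}+F_{K,2}$ with $F_{K,1}(t_e)=-\Phi^2(t_e)/\dot\Phi(t_e)>0$ is exactly your nonvanishing noise kernel), and then exploits the conditional Gaussianity of the $\tilde W$-integral for parts \ref{neg EV SS lem B FIN}--\ref{neg EV SS lem C FIN}. If anything, your martingale-convergence argument for part \ref{neg EV SS lem C FIN} is spelled out more completely than the paper's rather terse conclusion, and your worst-case $O(\log)$ bound on the $j<K$ summands is weaker than the boundedness the paper derives from $F_{j,1}(t_e)=F_{j,2}(t_e)=0$ but still suffices.
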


\begin{rem}\label{neg lambda rem FIN}

Since $P \left( t_e \right) = I_K$ (see Proposition \ref{opt trad speeds uncer}), 
(\ref{y F defn eign}) and Lemma \ref{EV comps SS FIN} imply that $y_K \left( \omega \right)$ will be 
large and positive when the uncertain agents hold significant, similar long positions. 
$y_K \left( \omega \right)$ will be of high magnitude but negative, if the uncertain agents 
carry substantial, similarly-sized short positions. 
Hence, a spike in $S^{exc} \left( \omega \right)$ is more likely when the uncertain agents 
are synchronized aggressive buyers, while the odds of a collapse improve
when they are synchronized heavy sellers.
These effects play the deciding role as $t \uparrow t_e$, as the integral limits in (\ref{neg Lamb SS FIN pos expl})
and (\ref{neg Lamb SS FIN neg expl}) are finite.

Still, due to how we can decompose $F_K$ in our case (see (\ref{Fj1 Fj2 defn})), 
large fluctuations in the fundamental price 
can make the mini-flash crash's direction unclear until just before $t_e$ (see Figure \ref{fig: ExcPri_NegEV_Expl_MIN_MAX_DOWN}).

\end{rem}

\begin{proof}
See Appendix \ref{SS main blow up lem FIN PRF SS}.
\end{proof}

\begin{thm}\label{no Inv expl lem SS FIN}

Suppose that the uncertain agents are semi-symmetric and 
(\ref{semi-symmetric gull te in 0 T FIN REWRITE}) holds. 
Assume that $\lambda \not \in \mathbb{Z}$ and $\lambda > 0$ (see Lemma \ref{EV comps SS FIN}). 
Then $\tilde{P}$-a.s.,  
\begin{equation*}
\displaystyle\lim_{t \uparrow t_e} X^{\theta^\star}_{t} \left( \omega \right) 
\end{equation*}
exists in $\mathbb{R}^N$. 
If any coordinates of $\theta^{u,\star}_{t} \left( \omega \right)$ explode, then 
$S_t^{exc} \left( \omega \right)$ and all coordinates of $\theta^{u,\star}_{t} \left( \omega \right)$ 
explode in the same direction.  
For instance, when $\lambda > 1$, 
\begin{align}\label{lamb > 1 final lem limit 1}
&\left\{ \displaystyle\lim_{t \uparrow t_e} \left[
 \left| t - t_e \right|^{\lambda - 1}  
 \displaystyle\int_{t_e - \rho }^t \displaystyle\frac{ \tilde{W}_s \left( \omega \right)  - \tilde{W}_t \left( \omega \right) }
{ \left| s - t_e \right|^{1+ \lambda} }\, ds 
\right] = + \infty \right\} \\
&\quad \implies \quad 
  \left\{  \displaystyle\lim_{t \uparrow t_e} \theta^{u,\star}_{t} \left( \omega \right) 
    = \left[ + \infty, \dots, + \infty \right]^{\top} , \quad
    \displaystyle\lim_{t \uparrow t_e} S^{exc}_{t} \left( \omega \right) = + \infty
  \right\} \notag 
 \end{align}
 and 
 \begin{align}\label{lamb > 1 final lem limit 2}
  &\left\{ \displaystyle\lim_{t \uparrow t_e} \left[
 \left| t - t_e \right|^{\lambda - 1}  
 \displaystyle\int_{t_e - \rho }^t \displaystyle\frac{ \tilde{W}_s \left( \omega \right)  - \tilde{W}_t \left( \omega \right) }
{ \left| s - t_e \right|^{1+ \lambda} }\, ds 
\right] = - \infty \right\} \\
&\quad \implies \quad 
  \left\{  \displaystyle\lim_{t \uparrow t_e} \theta^{u,\star}_{t} \left( \omega \right) 
    = \left[ - \infty, \dots, - \infty \right]^{\top} , \quad
    \displaystyle\lim_{t \uparrow t_e} S^{exc}_{t} \left( \omega \right) = - \infty
  \right\} \notag.
\end{align}
 Moreover, 
 \begin{enumerate}[label=\roman*)] 

\item \label{POS EV SS lem B FIN} 
Either (\ref{lamb > 1 final lem limit 1}) or (\ref{lamb > 1 final lem limit 2}) holds $\tilde{P}$-a.s.

\item \label{POS EV SS lem C FIN} 
At $t_e - \rho$, 
the events (\ref{lamb > 1 final lem limit 1}) and (\ref{lamb > 1 final lem limit 2})
both have positive $\tilde{P}$-probability; however, the $\tilde{P}$-probability of one event tends
to 1 (while the other tends to 0) if we let $\rho \downarrow 0$. 

 \end{enumerate} 

\end{thm}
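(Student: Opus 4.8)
\emph{Setup of the plan.} The plan is to argue pathwise on the event that $\tilde W_{\cdot}(\omega)$ is continuous, and to treat four pieces: convergence of the inventory vector; the synchronization claim; the explicit leading term of $\theta^{u,\star}$ when $\lambda>1$; and the dichotomy (\ref{POS EV SS lem B FIN})--(\ref{POS EV SS lem C FIN}). Throughout I exploit semi-symmetry: by Definitions \ref{Phi j defn} and \ref{semi symm defn} each of $A(t)$ and $B(t)$ equals a scalar times $I_K$ plus a scalar times $\mathbf{1}\mathbf{1}^{\top}$, so $\mathbf{1}$ and $\mathbf{1}^{\perp}$ are invariant subspaces and $\det A(t)$ factors accordingly; moreover $v_K=\mathbf{1}$, $\{v_1,\dots,v_{K-1}\}$ spans $\mathbf{1}^{\perp}$, $P(t_e)=I_K$ (Lemma \ref{EV comps SS FIN}), $\Phi(t_e)=2/(K\tilde\eta_{tem}-\eta_{tem})$ (Remark \ref{te impl formula FIN}), $\dot\Phi(t_e)<0$ (Lemma \ref{Phi j props lem}), and $\lambda$ is given by the formula in Lemma \ref{EV comps SS FIN}.

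\emph{Convergence of $X^{\theta^\star}$.} The certain agents' inventories are deterministic and continuous on $[0,T)$ (Corollary \ref{cert agents opt strats}), so only the uncertain block matters, and there I use Proposition \ref{opt trad speeds uncer}. For $j\le K-1$ the key point is that $F_j(\cdot,\omega)$ vanishes at $t_e$: the only symmetry-breaking parameters $x_i,\mu_i,S_{i,0}$ enter $C$ through time-independent vectors whose $\mathbf{1}^{\perp}$-parts, after multiplication by $\mathrm{adj}\,A(t)$, pick up a scalar factor equal to a nonzero constant times the $\mathbf{1}$-eigenvalue of $A(t)$, which vanishes to order one at $t_e$ (Lemma \ref{te study lem SS FIN}); hence $F_j(s,\omega)=O(|s-t_e|)$ with deterministic coefficient, so $\int_{t_e-\rho}^{t}\frac{F_j(s,\omega)}{|s-t_e|}\,ds$ converges as $t\uparrow t_e$. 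For $j=K$, the factor $|t-t_e|^{\lambda}$ with $\lambda>0$ kills $|t-t_e|^{\lambda}y_K$; writing $F_K=F_{K,1}+F_{K,2}$ (the deterministic part plus the part carrying $\tilde W_{\cdot}(\omega)$), the $F_{K,1}$-contribution converges by a Taylor estimate, and after the time reversal $B_r:=\tilde W_{t_e-r}(\omega)-\tilde W_{t_e}(\omega)$ (a Brownian motion under $\tilde P$), $u:=t_e-t$, the $F_{K,2}$-contribution equals, up to lower order, a constant times $u^{\lambda}\int_u^{\rho}\frac{B_r}{r^{1+\lambda}}\,dr$, which integration by parts rewrites as $\frac{1}{\lambda}B_u+\frac{1}{\lambda}u^{\lambda}\int_u^{\rho}r^{-\lambda}\,dB_r+o(1)$; since $B_u\to0$ and the stochastic integral has variance $u^{2\lambda}\int_u^{\rho}r^{-2\lambda}\,dr=O(u)\to0$, the whole term tends to $0$. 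With $P(t_e)=I_K$, the limit of $X^{\theta^\star}_t(\omega)$ exists $\tilde P$-a.s.

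\emph{Synchronization and the leading term.} Write $\theta^{u,\star}_t(\omega)=A(t)^{-1}(B(t)X^{u,\theta^\star}_t(\omega)+C(t,\omega))$. On $\mathbf{1}^{\perp}$, $A(t)^{-1}$ is bounded near $t_e$ and $B(t)X+C$ converges, so the $\mathbf{1}^{\perp}$-components of $\theta^{u,\star}$ stay bounded; hence any divergence is confined to the line $\mathbf{1}$ and all coordinates must diverge together with a common sign. The $\mathbf{1}$-component equals $\pi(t,\omega)/d(t)$, where $\pi(t,\omega):=\frac{1}{K}\mathbf{1}^{\top}(B(t)X^{u,\theta^\star}_t(\omega)+C(t,\omega))$ and $d(t)$ is the $\mathbf{1}$-eigenvalue of $A(t)$, which has a simple zero at $t_e$; a direct computation with the $\lambda$-formula shows $\pi(t_e,\omega)=0$ identically, so this is a $0/0$ form whose order decides whether $\theta^{u,\star}$ explodes. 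Carrying out the same estimates as before---the integration by parts, together with two cancellations (of the $\mathbf{1}^{\perp}$-forcing coefficient and of the $B_u$-terms), both forced by the $\lambda$-identity---yields $\pi(t,\omega)=-\Phi(t_e)\,u^{\lambda}\int_u^{\rho}r^{-\lambda}\,dB_r+O(u)$, so the $\mathbf{1}$-component of $\theta^{u,\star}_t(\omega)$ is $c_e\,u^{\lambda-1}\int_u^{\rho}r^{-\lambda}\,dB_r+O(1)$ with $c_e>0$ (using $\dot\Phi(t_e)<0$). For $\lambda>1$ a further integration by parts identifies $u^{\lambda-1}\int_u^{\rho}r^{-\lambda}\,dB_r$ with $\lambda\,u^{\lambda-1}\int_u^{\rho}\frac{B_r-B_u}{r^{1+\lambda}}\,dr+o(1)$, i.e.\ in the original variables with $\lambda$ times the quantity appearing in (\ref{lamb > 1 final lem limit 1}); its divergence to $+\infty$ (resp.\ $-\infty$) forces $\theta^{u,\star}_t(\omega)\to[+\infty,\dots,+\infty]^{\top}$ (resp.\ $[-\infty,\dots,-\infty]^{\top}$). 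Substituting $X^{\theta^\star}\to(\text{finite})$ and $\theta^{\star}$ into (\ref{rt true dynamics}), and using that the certain block's contribution and $\tilde W_t(\omega)$ converge and $\tilde\eta_{tem}>0$, $S^{exc}_t(\omega)$ diverges precisely when $\sum_{i\le K}\theta^{u,\star}_{i,t}(\omega)$ does, and in the same direction; this gives the synchronization claim and the $\lambda>1$ implications.

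\emph{The dichotomy and the main obstacle.} By the above, the explosion and its sign are dictated by the Gaussian functional $u^{\lambda-1}\int_u^{\rho}r^{-\lambda}\,dB_r$ of $\tilde W$ near $t_e$. Item (\ref{POS EV SS lem B FIN}) is a zero--one-type statement obtained from the reversed-time continuous-martingale structure of $r\mapsto\int_u^{\rho}r^{-\lambda}\,dB_r$, its Dubins--Schwarz time change, and the scaling invariance of Brownian motion; for (\ref{POS EV SS lem C FIN}), at fixed $\rho$ the functional has a non-degenerate Gaussian law whose support meets both half-lines (so each alternative has positive $\tilde P$-probability), and splitting the integral over $[t_e-\rho,\,t_e-\rho/2]$ and $[t_e-\rho/2,\,t_e]$ and rescaling isolates a $\rho$-boundary term that converges as $\rho\downarrow0$ and whose sign eventually dominates, pushing the probability of one alternative to $1$. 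The real work is the pathwise asymptotics: proving that the $\mathbf{1}^{\perp}$-forcing genuinely vanishes at $t_e$ (a structural fact that collapses without semi-symmetry) and then extracting the exact leading order of the $0/0$ quotient $\pi(t,\omega)/d(t)$---verifying the two $\lambda$-driven cancellations and controlling $u^{\lambda-1}\int_u^{\rho}r^{-\lambda}\,dB_r$ finely enough to read off both the explosion criterion and its almost-sure dichotomy.
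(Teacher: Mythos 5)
Your argument is correct in outline and arrives at exactly the same key Gaussian functional as the paper, but it gets there by a genuinely different route. The paper proves Theorems \ref{SS main blow up lem FIN} and \ref{no Inv expl lem SS FIN} together by differentiating the variation-of-parameters representation (\ref{gen limit diag R main blow up KZ}) from Proposition \ref{opt trad speeds uncer}, splitting $F_K(t,\omega)=\tilde W_t(\omega)F_{K,1}(t)+F_{K,2}(t)$, and then killing every term of the resulting expression (\ref{term by term explo pos lamb}) except one by deterministic Lipschitz bounds, so that for $\lambda>1$ only $\left| t-t_e\right|^{\lambda-1}\int_{t_e-\rho}^{t}\frac{\tilde W_s(\omega)-\tilde W_t(\omega)}{\left| s-t_e\right|^{1+\lambda}}\,ds$ survives; the dichotomy is then read off from an ordinary (Riemann) integration by parts and a variance computation. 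You instead work directly with the algebraic system $A(t)\theta^{u,\star}_t=B(t)X^{u,\theta^\star}_t+C(t,\omega)$ on the invariant subspaces $\mathbf{1}$ and $\mathbf{1}^{\perp}$, resolve the $\mathbf{1}$-component as a $0/0$ quotient $\pi(t,\omega)/d(t)$ (with $\pi(t_e,\omega)=0$ forced by convergence of the inventory), and use the reversed-time Brownian motion $B_r=\tilde W_{t_e-r}(\omega)-\tilde W_{t_e}(\omega)$ with It\^o integration by parts to exhibit the leading term as $c_e\,u^{\lambda-1}\int_u^{\rho}r^{-\lambda}\,dB_r$. Your route buys two things: the synchronization claim becomes immediate (boundedness of $A^{-1}$ on $\mathbf{1}^{\perp}$ near $t_e$), and the driving randomness is identified as a genuine stochastic integral, which is the natural object for the Dubins--Schwarz/scaling discussion of items \ref{POS EV SS lem B FIN})--\ref{POS EV SS lem C FIN}); the paper's route avoids stochastic calculus and treats both signs of $\lambda$ from a single formula. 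One caution: in two places you pass from a variance estimate to an almost-sure statement --- the claim that $u^{\lambda}\int_u^{\rho}r^{-\lambda}\,dB_r\to 0$ a.s.\ (needed for convergence of $X^{u,\theta^\star}$) and the dichotomy in item \ref{POS EV SS lem B FIN}). Variance $O(u)$ only gives convergence in probability along the continuum $u\downarrow 0$; you need the law of the iterated logarithm applied to the time-changed Brownian motion $\beta_{V_u}$ with $V_u=\int_u^{\rho}r^{-2\lambda}\,dr$ to make the first a.s., and the second deserves the same explicit treatment. The paper is no more rigorous at these exact points (it concludes from ``the variance grows like $\left| t-t_e\right|^{-1}$''), so this is a shared soft spot rather than a defect peculiar to your argument, but since you have already set up the martingale/time-change machinery you should carry it through.
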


\begin{rem}\label{lamb > 0 but bet 0 1 rem}
We make no rigorous statement regarding the $\lambda \in \left( 0 , 1 \right)$ case. 
Most of Theorem~\ref{no Inv expl lem SS FIN}'s proof 
would still be valid (see Appendix \ref{SS main blow up lem FIN PRF SS}); 
however, the final estimates are especially convenient 
when $\lambda > 1$ 
(see (\ref{term by term explo pos lamb}) - (\ref{int by parts lamb > 1})). 
The over-arching purpose of Theorem~\ref{no Inv expl lem SS FIN}
is only to illustrate that mini-flash crashes
can occur in low trading volume environments (see Section \ref{back contrib sect}). 
Nevertheless, we suspect that mini-flash crashes might unfold when $\lambda \in \left( 0 , 1 \right)$, e.g., 
see Section \ref{case 2 subsect} and (\ref{term by term explo pos lamb}) - (\ref{int by parts lamb > 1}).

\end{rem}

\begin{proof}
See Appendix \ref{SS main blow up lem FIN PRF SS}.
\end{proof}

\section{Numerical illustrations}\label{sec:numerical}

\subsection{Example 1: No mini-flash crash}\label{case 1 subsect}

Our mini-flash crashes do not always occur
(see Lemmas \ref{det A no root no blow up} and \ref{te study lem SS FIN}). 
In Section \ref{case 1 subsect}, 
we illustrate this by numerically 
simulating a scenario in which 
$\det A$ has no root on $\left[ 0 , T \right]$.

By Lemma \ref{te study lem SS FIN} and (\ref{det A semi symm tild FIN}), 
we know that $\det A$ is non-vanishing on $\left[ 0 , T \right]$ if and only if 
\begin{equation}\label{no root detA exmp 1 cond}
\left( K \tilde{\eta}_{tem} - \eta_{tem} \right) \Phi \left( 0 \right) < 2. 
\end{equation}
One selection of parameters for which (\ref{no root detA exmp 1 cond}) is satisfied is
\begin{align}\label{no root det A params}
\begin{array}{llll}
N = 3,&  K = 2, & T = 1, & S_0 = 100, \\
 \tilde{\beta} = 1, & \tilde{\eta}_{tem} = 1,  & \eta_{tem} = 0.75,  &\tilde{\eta}_{per} = 1, \\
\eta_{per} = 1, & \nu^2 = 2,  &\kappa = 5 , & x_1 = 2,\\
  x_2 = -2, & \mu_1 = 15, &\mu_2 = -10, & S_{1,0} = 100,\\
  S_{2,0} = 100, & \tilde{\eta}_{3,tem} = 1,  & \eta_{3,tem} = 1,  &\tilde{\eta}_{3,per} = 1, \\
 \mu_3 = -3, & \nu_3^2 = 2,  &\kappa_3 = 5, & x_3 = 2 . \\
\end{array}
\end{align}
In fact, the LHS of (\ref{no root detA exmp 1 cond}) then equals 1.1095. 
Observe that there is no need to specify $\eta_{3,per}$ and $S_{3,0}$ as they are irrelevant 
(see  Corollary \ref{cert agents opt strats}, 
Definition \ref{Phi j defn}, and Lemma \ref{act 1st order ODE lem}).
Again, our purposes are only illustrative here, 
and we leave the reproduction of a specific practically meaningful scenario for a future work.

Since $K = 2$ and $N = 3$, we have two uncertain agents and one certain agent in 
the coming plots. 
We label the corresponding curves with $U1$, $U2$, and $C1$. 
For example, the label $U1$ will signify a quantity for Agent 1, the first uncertain agent. 
In Figures \ref{fig: Opt_Inv_NoRoot_MAX}
and  \ref{fig: Opt_Spd_NoRoot_MAX}, 
we plot 
inventories and trading rates. 
The execution price is 
depicted in Figure \ref{fig: ExcPri_NoRoot_MAX}.

The diagrams exhibit all of the important qualities that 
we expect based upon our theoretical results. 
Here are a few key features:
\begin{enumerate}[label=\roman*)]

\item All agents liquidate their positions by the terminal time $T$ 
(see (\ref{liq in right mod too}) and Figure \ref{fig: Opt_Inv_NoRoot_MAX}). 

\item $S^{exc} \left( \omega \right)$, the $X^{\theta_{j}^\star}_{j} \left( \omega \right)$'s and 
the $\theta^{\star}_{j} \left( \omega \right)$'s are all continuous on $\left[ 0, T \right]$ 
(see Proposition \ref{det A no root no blow up} and Figures \ref{fig: Opt_Inv_NoRoot_MAX} - \ref{fig: ExcPri_NoRoot_MAX}).  

\item The uncertain agents' trading rates appear to 
exhibit a Brownian component (see Lemma \ref{1 pl soln acc to play model} and Figure \ref{fig: Opt_Spd_NoRoot_MAX}).

\item The certain agent's trading rate appears to be smooth on $\left[ 0 , T \right]$ 
(see Corollary \ref{cert agents opt strats} and Figure \ref{fig: Opt_Spd_NoRoot_MAX}).

\item
The agents need not 
either strictly buy or strictly sell throughout $\left[ 0, T \right]$ 
(see Figure \ref{fig: Opt_Spd_NoRoot_MAX}).

\item
Even so, the agents may decide 
to strictly buy or strictly sell throughout $\left[ 0, T \right]$ 
(see Figure \ref{fig: Opt_Spd_NoRoot_MAX}).

\item 
The uncertain agents' trading rates do not appear to synchronize 
(see Figure \ref{fig: Opt_Spd_NoRoot_MAX}).

\end{enumerate}



\begin{figure}[!tbp]
  \centering
 \includegraphics[scale=0.37]{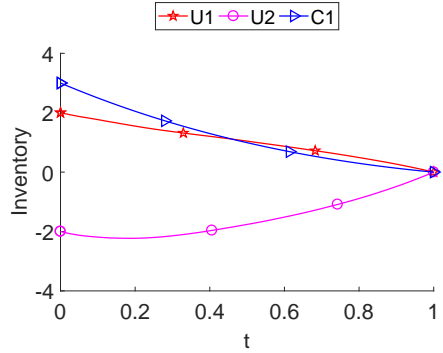}
  \caption{Depiction of the agents' inventories in Section \ref{case 1 subsect}.}
  \label{fig: Opt_Inv_NoRoot_MAX}
\end{figure}

\begin{figure}[!tbp]
  \centering
 \includegraphics[scale=0.37]{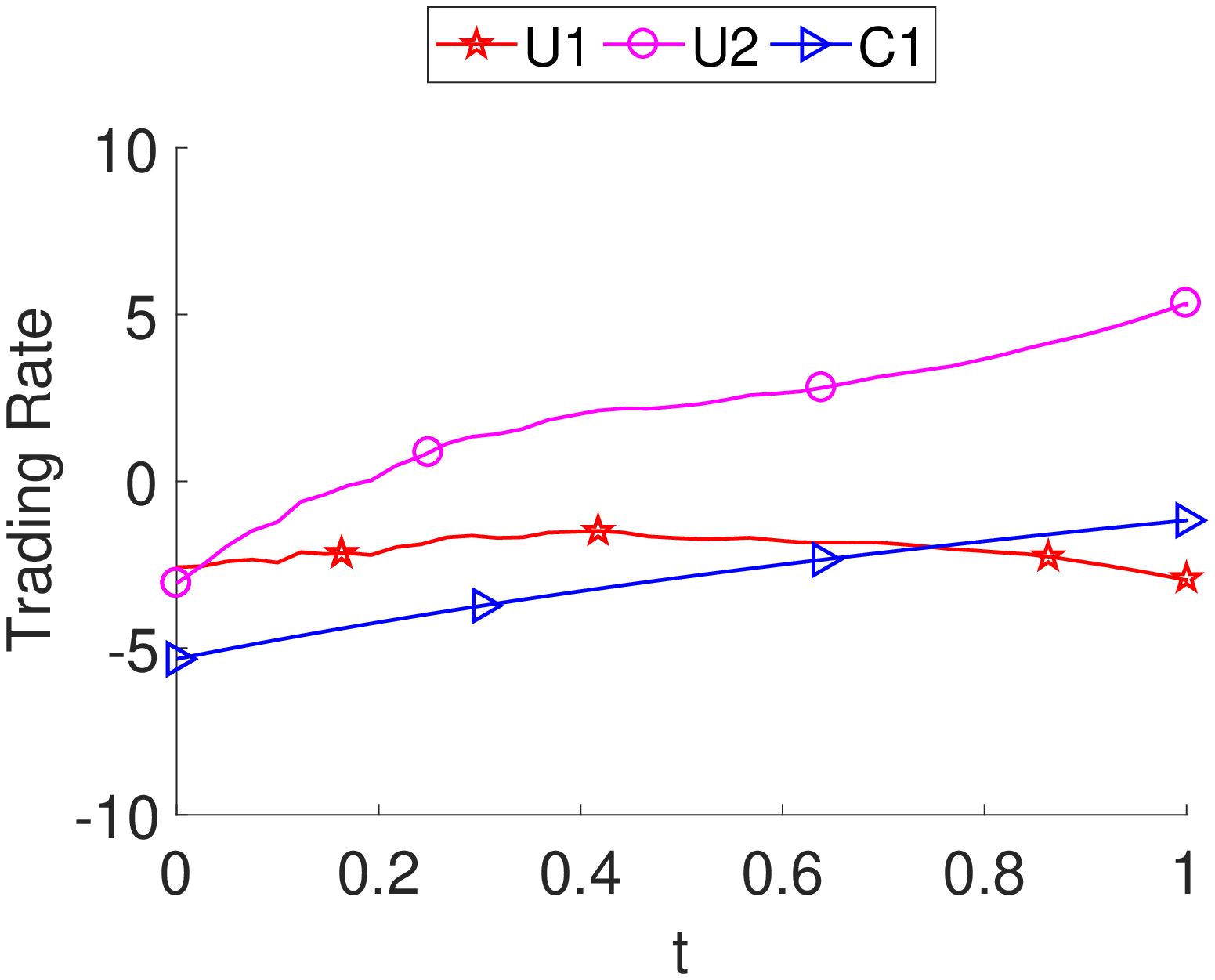}
  \caption{Depiction of the agents' trading rates in Section \ref{case 1 subsect}.}
  \label{fig: Opt_Spd_NoRoot_MAX}
\end{figure}


\begin{figure}[!tbp]
  \centering
 \includegraphics[scale=0.37]{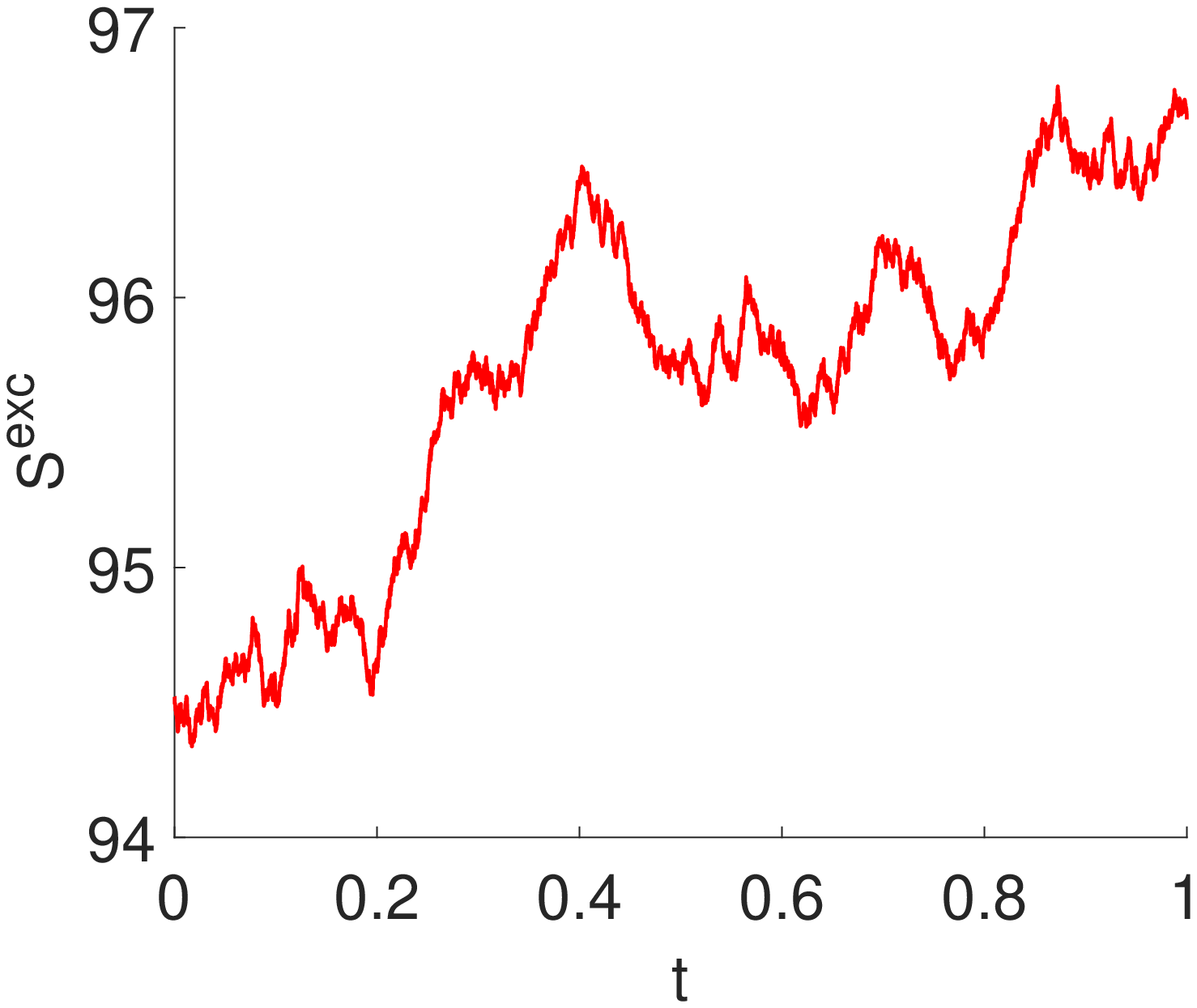}
  \caption{Depiction of the execution price in Section \ref{case 1 subsect}.}
  \label{fig: ExcPri_NoRoot_MAX}
\end{figure}

\begin{figure}[!tbp]
  \centering
\includegraphics[scale=0.37]{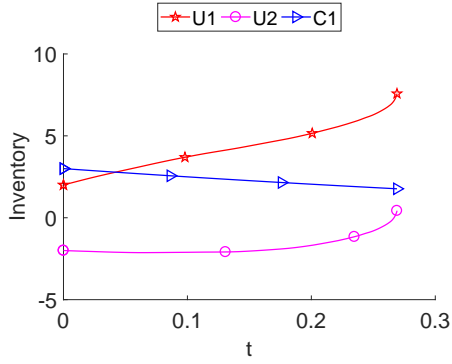}
  \caption{Depiction of the agents' inventories in Section \ref{case 2 subsect}.}
  \label{fig: Opt_Inv_PosEV_Expl_MAX}
\end{figure}

\subsection{Example 2: A mini-flash crash with low trading volume}\label{case 2 subsect}

Our mini-flash crashes can be accompanied by low trading volumes 
(see Theorem~\ref{no Inv expl lem SS FIN}). 
In Section \ref{case 2 subsect}, we visualize this 
by studying a concrete scenario in which 
$\det A$ has a root on $\left[ 0 , T \right]$; 
$t_e > 0$; 
the zero of $\det A$ at $t_e$ is of multiplicity 1; 
$\lambda \not \in \mathbb{Z}$; 
and $\lambda > 0$.
The behavior of the $X^{\theta_{j}^\star}_{j} \left( \omega \right)$'s
is then characterized by 
Corollary \ref{cert agents opt strats}
and
Theorem~\ref{no Inv expl lem SS FIN}. 
Theorem~\ref{no Inv expl lem SS FIN} would rigorously 
describe $S^{exc}_t \left( \omega \right)$
and the $\theta^{\star}_{j,t} \left( \omega \right)$'s
as $t \uparrow t_e$,
if $\lambda > 1$. 
To improve the quality of our plots, 
we consider a situation where $\lambda \in \left( 0 ,1 \right)$
instead
(see Remark \ref{lamb > 0 but bet 0 1 rem}).

By Lemmas \ref{te study lem SS FIN} and \ref{EV comps SS FIN}, 
we must select parameters such that (\ref{semi-symmetric gull te in 0 T FIN REWRITE})
is satisfied and 
\begin{equation}\label{lambda calc exmp SS subsect 2}
\lambda = 
\displaystyle\frac{
2
\left[
\displaystyle\sqrt{\displaystyle\frac{ \kappa }{ \eta_{tem}}}   \coth \left( \tau \left( t_e \right) \right)  
-  2 \left(  \displaystyle\frac{ K \tilde{\eta}_{per} - \eta_{per} }{ K \tilde{\eta}_{tem} - \eta_{tem} } \right)
\right] 
}
{ \left( K \tilde{\eta}_{tem} - \eta_{tem} \right) \dot{\Phi} \left( t_e \right)  } 
\end{equation}
is a positive non-integer. 
We can keep most of our choices in (\ref{no root det A params}) the same and only make a few revisions:
\begin{align}\label{pos Lambda new params}
\begin{array}{lll}
\tilde{\eta}_{tem} = 0.5,  & \eta_{tem} = 0.2,  &\tilde{\eta}_{per} = 0.8, \\
\eta_{per} = 0.025, & \nu^2 = 3,  &\kappa = 1 . 
\end{array}
\end{align}
As in Section \ref{case 1 subsect}, we do not seek to replicate a particular historical situation. 
We immediately get (\ref{semi-symmetric gull te in 0 T FIN REWRITE}), as its LHS is 4.3302. 
Using Remark \ref{te impl formula FIN} and (\ref{lambda calc exmp SS subsect 2}), we can show that 
\begin{equation*}
t_e = 0.2691
\quad \text{and} \quad 
\lambda = 0.5939 .
\end{equation*}

Again,  
we have two uncertain agents and one certain agent. 
We retain the $\left\{ U1, U2, C1\right\}$- labeling system from Section \ref{case 1 subsect}. 
The inventories, trading rates, and execution price are plotted in
Figures \ref{fig: Opt_Inv_PosEV_Expl_MAX} - \ref{fig: ExcPri_PosEV_Expl_MIN_MAX}. 
To aid our illustration, 
we truncate the time domains 
in Figures \ref{fig: Opt_Spd_PosEV_Expl_MIN_MAX} - \ref{fig: ExcPri_PosEV_Expl_MIN_MAX}
to
\begin{equation*}
\left[ 0, 0.75 \left( t_e - 10^{-6} \right) \right]
\quad \text{and} \quad 
\left[ 0, t_e - 10^{-6} \right]
\end{equation*}
for the left and right plots, respectively.


The qualities that we expect based upon Theorem~\ref{no Inv expl lem SS FIN}, 
and Remark \ref{lamb > 0 but bet 0 1 rem} are all present. 
We offered
some applicable 
comments in Section \ref{case 1 subsect}, 
so we only add a few new observations here. 

\begin{enumerate}[label=\roman*)]

\item All agents' inventories approach a finite limit as $t \uparrow t_e$
(see Theorem~\ref{no Inv expl lem SS FIN} and Figure \ref{fig: Opt_Inv_PosEV_Expl_MAX}). 

\item The execution price and the uncertain agents' trading rates explode as $t \uparrow t_e$
(see 
Theorem~\ref{no Inv expl lem SS FIN}, 
Remark \ref{lamb > 0 but bet 0 1 rem} and 
Figures \ref{fig: Opt_Spd_PosEV_Expl_MIN_MAX} - \ref{fig: ExcPri_PosEV_Expl_MIN_MAX}).

\item The uncertain agents' trading 
rates synchronize as $t \uparrow t_e$ 
(see 
Theorem~\ref{no Inv expl lem SS FIN}, 
Remark \ref{lamb > 0 but bet 0 1 rem}, and Figure \ref{fig: Opt_Spd_PosEV_Expl_MIN_MAX}).

\item
That an explosion in $S^{exc} \left( \omega \right)$ will occur as well as its direction becomes increasingly obvious as $t \uparrow t_e$; however, it is not necessarily clear at first 
(see
Theorem~\ref{no Inv expl lem SS FIN}, 
Remark \ref{lamb > 0 but bet 0 1 rem},
 and Figure \ref{fig: ExcPri_PosEV_Expl_MIN_MAX}).

\end{enumerate}



\begin{figure}[!tbp]
  \centering
  \subfloat{\includegraphics[scale=0.37]{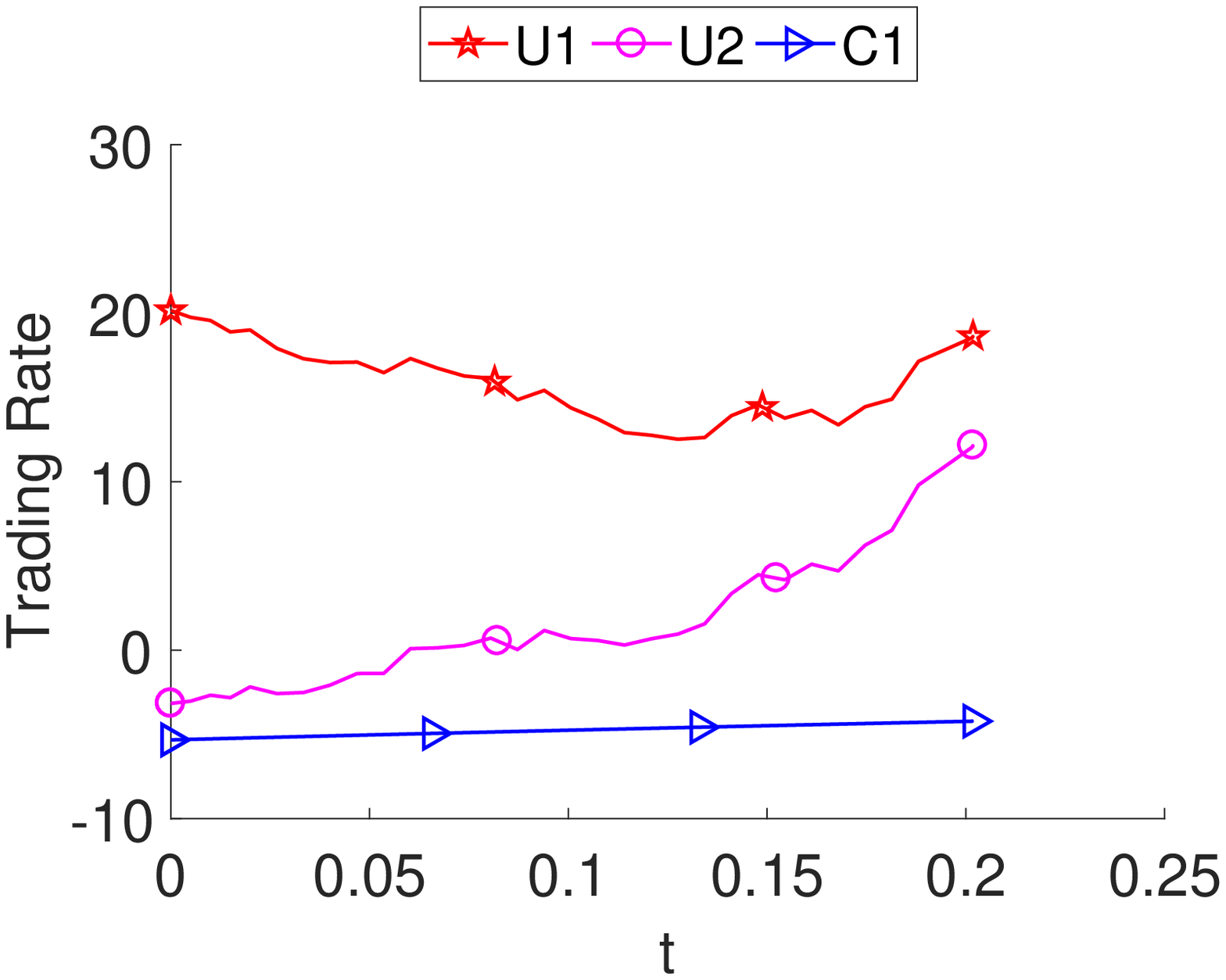}}
  \hfill
  \subfloat{\includegraphics[scale=0.37]{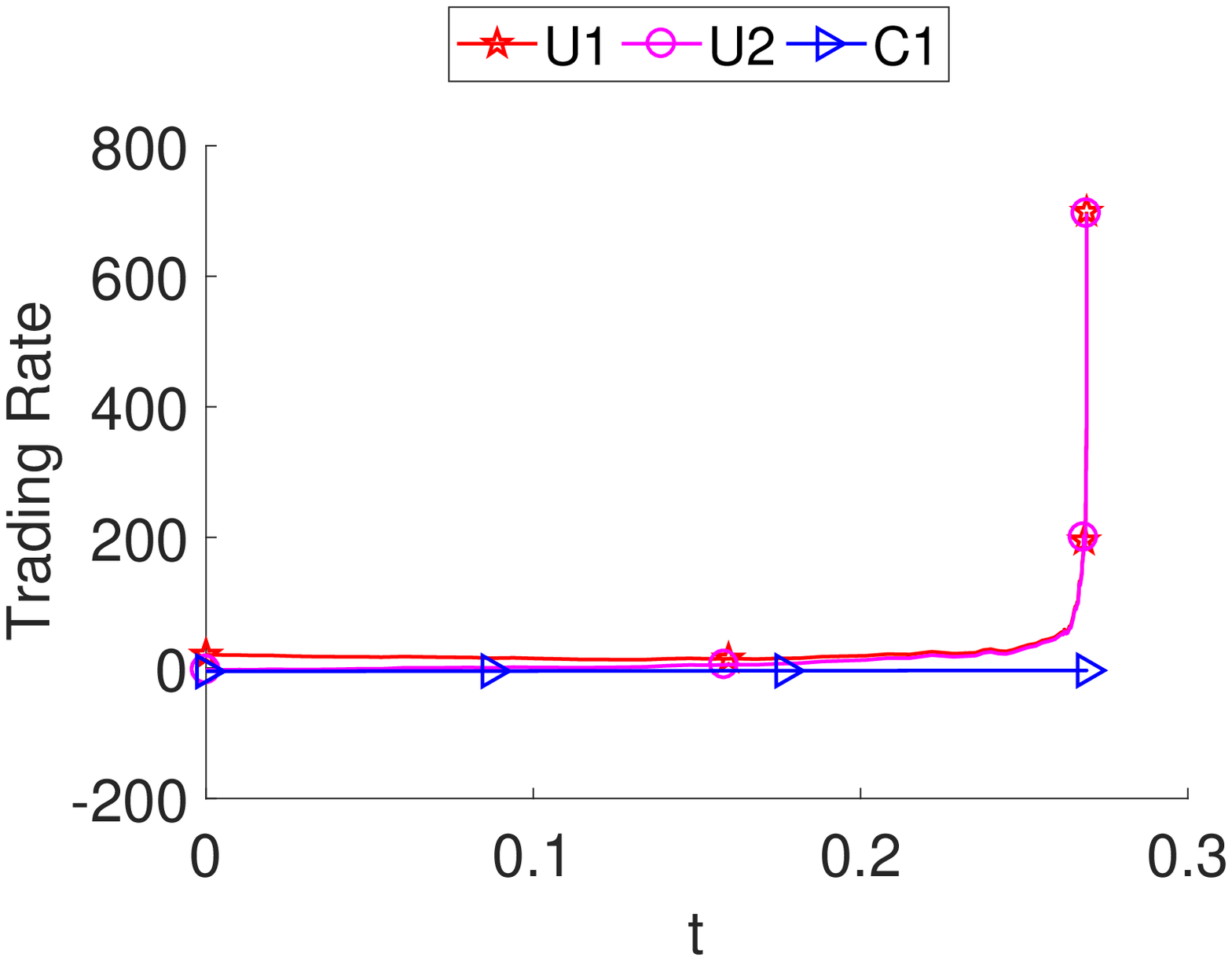}}
  \caption{Depiction of the agents' trading rates in Section \ref{case 2 subsect}.}
  \label{fig: Opt_Spd_PosEV_Expl_MIN_MAX}
\end{figure}



\begin{figure}[!tbp]
  \centering
  \subfloat{\includegraphics[scale=0.37]{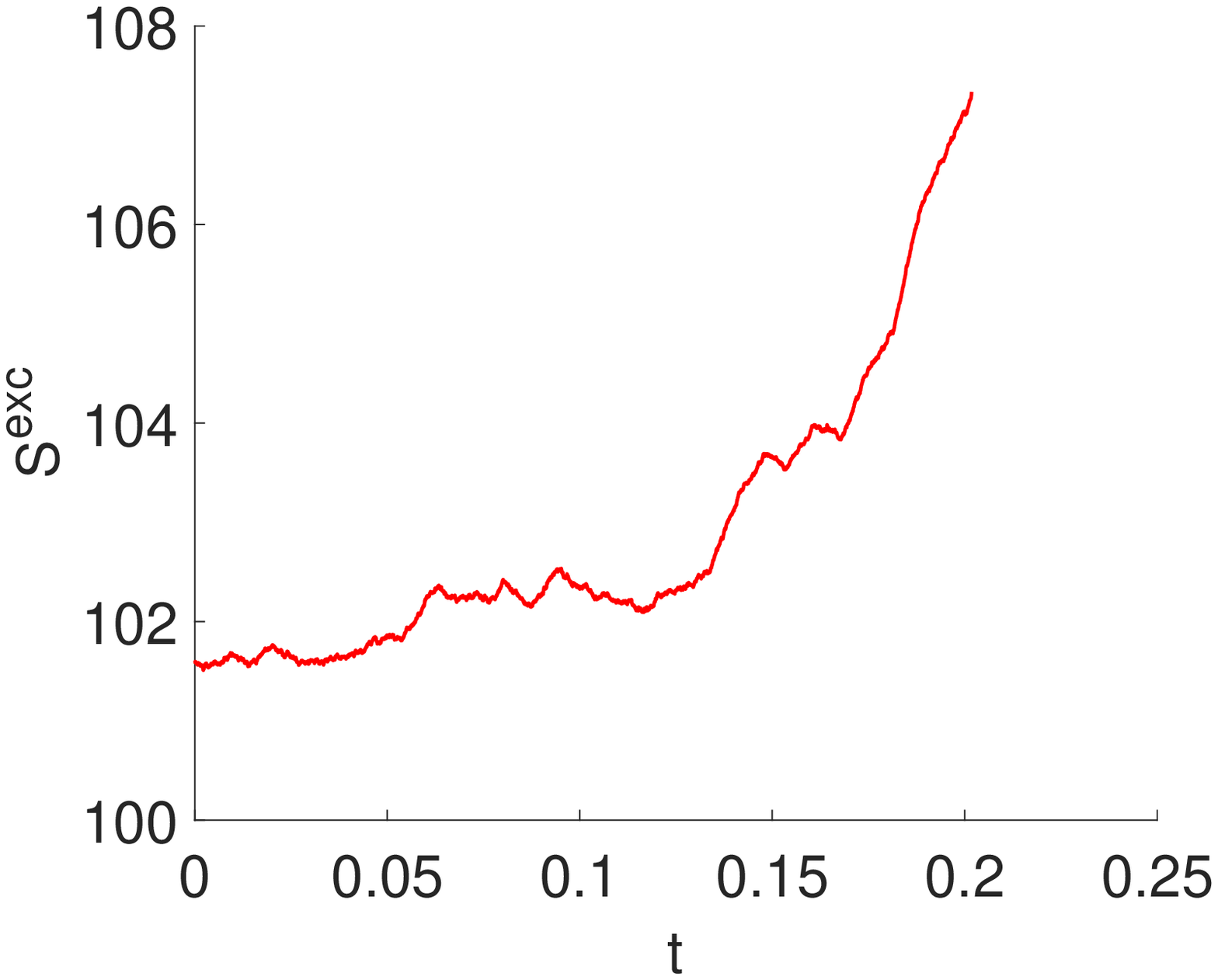}}
  \hfill
  \subfloat{\includegraphics[scale=0.37]{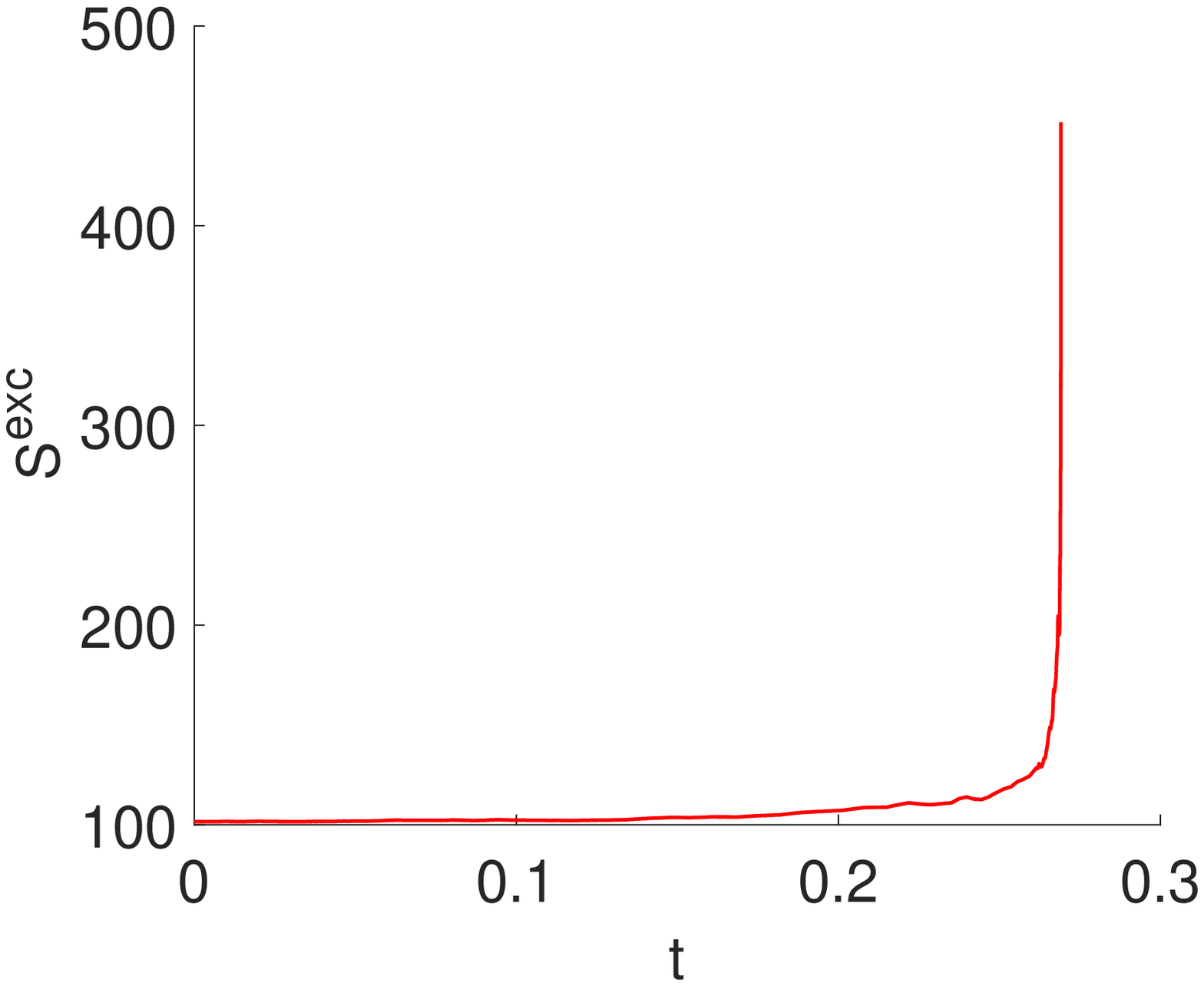}}
  \caption{Depiction of the execution price in Section \ref{case 2 subsect}.}
  \label{fig: ExcPri_PosEV_Expl_MIN_MAX}
\end{figure}


\subsection{Example 3: A mini-flash crash with high trading volume}\label{case 3 subsect}

Our mini-flash crashes can also be accompanied by high 
trading volumes (see Theorem~\ref{SS main blow up lem FIN}). 
We illustrate this in Section \ref{case 3 subsect}
by simulating a case in which 
$\det A$ has a root on $\left[ 0 , T \right]$; 
$t_e > 0$; 
the zero of $\det A$ at $t_e$ is of multiplicity 1; 
$\lambda \not \in \mathbb{Z}$; 
and $\lambda < 0$.
The behaviors of 
$S^{exc} \left( \omega \right)$,
the  $X^{\theta_{j}^\star}_{j} \left( \omega \right)$'s, 
 and the $\theta^{\star}_{j} \left( \omega \right)$'s
 are then described by 
 Corollary \ref{cert agents opt strats}
and 
Theorem~\ref{SS main blow up lem FIN}.

We especially wish to emphasize
the stochastic explosion direction and
do this in two ways.

First, we choose the same deterministic 
parameters to 
create
Figures \ref{fig: Opt_Inv_NegEV_Expl_MIN_MAX_DOWN}
- \ref{fig: ExcPri_NegEV_Expl_MIN_MAX_UP}. 
The difference is that one realization of $\tilde{W}_{\cdot}$ is used
in Figures \ref{fig: Opt_Inv_NegEV_Expl_MIN_MAX_DOWN} - \ref{fig: ExcPri_NegEV_Expl_MIN_MAX_DOWN}, 
while another is used in
Figures \ref{fig: Opt_Inv_NegEV_Expl_MIN_MAX_UP} - \ref{fig: ExcPri_NegEV_Expl_MIN_MAX_UP}.
We denote the corresponding $\omega$'s by $\omega_{up}$ and $\omega_{dn}$, 
since there 
are spikes and crashes
in the former and latter plots, respectively.

Second, Figures \ref{fig: Opt_Inv_NegEV_Expl_MIN_MAX_DOWN} - \ref{fig: ExcPri_NegEV_Expl_MIN_MAX_DOWN}
themselves suggest that the explosion direction is random. 
This is particularly true in 
Figures \ref{fig: Opt_Spd_NegEV_Expl_MIN_MAX_DOWN} - \ref{fig: ExcPri_NegEV_Expl_MIN_MAX_DOWN}, 
since we initially notice that 
the price rapidly rises as 
the uncertain agents' buying rates synchronize. 
Only moments before the mini-flash crash
do we see the price collapsing
and the
uncertain agents' aggressively selling together.

Now, we need to choose parameters such that (\ref{semi-symmetric gull te in 0 T FIN REWRITE})
is satisfied and 
\begin{equation*}
\lambda = 
\displaystyle\frac{
2
\left[
\displaystyle\sqrt{\displaystyle\frac{ \kappa }{ \eta_{tem}}}   \coth \left( \tau \left( t_e \right) \right)  
-  2 \left(  \displaystyle\frac{ K \tilde{\eta}_{per} - \eta_{per} }{ K \tilde{\eta}_{tem} - \eta_{tem} } \right)
\right] 
}
{ \left( K \tilde{\eta}_{tem} - \eta_{tem} \right) \dot{\Phi} \left( t_e \right)  } 
\end{equation*}
is a negative non-integer due to Lemmas \ref{te study lem SS FIN} and \ref{SS main blow up lem FIN}. 
Compared to Section \ref{case 2 subsect}, we set
\begin{equation*}
\tilde{\eta}_{per} = 0.5, \quad \eta_{per} = 0.5
\end{equation*}
and keep every other parameter the same. 
As in Sections \ref{case 1 subsect} - \ref{case 2 subsect}, 
we do not have in mind a special historical example here. 
Since we have only changed $\tilde{\eta}_{per}$ and $\eta_{per}$, the 
values of 
$\left( K \tilde{\eta}_{tem} - \eta_{tem} \right) \Phi \left( 0 \right)$
and $t_e$ 
do not differ from Section \ref{case 2 subsect}; however, $\lambda$ is now negative:
\begin{equation*}
\left( K \tilde{\eta}_{tem} - \eta_{tem} \right) \Phi \left( 0 \right) = 4.3302, 
\quad 
t_e = 0.2691, 
\quad \text{and} \quad
\lambda = -0.4531. 
\end{equation*}

The numbers of uncertain and certain agents 
are still two and one, respectively.
We also retain the $\left\{ U1, U2, C1 \right\}$-labeling system
from Sections \ref{case 1 subsect} - \ref{case 2 subsect}.
 Figures \ref{fig: Opt_Inv_NegEV_Expl_MIN_MAX_DOWN} and \ref{fig: Opt_Inv_NegEV_Expl_MIN_MAX_UP}
depict the agents' inventories. 
We plot the agents' trading rates in Figures \ref{fig: Opt_Spd_NegEV_Expl_MIN_MAX_DOWN} and 
\ref{fig: Opt_Spd_NegEV_Expl_MIN_MAX_UP}. 
The execution price appears in Figures \ref{fig: ExcPri_NegEV_Expl_MIN_MAX_DOWN} and
\ref{fig: ExcPri_NegEV_Expl_MIN_MAX_UP}. 
To help with our visualization, the time domains in the 
left plots in Figures \ref{fig: Opt_Inv_NegEV_Expl_MIN_MAX_DOWN} - \ref{fig: ExcPri_NegEV_Expl_MIN_MAX_DOWN}
and 
Figures \ref{fig: Opt_Inv_NegEV_Expl_MIN_MAX_UP} - \ref{fig: ExcPri_NegEV_Expl_MIN_MAX_UP}
are truncated to 
$\left[ 0 , 0.94\left( t_e - 10^{-6} \right) \right]$
and 
$\left[ 0 , 0.75\left( t_e - 10^{-6} \right) \right]$, 
respectively.


Our observations regarding 
Figures \ref{fig: Opt_Inv_NegEV_Expl_MIN_MAX_DOWN} - \ref{fig: ExcPri_NegEV_Expl_MIN_MAX_UP}
are in agreement 
with Corollary \ref{cert agents opt strats} and Theorem~\ref{SS main blow up lem FIN}. 
We have already made note of many important aspects in Sections \ref{case 1 subsect} - \ref{case 2 subsect}
and only remark upon the new details.

\begin{enumerate}[label=\roman*)]

\item The execution price, as well as the uncertain agents' inventories and trading rates, all explode in 
the same direction as $t \uparrow t_e$ 
(see Theorem~\ref{SS main blow up lem FIN} and Figures \ref{fig: Opt_Inv_NegEV_Expl_MIN_MAX_DOWN} - \ref{fig: ExcPri_NegEV_Expl_MIN_MAX_UP}).

\item The explosions take place at the deterministic time $t_e$
(see Theorem~\ref{SS main blow up lem FIN} and Figures \ref{fig: Opt_Inv_NegEV_Expl_MIN_MAX_DOWN} - \ref{fig: ExcPri_NegEV_Expl_MIN_MAX_UP}).

\item The explosion direction depends on $\omega \in \tilde{\Omega}$ 
(see Theorem~\ref{SS main blow up lem FIN} and Figures \ref{fig: Opt_Inv_NegEV_Expl_MIN_MAX_DOWN} - \ref{fig: ExcPri_NegEV_Expl_MIN_MAX_UP}).

\item The 
explosion direction cannot be known with complete certainty before $t_e$ 
(see Theorem~\ref{SS main blow up lem FIN} and Figures \ref{fig: Opt_Inv_NegEV_Expl_MIN_MAX_DOWN} - \ref{fig: ExcPri_NegEV_Expl_MIN_MAX_UP}).

\item
The explosion rates in the price and uncertain agents' trading rates in Section \ref{case 2 subsect}
are slower than in Section \ref{case 3 subsect}
(see 
Figures \ref{fig: Opt_Spd_PosEV_Expl_MIN_MAX} - \ref{fig: ExcPri_PosEV_Expl_MIN_MAX}, 
Figures \ref{fig: Opt_Spd_NegEV_Expl_MIN_MAX_DOWN} - \ref{fig: ExcPri_NegEV_Expl_MIN_MAX_DOWN}, 
and Figures \ref{fig: Opt_Spd_NegEV_Expl_MIN_MAX_UP} - \ref{fig: ExcPri_NegEV_Expl_MIN_MAX_UP}). 
We did not explicitly state this previously; however, this is to be expected 
since trading rates are integrable in Section \ref{case 2 subsect} but not in Section \ref{case 3 subsect}.

\end{enumerate}



\begin{figure}[!tbp]
  \centering
  \subfloat{\includegraphics[scale=0.37]{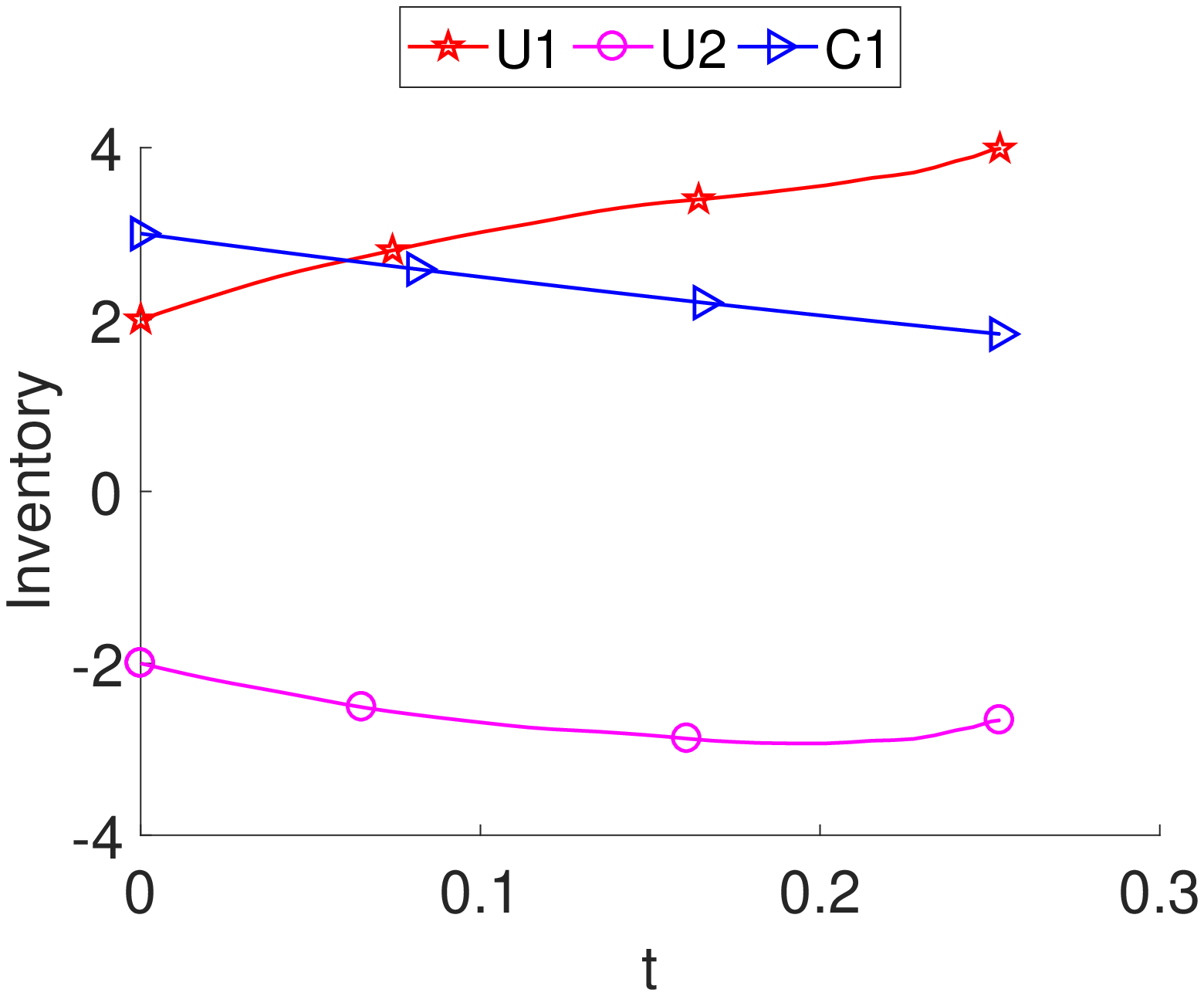}}
  \hfill
  \subfloat{\includegraphics[scale=0.37]{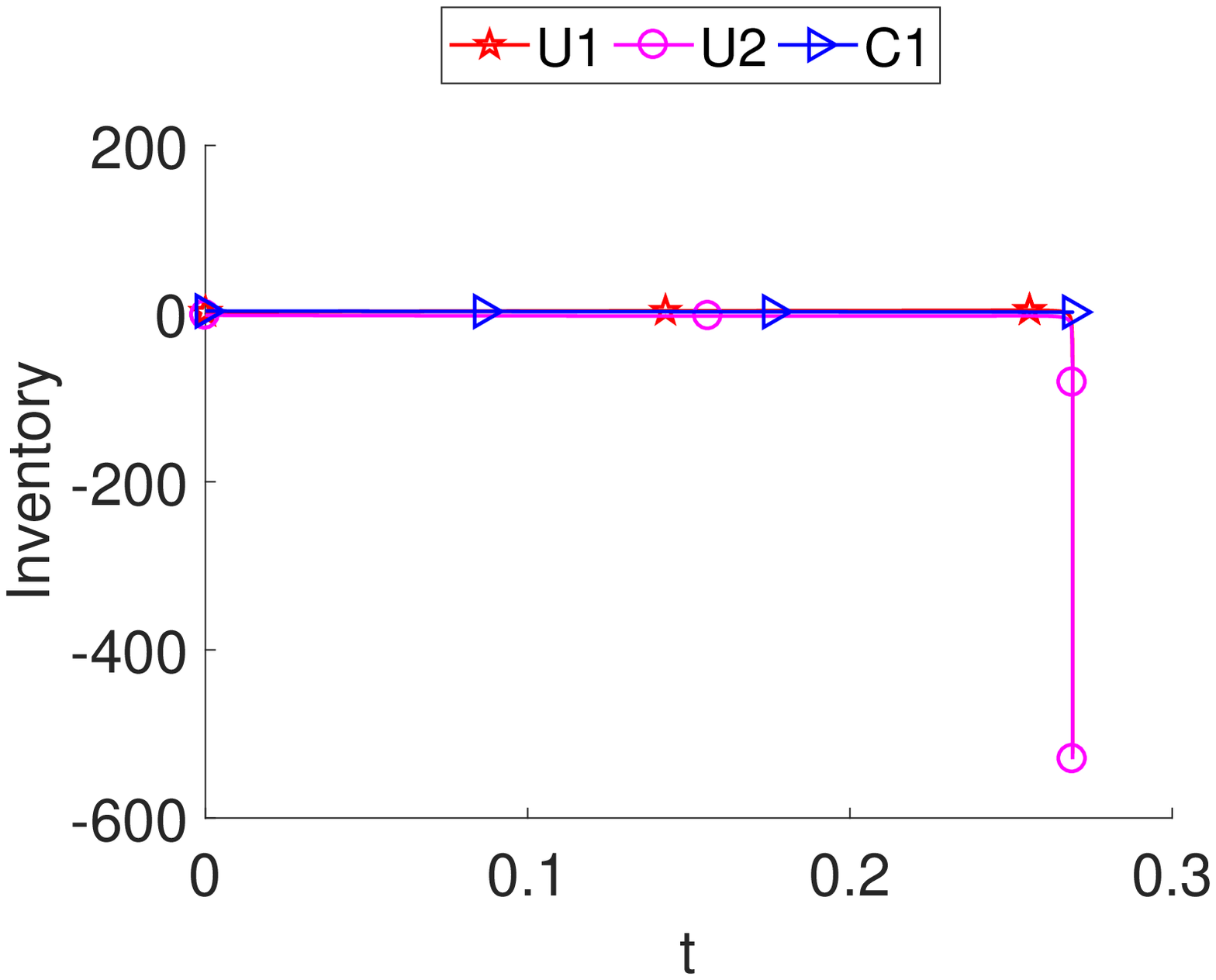}}
  \caption{Depiction of the agents' inventories for $\omega_{dn}$ in Section \ref{case 3 subsect}.}
  \label{fig: Opt_Inv_NegEV_Expl_MIN_MAX_DOWN}
\end{figure}



\begin{figure}[!tbp]
  \centering
  \subfloat{\includegraphics[scale=0.37]{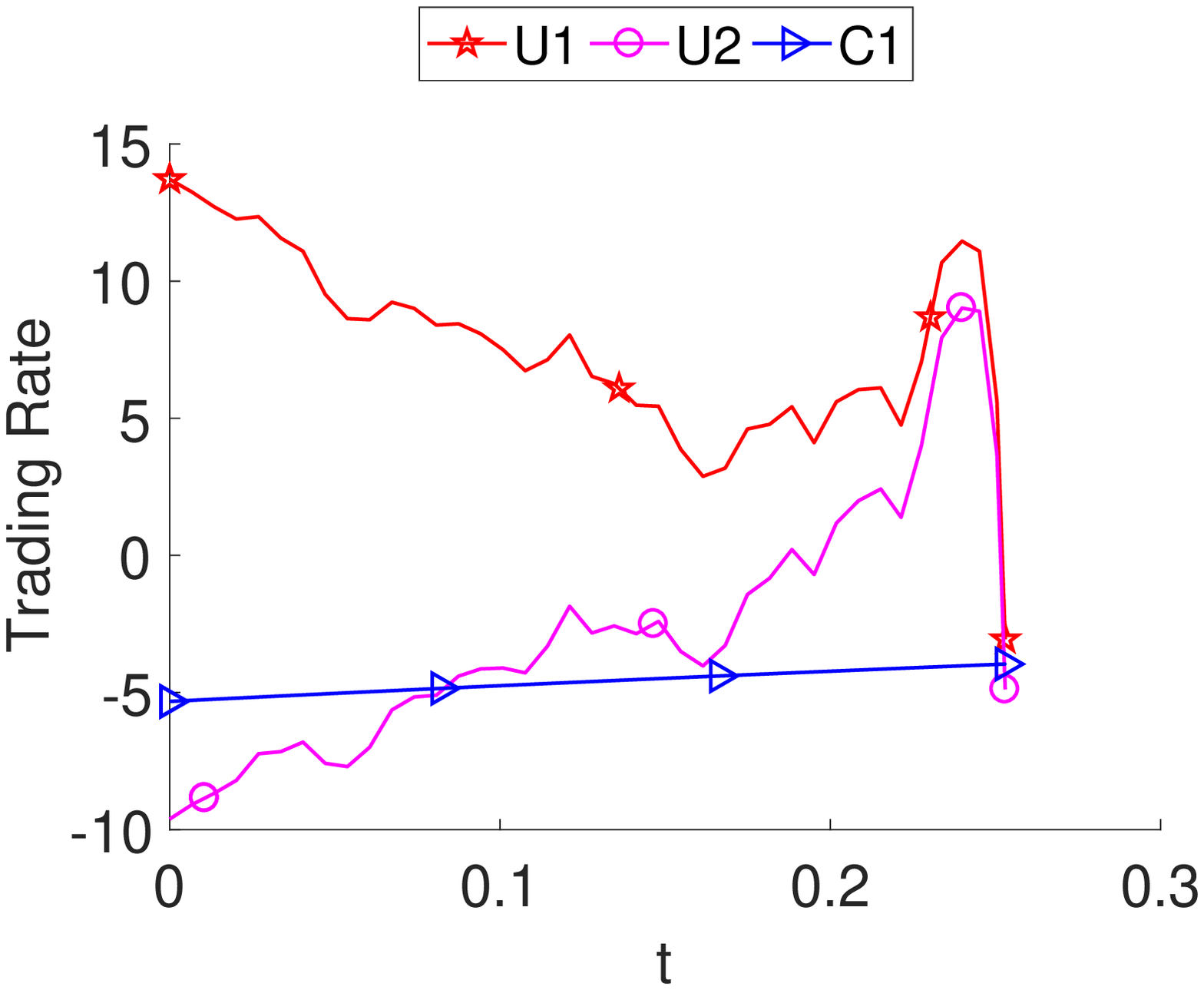}}
  \hfill
  \subfloat{\includegraphics[scale=0.37]{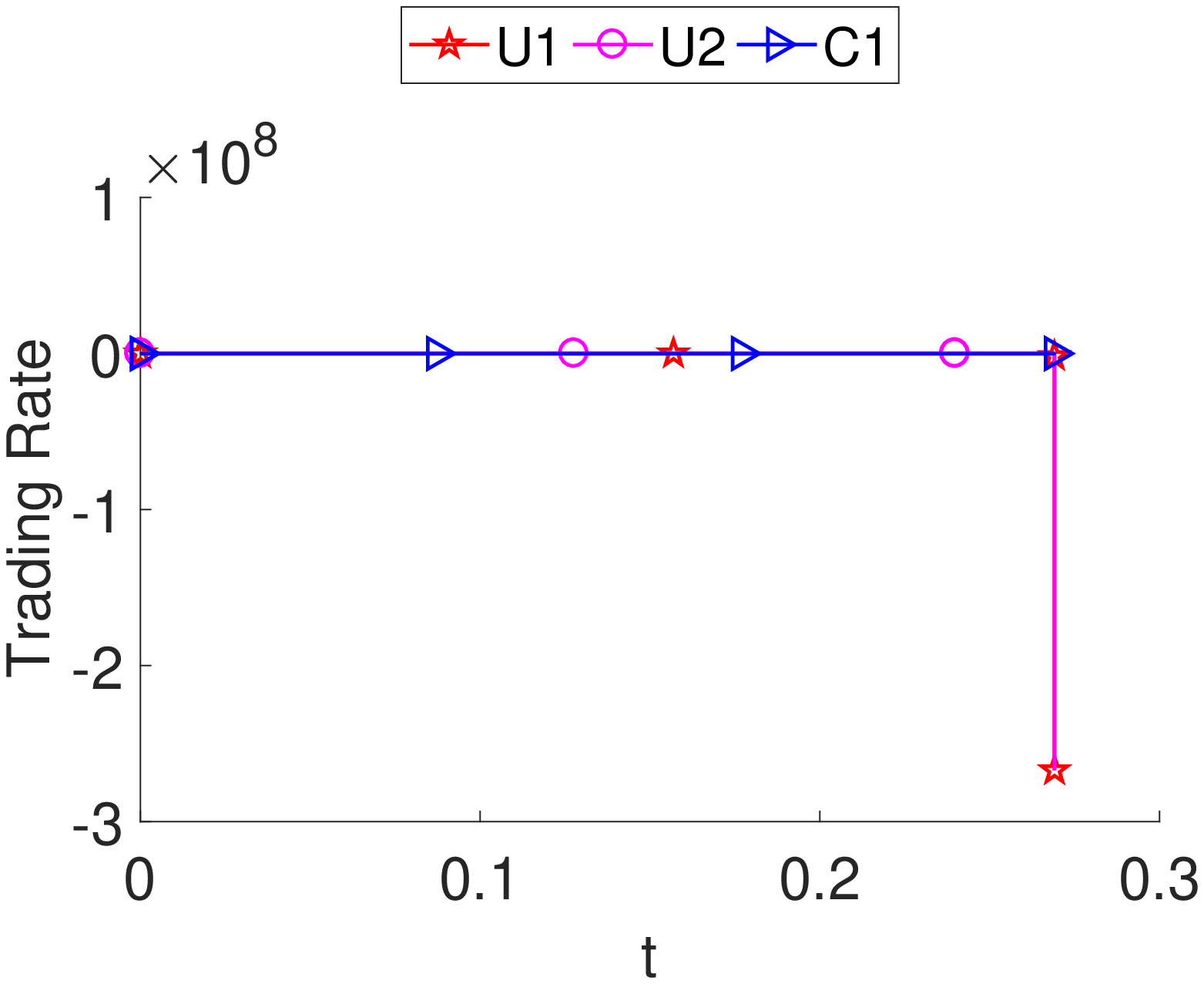}}
  \caption{Depiction of the agents' trading rates for $\omega_{dn}$ in Section \ref{case 3 subsect}.}
  \label{fig: Opt_Spd_NegEV_Expl_MIN_MAX_DOWN}
\end{figure}



\begin{figure}[!tbp]
  \centering
  \subfloat{\includegraphics[scale=0.37]{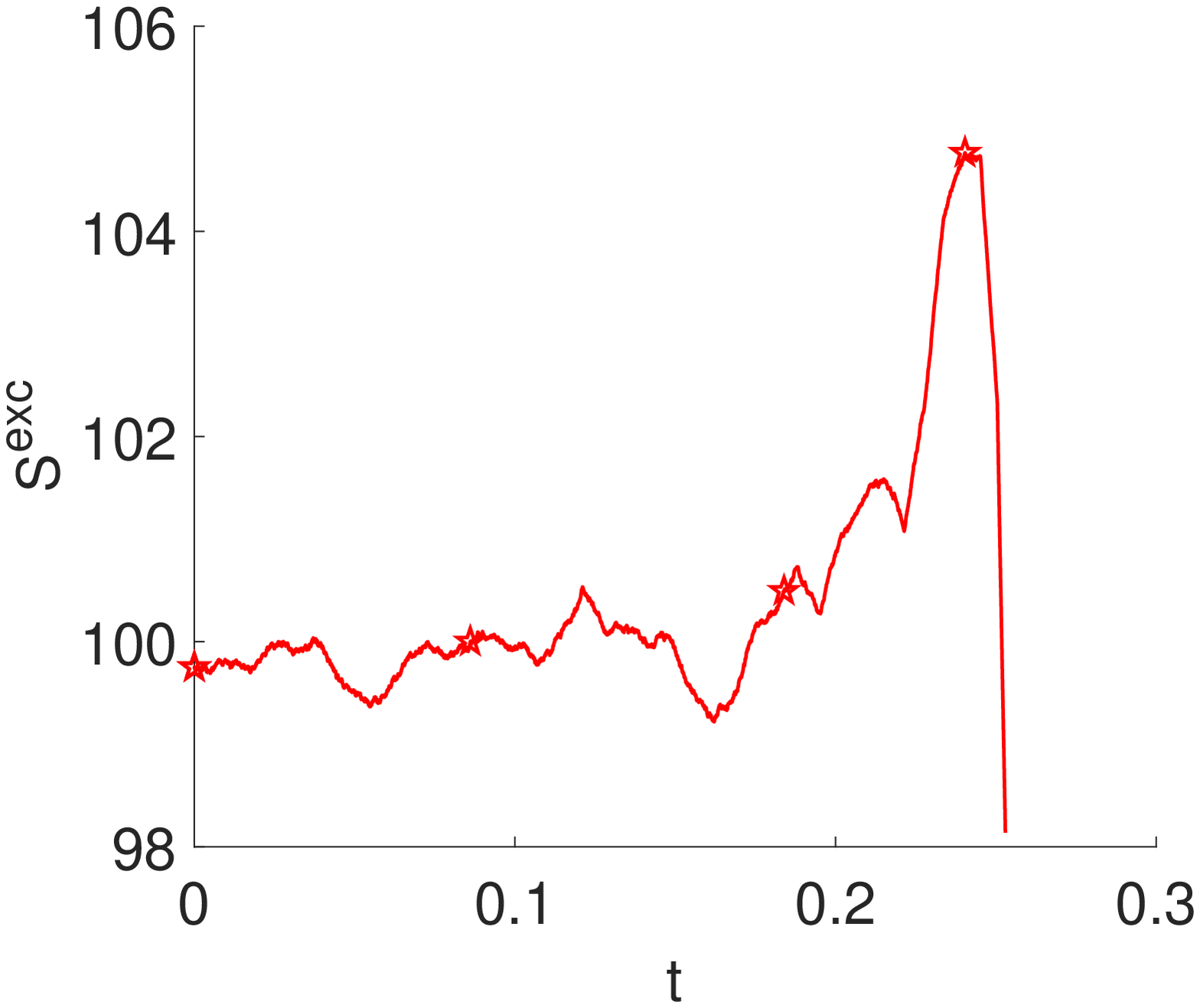}}
  \hfill
  \subfloat{\includegraphics[scale=0.37]{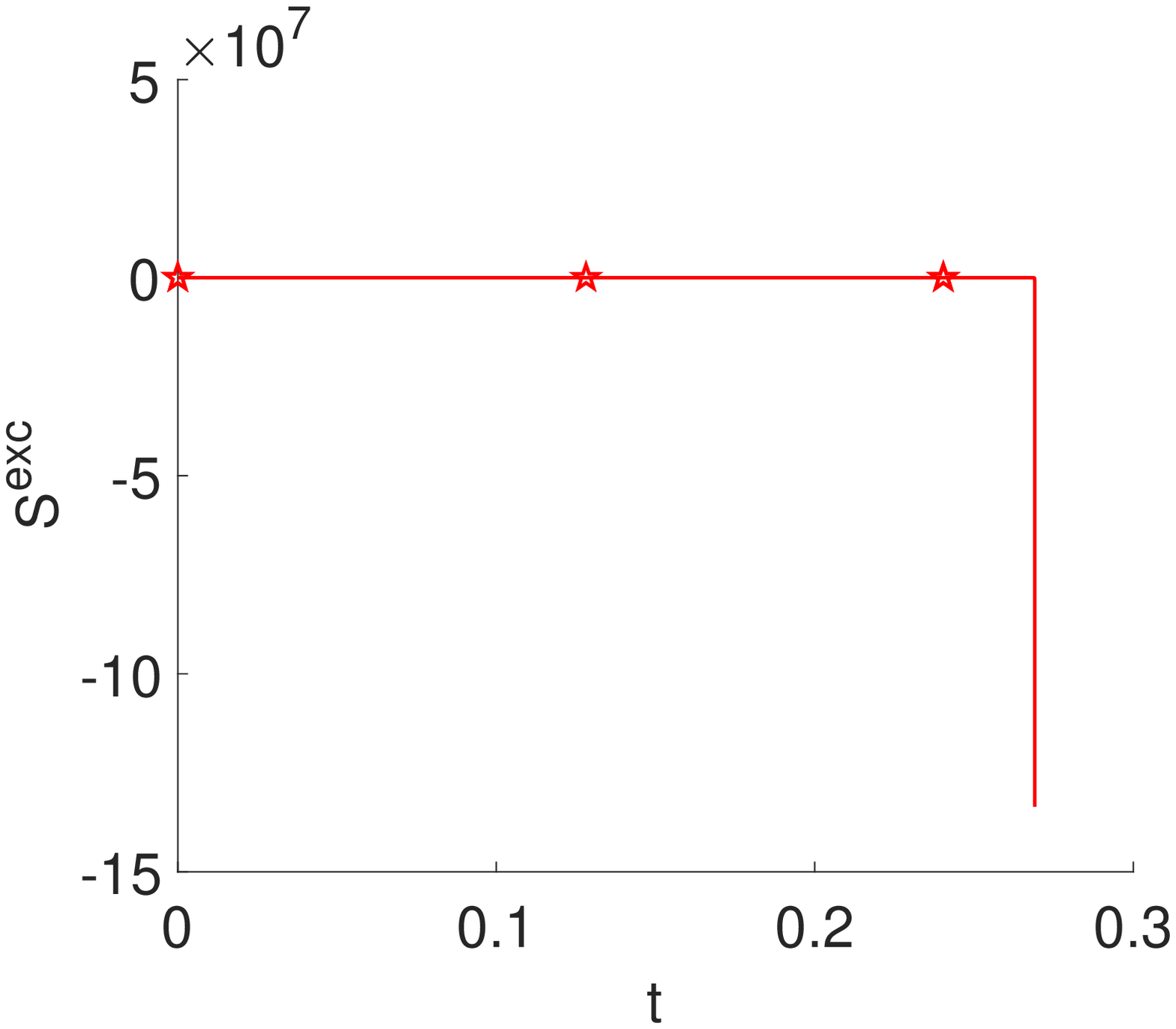}}
  \caption{Depiction of the execution price for $\omega_{dn}$ in Section \ref{case 3 subsect}.}
  \label{fig: ExcPri_NegEV_Expl_MIN_MAX_DOWN}
\end{figure}




\begin{figure}[!tbp]
  \centering
  \subfloat{\includegraphics[scale=0.37]{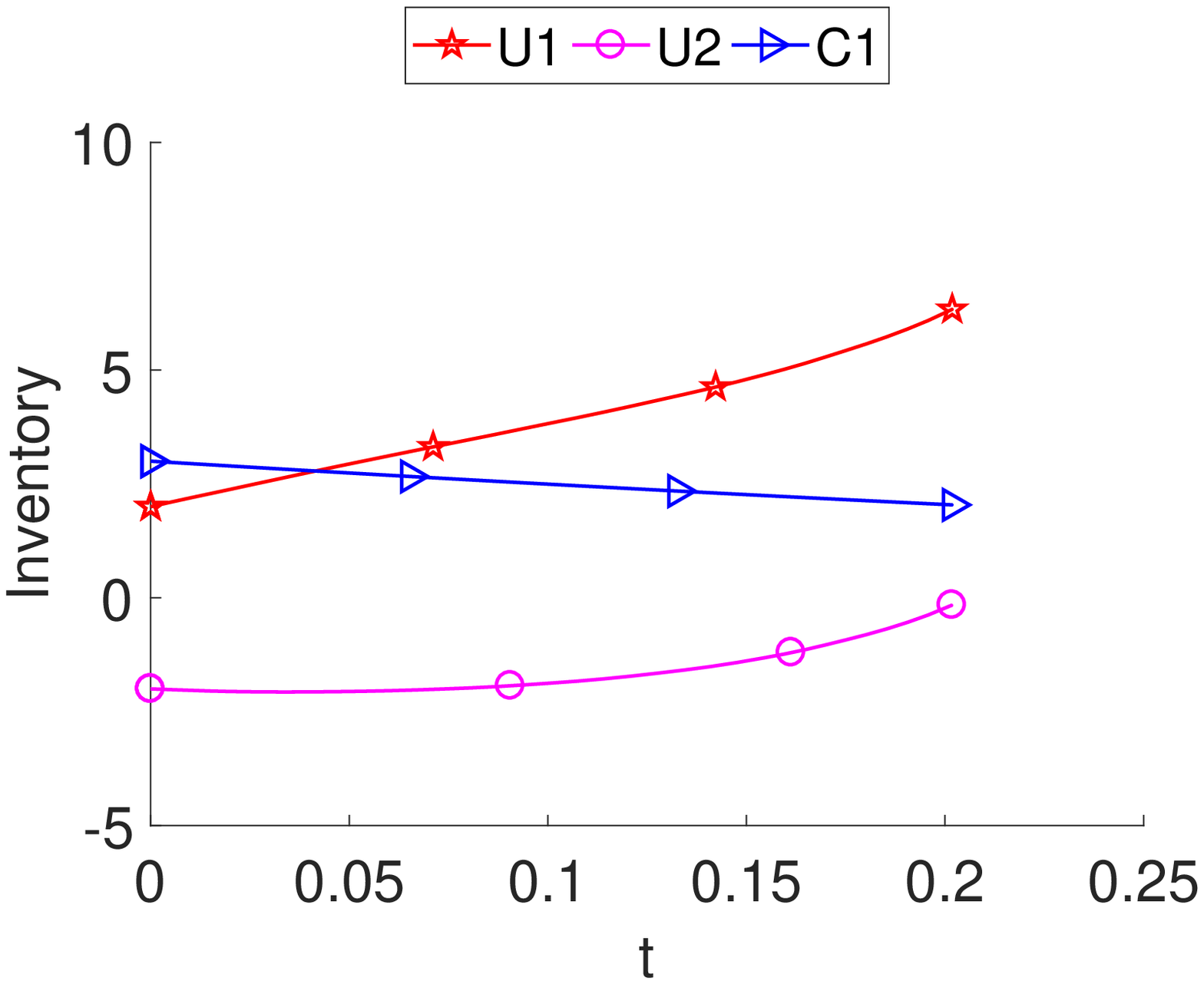}}
  \hfill
  \subfloat{\includegraphics[scale=0.37]{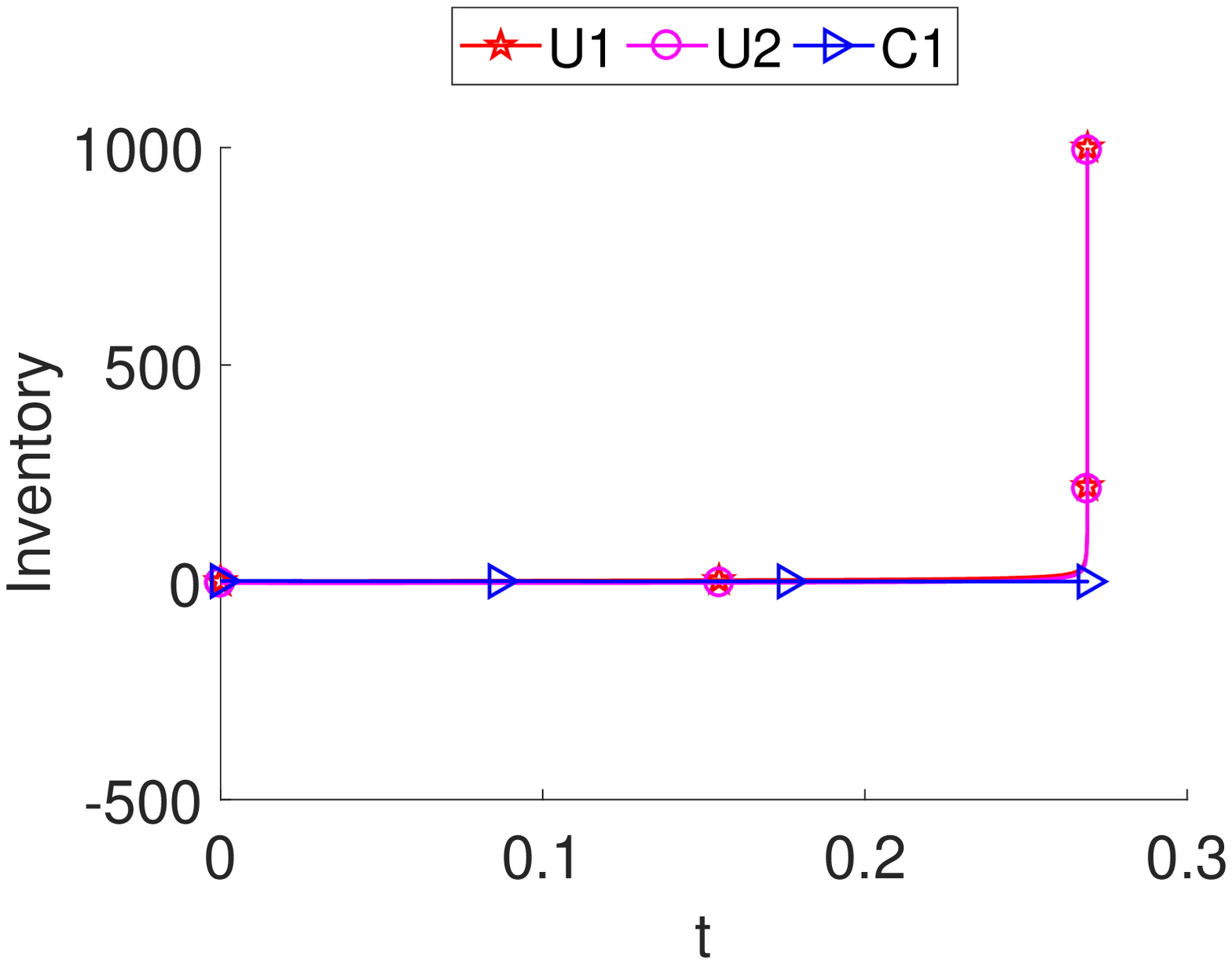}}
  \caption{Depiction of the agents' inventories for $\omega_{up}$ in Section \ref{case 3 subsect}.}
  \label{fig: Opt_Inv_NegEV_Expl_MIN_MAX_UP}
\end{figure}



\begin{figure}[!tbp]
  \centering
  \subfloat{\includegraphics[scale=0.37]{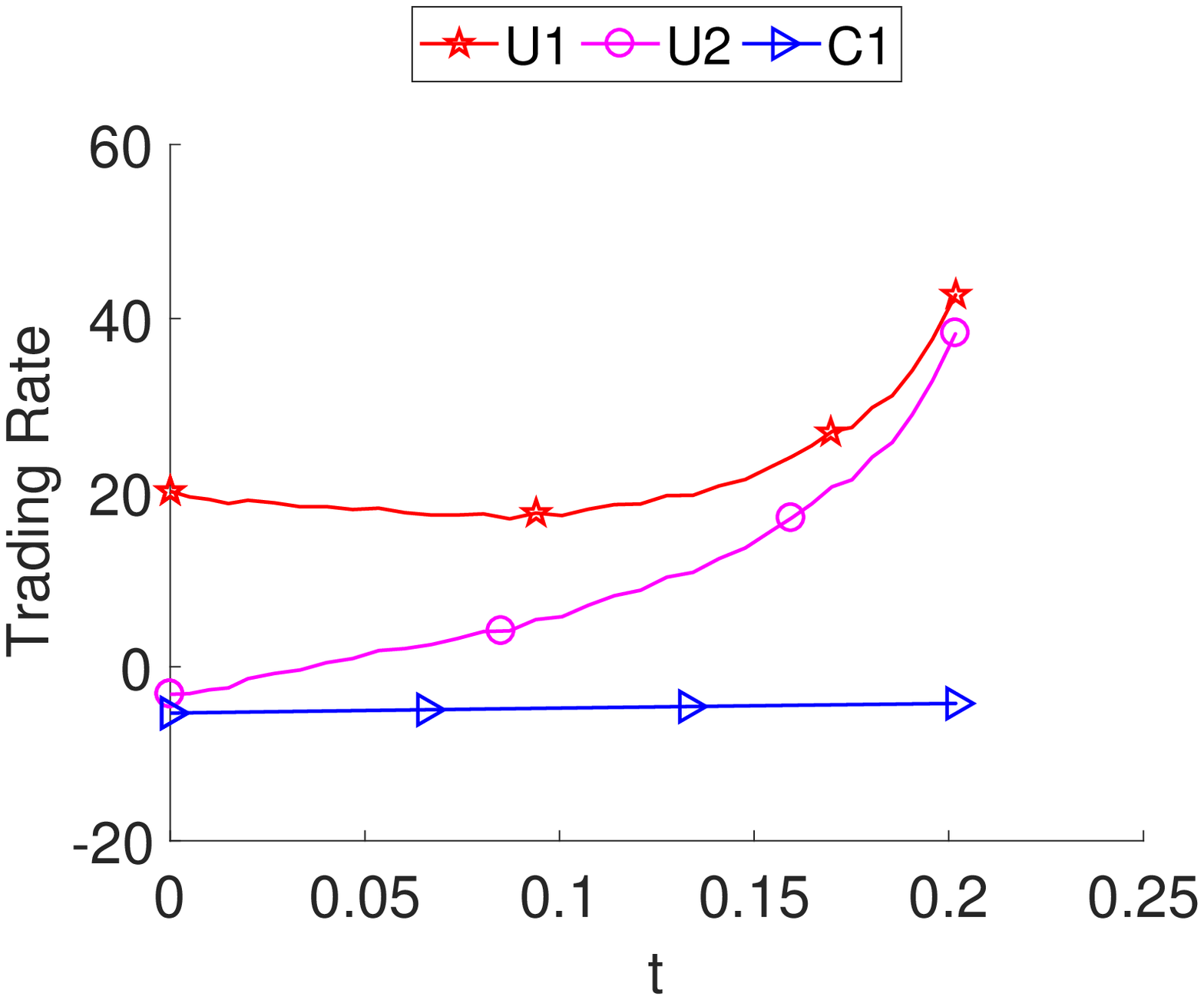}}
  \hfill
  \subfloat{\includegraphics[scale=0.37]{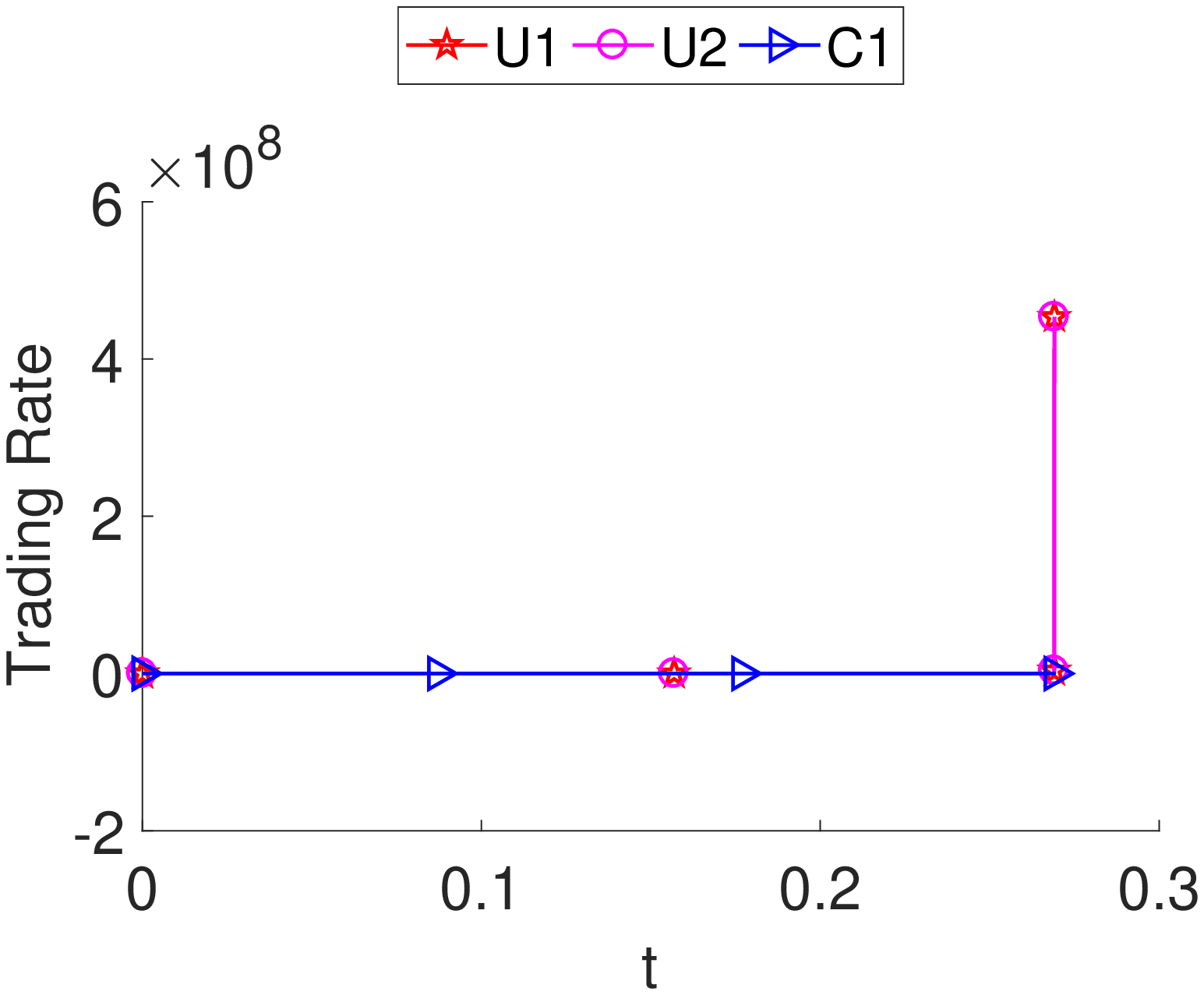}}
  \caption{Depiction of the agents' trading rates for $\omega_{up}$ in Section \ref{case 3 subsect}.}
  \label{fig: Opt_Spd_NegEV_Expl_MIN_MAX_UP}
\end{figure}



\begin{figure}[!tbp]
  \centering
  \subfloat{\includegraphics[scale=0.37]{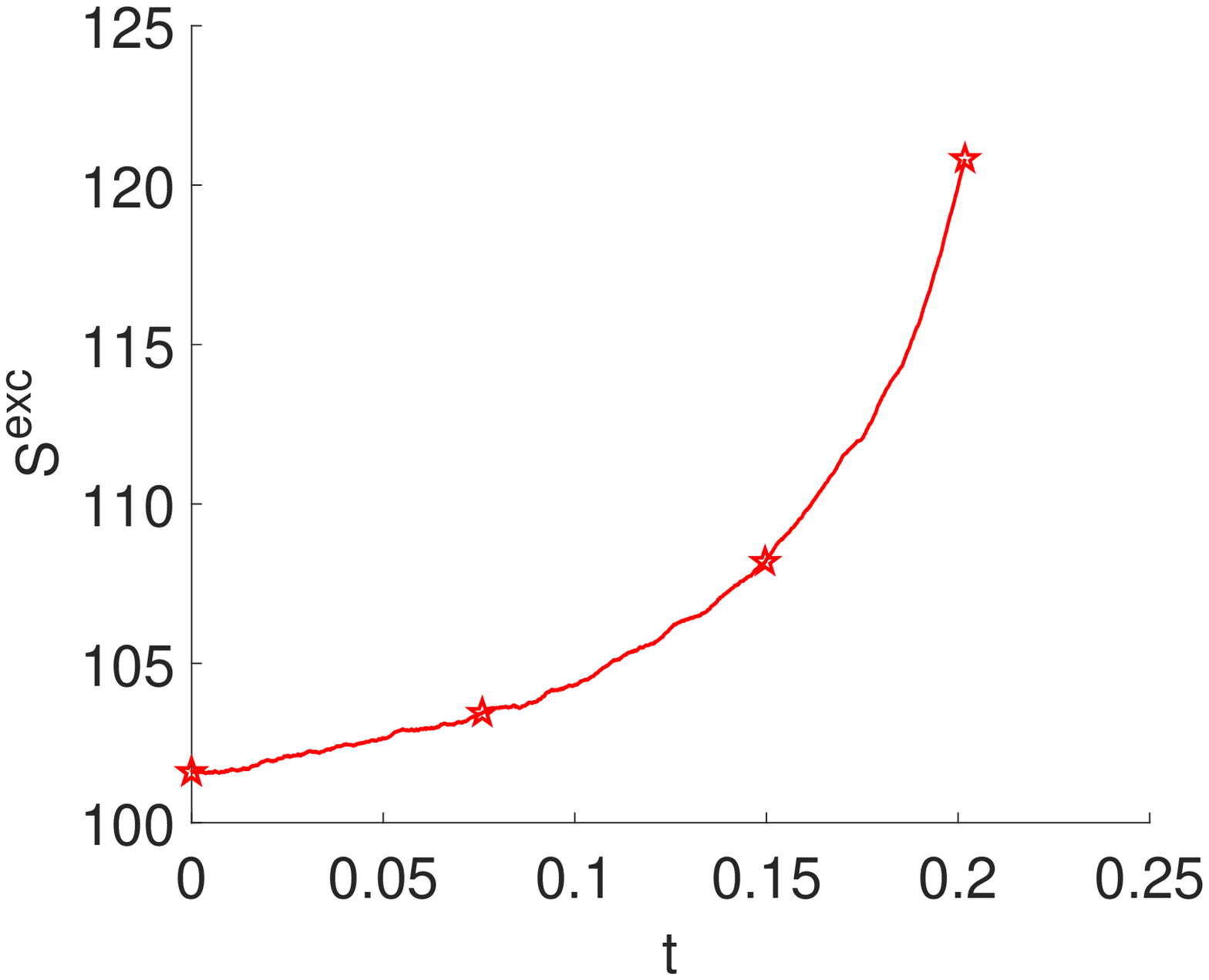}}
  \hfill
  \subfloat{\includegraphics[scale=0.37]{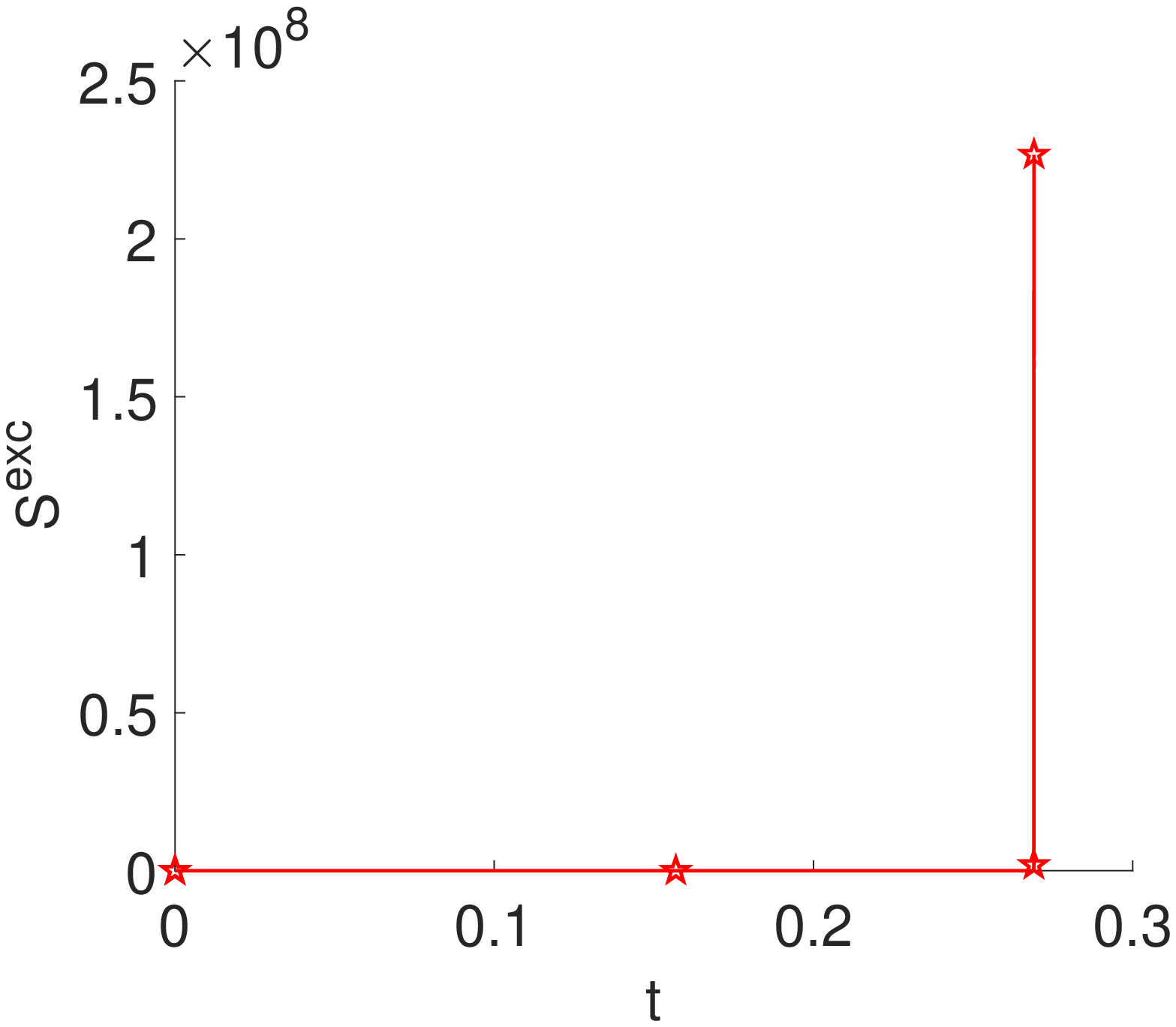}}
  \caption{Depiction of the execution price for $\omega_{up}$ in Section \ref{case 3 subsect}.}
  \label{fig: ExcPri_NegEV_Expl_MIN_MAX_UP}
\end{figure}

\newpage

\section{conclusion}

In this paper we show how mini-flash crashes might occur when agents learn and make their decisions 
based upon misspecified models. We give a necessary and sufficient condition for a mini-flash crash to occur and observe that if the agents

\begin{enumerate}[label=\roman*)]

\item are too uncertain about their prior information
\item sufficiently underestimate the aggregate temporary impact
\item have long trading periods
\item have low risk aversion

\end{enumerate}
then mini flash crashes occur.

Our numerical section has three main examples illustrating our three main results. In the first example, we see that not all human errors directly 
cause mini-flash crashes illustrating Proposition \ref{det A no root no blow up}. Despite the regularity of the human errors 
on a market-wide basis, 
individual securities may rarely experience such an event. 
Similarly, traders' models and strategies do
roughly achieve their intended goals much of the time, 
as it is observed in the real markets.

The two other examples are about the nature of the mini-flash crashes.
If a mini-flash crash does occur, 
it almost surely does so because of 
``endogenous triggering effects.''
As predicted by Theorems \ref{SS main blow up lem FIN} and \ref{no Inv expl lem SS FIN}), some of our agents also ``converge on the same strategy''
during mini-flash crashes: In certain cases, the agents driving these
events all buy or sell together with the same (exploding) growth rate. The two theorems and the examples illustrating them are to demonstrate that mini-flash crashes
can be accompanied by both high and low trading volumes.

\appendix

\section{Section \ref{mod details sec} Proofs}\label{sect mod details sec proofs app}

\subsection{Proof of Lemma \ref{1 pl soln acc to play model}}\label{1 pl soln acc to play model PRF SS}

{\bf Step \mystepTwo{step 1 orig}}:
Denote the usual $P_j$-augmentation of $\left\{ \mathcal{F}_{j,t}^{unf} \right\}_{0 \leq t \leq T}$ by $\left\{ \tilde{\mathcal{F}}_{j,t}^{unf} \right\}_{0 \leq t \leq T}$. 
Let $\tilde{\mathcal{A}}_{j}$ be the space of $\tilde{\mathcal{F}}_{j,t}^{unf}$-progressively measurable processes $\theta_{j}$ such that (\ref{admiss strat int}) and (\ref{admiss strat term}) hold. 
We, again, define the process $X^{\theta_j}_j$ by (\ref{x theta def}) for 
any strategy $\theta_{j} \in \tilde{\mathcal{A}_{j}}$. 
Agent $j$'s auxiliary problem is to maximize 
\begin{align}\label{orig max prob AUX}
 E^{P_j} \left[ - \displaystyle\int_{0}^T \theta_{j,t} \, S_{j,\theta_{j},t}^{exc}  \, dt  -  \displaystyle\frac{\kappa_{j}}{2} \displaystyle\int_{0}^{T}  \left( X^{\theta_{j}}_{j,t} \right)^2 \, dt \right]
\end{align}
 over $\theta_{j} \in \tilde{\mathcal{A}}_{j}$. 
It is not difficult to show that
\begin{equation}\label{eq:ldnl}
E^{P_j} \left[ \beta_j \Big| \tilde{\mathcal{F}}_{j,t}^{unf} \right] = E^{P_j} \left[ \beta_{j} \Big| \mathcal{F}_{j,t}^{unf} \right]
\quad P_{j}-\text{a.s.}
\end{equation}

{\bf Step \mystepTwo{step 4 orig}:}
By (\ref{exc price model j}) and (\ref{x theta def}),
\begin{align*}
- \displaystyle\int_{0}^T \theta_{j,t} \, S_{j,\theta_{j},t}^{exc}  \, dt &= - \displaystyle\int_{0}^T \theta_{j,t} \, S_{j,t}^{unf}  \, dt - \displaystyle\int_{0}^T \theta_{j,t} \left[ \eta_{j,per} \left(  X_{j,t}^{\theta_{j}} - x_{j}   \right) + \displaystyle\frac{1}{2} \eta_{j,tem}\theta_{j,t} \right] dt 
\end{align*}
for $\theta_j \in \tilde{\mathcal{A}}_j$. Section 7.4 of \cite{lipt+shiry+i} and (\ref{fund price model j}) imply that 
the process $ \left\{ \overline{W}_{j,t} \right\}_{0 \leq t \leq T}$
with 
\begin{align*}
\overline{W}_{j,t}  &\triangleq S_{j,t}^{unf} - S_{j,0} - \displaystyle\int_0^t E^{P_j} \left[ \beta_j \big| \mathcal{F}_{j,s}^{unf} \right] \, ds  
\end{align*}
is an $\mathcal{F}_{j,t}^{unf} $-Wiener process under $P_j$
 and
\begin{align}\label{ce sunaff dyn}
S_{j,t}^{unf} = S_{j,0}+  \displaystyle\int_0^t E^{P_j} \left[ \beta_j \big| \mathcal{F}_{j,s}^{unf}  \right] \, ds +\overline{W}_{j,t} .
\end{align} 
After integrating by parts and recalling (\ref{x theta def}) and (\ref{ce sunaff dyn}), we get 
\begin{align*}
&E^{P_{j}} \left[ - \displaystyle\int_{0}^T \theta_{j,t} \, S_{j,t}^{unf}  \, dt  \right] \\
&\qquad =  E^{P_{j}} \left[  -X_{j,T}^{\theta_{j}}  S_{j,T}^{unf}   + \displaystyle\int_{0}^T X_{j,t}^{\theta_{j}} \, E^{P_j} \left[ \beta_j \big| \mathcal{F}_{j,t}^{unf} \right] \, dt  \right] + x_{j} S_{j,0}. 
\end{align*}
We also have 
\begin{align*}
&E^{P_{j}} \left[ - \displaystyle\int_{0}^T \theta_{j,t} \left[ \eta_{j,per} \left(  X_{j,t}^{\theta_{j}} - x_{j}   \right) + \displaystyle\frac{1}{2} \eta_{j,tem}\theta_{j,t} \right] dt \right]\\
&\qquad = E^{P_{j}} \left[  - \displaystyle\frac{1}{2} \eta_{j,per} \left( X_{j,T}^{\theta_{j}}  -  x_{j} \right)^2  - \displaystyle\frac{1}{2} \eta_{j,tem}\displaystyle\int_{0}^T \theta_{j,t}^2 \, dt  \right]. 
\end{align*}

Now $X_{j,T}^{\theta_{j}} =0 $ $P_{j}$-a.s. by the definition of $\tilde{\mathcal{A}}_j$ in Step \ref{step 1 orig}.
 Since $x_{j}$, $S_{j,0}$ and $S_{j,T}^{unf}$ do not depend on Agent $j$'s choice of $\theta_{j} \in \tilde{\mathcal{A}}_{j}$, \eqref{eq:ldnl} implies that $\theta_j^{\star}$ maximizes (\ref{orig max prob AUX}) over $\theta_j \in \tilde{\mathcal{A}}_j$ if and only if it maximizes 
\begin{align}\label{orig max prob AUX 2}
E^{P_j} \left[ \displaystyle\int_{0}^T X_{j,t}^{\theta_{j}} \, E^{P_j} \left[ \beta_j \big| \tilde{\mathcal{F}}_{j,t}^{unf} \right] \, dt  - \displaystyle\frac{1}{2} \eta_{j,tem}\displaystyle\int_{0}^T \theta_{j,t}^2 \, dt -  \displaystyle\frac{\kappa_{j}}{2} \displaystyle\int_{0}^{T}  \left( X^{\theta_{j}}_{j,t} \right)^2 \, dt\right] .
\end{align}

Clearly, $\theta_j^{\star}$ maximizes (\ref{orig max prob AUX 2})
over $\theta_j \in \tilde{\mathcal{A}}_j$ if and only if it minimizes 
 \begin{align}\label{orig max prob AUX FIN}
E^{P_j}  \left[ \frac{1}{2} \displaystyle\int_{0}^{T}   \left(   X_{j,t}^{\theta_j} - \frac{ E^{P_j} \left[ \beta_j \big| \tilde{\mathcal{F}}_{j,t}^{unf}\right]  }{\kappa_{j}} \right)^2 \, dt + \displaystyle\frac{\eta_{j,tem}}{2 \kappa_{j}}  \displaystyle\int_{0}^{T} \theta_{j,t}^2   \, dt \right] .
 \end{align}

{\bf Step \mystepTwo{step 5 orig}:}
 Recall the definition of $\tau_j(\cdot)$ from \eqref{tau j notn} and let 
 \begin{align}\label{beta hat kern defn FIN}
 \begin{array}{ll}
K_j \left( t , s \right) \triangleq  \displaystyle\sqrt{\displaystyle\frac{ \kappa_{j}}{ \eta_{j,tem}}} \left( \displaystyle\frac{ \sinh \left( \tau_{j} \left( s \right) \right)}{  \cosh\left(  \tau_{j} \left( t \right) \right) - 1} \right) , & \quad 0 \leq t \leq s < T  \\
& \\
\hat{\beta}_{j,t}  \triangleq E^{P_j} \left[ \displaystyle\frac{1}{\kappa_{j}} \left( 1 - \displaystyle\frac{1}{\cosh \left( \tau_{j} \left(t \right) \right)}  \right) \right.  & \\
&\\
\qquad \qquad \qquad \cdot \left. \displaystyle\int_t^T E^{P_j} \left[ \beta_j \big| \tilde{\mathcal{F}}_{j,s}^{unf}\right]   K_j \left( t , s \right) \, ds \, \Bigg| \, \tilde{\mathcal{F}}_{j,t}^{unf}\right] , &\quad  t \in \left[ 0 , T \right).
\end{array}
\end{align}
We see from Theorem 3.2 
 of \cite{bank+son+voss} 
 that (\ref{orig max prob AUX FIN}) has a unique solution $\theta_{j}^{\star} \in \tilde{\mathcal{A}}_{j}$.
 Moreover, the corresponding 
 optimal inventory process $X^{\theta^\star_j}_j$ satisfies the linear ODE 
\begin{align}\label{x theta star SON VOSS}
d X^{\theta^{\star}_j}_{j,t} &= \displaystyle\sqrt{\displaystyle\frac{ \kappa_{j}}{ \eta_{j,tem}}} \coth \left( \tau_{j} \left( t \right) \right) \left( \hat{\beta}_{j,t} - X^{\theta^\star_j}_{j,t} \right) \, dt \notag \\
X^{\theta^{\star}_j}_{j,0} &= x_j 
\end{align}
$d P_{j} \otimes dt$-a.s. on $\Omega_{j} \times \left[ 0 , T \right)$.

 Using Fubini's theorem and \eqref{eq:ldnl}, we get that 
  \begin{align}\label{beta hat simplif}
 E^{P_j} \left[  \displaystyle\int_t^T E^{P_j} \left[ \beta_j \Big| \tilde{\mathcal{F}}_{j,s}^{unf}\right]   K_j \left( t , s \right) \, ds \, \Bigg| \, \tilde{\mathcal{F}}_{j,t}^{unf}\right]  =  E^{P_j} \left[ \beta_{j} \Big| \mathcal{F}_{j,t}^{unf} \right] , 
 \quad P_j-\text{a.s.}
\end{align}
The $\tanh$ half-angle formula together with (\ref{cond beta raw filt form}) and (\ref{beta hat kern defn FIN}) imply that
(\ref{x theta star SON VOSS}) can be re-written as 
\begin{align}\label{x theta star no omega}
\theta^{\star}_{j,t} &= - \displaystyle\sqrt{\displaystyle\frac{ \kappa_{j}}{ \eta_{j,tem}}}  
\coth \left( \tau_{j} \left( t \right) \right) X^{\theta^\star_j}_{j,t}   \notag \\
&\qquad + 
\displaystyle\frac{  \tanh \left( \tau_{j} \left(t \right) /2 \right) \left[ \mu_{j}  + \nu_{j}^2 \left( S_{j,t}^{unf}   - S_{j,0}\right) \right]  }{ \displaystyle\sqrt{ \eta_{j,tem} \kappa_{j}}   \left( 1 + \nu_{j}^2 t \right)} 
, \quad t \in \left( 0, T \right)
  \notag \\
X^{\theta^{\star}_j}_{j,0}   &= x_j . 
\end{align}

{\bf Step \mystepTwo{step 7 orig}:}
We know that $\theta^{\star}_j$ satisfies (\ref{admiss strat int}) and (\ref{admiss strat term}), as
all strategies in $\tilde{\mathcal{A}}_j$ have these properties. 
Now $W_{j,\cdot} \left( \omega \right)$ is continuous on $\left[ 0 , T \right]$ 
for $P_j$-almost every $\omega \in \Omega_j$. When such an $\omega$ is chosen, 
(\ref{x theta star no omega}) becomes (\ref{x theta star}). 
The latter is a first order linear ODE with continuous coefficients, so 
$\theta^{\star}_{j, \cdot } \left( \omega \right)$ is continuous on $\left[ 0 , T \right)$
(e.g., by Chapter 1.2 of \cite{wolf+ode}). 

Since our terminal inventory constraint is deterministic, we observe that 
\begin{equation*}
\displaystyle\lim_{t \uparrow T} \theta^{\star}_{j,t} \left( \omega \right)
\end{equation*}
exists and is finite from (28) and (29) in the proof of Theorem 3.2 in \cite{bank+son+voss}, 
as well as (\ref{beta hat simplif}) in Step \ref{step 5 orig}. 
In particular, we can view the paths of $\theta^{\star}_{j}$ on $\left[ 0 , T \right]$ as 
$P_j$-a.s. continuous.\footnote{Alternatively, we could give an argument using singular point theory
as in Section \ref{true price dyn section}.} 
We conclude by noting that $\theta_j^{\star}$ is also $\mathcal{F}_{j,t}^{unf}$-adapted
by (28) and (29) in the proof of Theorem 3.2 in \cite{bank+son+voss}, 
(\ref{cond beta raw filt form}), 
and (\ref{beta hat simplif}) in Step \ref{step 5 orig}.

\qed

\section{Section \ref{true price dyn section} Proofs}\label{true price dyn section PRF APP}

We frequently reference various easy properties of the functions in Definition \ref{Phi j defn}. 
We collect these below for convenience. We will leave the proof to the reader.

\begin{lem}\label{Phi j props lem}

Fix $j \in \left\{ 1, \dots, K \right\}$. We have the following:

\begin{enumerate}[label=\roman*)]

\item \label{a easy props phi} 
$\Phi_{j}$ is a strictly decreasing nonnegative function on $\left[ 0 , T \right]$ with $\Phi_{j} \left( T \right) = 0$.  

\item  \label{b easy props phi}
The entries of $A$ are analytic on $\left[ 0 , T \right]$ and $A \left( T \right) = I_K$.

\item \label{c easy props phi}
If $\det A$ has a root on $\left[ 0, T \right]$, we can find the smallest one which we denote by $t_e$. 
In this case, $t_e < T$ and the zero of $\det A$ at $t_e$ is of finite multiplicity. 

\item \label{d easy props phi}
The entries of $B$ are analytic on $\left[ 0 , T \right)$ but 
\begin{equation*}
\displaystyle\lim_{t \uparrow T} B_{jj} \left( t \right) = -\infty. 
\end{equation*}

\item \label{e easy props phi} 
$C \left( \cdot, \omega \right)$'s entries are continuous on $\left[ 0, T \right]$.

\end{enumerate}

\end{lem}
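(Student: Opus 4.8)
The statement is left to the reader, so here is the route I would take. Everything reduces to elementary facts about the scalar maps $\Phi_i$ and about $\coth(\tau_i(\cdot))$, because $A$, $B$ and $C(\cdot,\omega)$ are built from the $\Phi_i$ by affine operations, with only two twists: the diagonal of $B$ additionally subtracts $\sqrt{\kappa_i/\eta_{i,tem}}\,\coth(\tau_i(\cdot))$, and each entry $C_i(\cdot,\omega)$ multiplies $\Phi_i$ by a bracket collecting constants, the drift term $\tilde{\beta} t$, the certain agents' already-determined inventories and trading rates, and the Wiener path $\tilde{W}_{\cdot}(\omega)$. So first I would pin down the behaviour of $\Phi_i$ and of $\coth(\tau_i(\cdot))$ on $[0,T]$, then read off items (i)--(v) in order.

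For the inputs: by (\ref{tau j notn}), $\tau_i$ is affine, strictly decreasing on $[0,T]$, strictly positive on $[0,T)$, with $\tau_i(T)=0$. Since $\eta_{i,tem},\kappa_i,\nu_i^2>0$ and $1+\nu_i^2 t\ge 1$ on $[0,T]$, the factor $t\mapsto \nu_i^2/(\sqrt{\eta_{i,tem}\kappa_i}\,(1+\nu_i^2 t))$ is real-analytic on a neighbourhood of $[0,T]$, strictly positive, and strictly decreasing there; and $\tanh$ is real-analytic, strictly increasing and nonnegative on $[0,\infty)$, so $t\mapsto\tanh(\tau_i(t)/2)$ is real-analytic, strictly decreasing on $[0,T]$, strictly positive on $[0,T)$, and equal to $0$ at $t=T$. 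Multiplying the two factors yields (i), including $\Phi_i(T)=0$. For (ii), $\Phi_i$ is then real-analytic as a product and quotient of real-analytic functions with nonvanishing denominator, each $A_{ik}$ is an affine function of $\Phi_i$ by Definition \ref{Phi j defn} hence real-analytic on $[0,T]$, and $\Phi_i(T)=0$ forces $A(T)=I_K$.

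Then (iii) is immediate: $\det A$ is a polynomial in the entries of $A$, hence real-analytic on $[0,T]$, and it is not identically zero since $\det A(T)=\det I_K=1$; therefore its zero set inside the compact interval $[0,T]$ is finite and each zero has finite order, so when this set is nonempty it has a smallest element $t_e$, necessarily with $t_e<T$ (again using $\det A(T)=1$) and with a zero of finite multiplicity there. For (iv), the off-diagonal entries $\tilde{\eta}_{k,per}\Phi_i$ are real-analytic on all of $[0,T]$, while the diagonal entry is real-analytic on $[0,T)$ because $\tau_i>0$ there makes $\coth(\tau_i(\cdot))=\cosh(\tau_i(\cdot))/\sinh(\tau_i(\cdot))$ analytic; as $t\uparrow T$ we have $\tau_i(t)\downarrow 0^+$, so $\coth(\tau_i(t))\uparrow+\infty$, while $(\tilde{\eta}_{i,per}-\eta_{i,per})\Phi_i(t)$ stays bounded (it even tends to $0$), and since $\sqrt{\kappa_i/\eta_{i,tem}}>0$ this gives $B_{ii}(t)\to-\infty$. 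For (v), the bracket multiplying $\Phi_i(t)$ in $C_i(t,\omega)$ is a finite sum of constants, the continuous map $t\mapsto\tilde{\beta} t$, the certain agents' deterministic inventories and trading rates (continuous on $[0,T]$ by Corollary \ref{cert agents opt strats} together with Lemma \ref{1 pl soln acc to play model}), and $t\mapsto\tilde{W}_t(\omega)$ (continuous by our standing choice of $\omega$); multiplying this continuous bracket by the continuous $\Phi_i$ makes $C_i(\cdot,\omega)$ continuous on $[0,T]$.

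I do not expect any real obstacle. The one point to watch is the nonvanishing of $\det A$ used in (iii): this is exactly where the normalization $A(T)=I_K$ from (ii) is indispensable, and it is also why all blow-up is confined to $t=T$ (for $B$) and to the isolated zeros of $\det A$ (for the overall system). I would also take a little care in (v) to cite Lemma \ref{1 pl soln acc to play model}, not merely Corollary \ref{cert agents opt strats}, so that the certain agents' trading rates are genuinely continuous up to and including $t=T$.
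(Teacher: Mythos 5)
Your proof is correct and is exactly the elementary verification the paper intends (it explicitly leaves this lemma to the reader): each item reduces to the monotonicity, positivity, and analyticity of $\tanh(\tau_i(\cdot)/2)$ and $1/(1+\nu_i^2 t)$, the blow-up of $\coth(\tau_i(\cdot))$ at $T$, the identity theorem for the real-analytic $\det A$ with $\det A(T)=1$, and the continuity of the certain agents' strategies and of $\tilde W_\cdot(\omega)$. Your remark that the continuity of the certain agents' trading rates up to $t=T$ rests on Lemma \ref{1 pl soln acc to play model} rather than Corollary \ref{cert agents opt strats} alone is well taken.
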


\subsection{Proof of Lemma \ref{act 1st order ODE lem}}\label{act 1st order ODE lem PRF SS}

Let $j \in \left\{ 1, \dots, K \right\}$. 
At each time $t$, 
Agent $j$ observes the correct value of $S_t^{exc} \left( \omega \right)$,
interprets this value as the realized value of $S_{j,\theta_{j}^{\star},t}^{exc} \left( \omega \right)$, and computes $S_{j,t}^{unf} \left( \omega \right) $.\footnote{By abuse of notation, we evaluate $S_{j,\theta_{j}^{\star},t}^{exc}$ and $S_{j,t}^{unf}$ are evaluated at $\omega$; however, Agent $j$ would evaluate these quantities at some $\omega_j \in \Omega_j$. We adopt similar conventions in the sequel without further comment.} 
By (\ref{exc price model j}), it follows that
\begin{align}\label{exc price model unaff j ident}
S_t^{exc} \left(\omega \right) &= S_{j,\theta_{j}^{\star},t}^{exc} \left(\omega \right) \notag \\
&= S_{j,t}^{unf} \left(\omega \right) + \eta_{j,per} \left(  X_{j,t}^{\theta_{j}^{\star}} \left(\omega \right)  - x_{j}   \right)
+ \frac{1}{2} \eta_{j,tem}  \theta_{j,t}^{\star} \left(\omega \right)  . 
\end{align}
After substituting (\ref{rt true dynamics}) into (\ref{exc price model unaff j ident}), we have
\begin{align}\label{unf price chg actual}
&S_{j,t}^{unf} \left(\omega \right)  - S_{j,0} \notag \\
&\quad =  \left( S_{0} - S_{j,0} \right)+  \tilde{\beta}  t 
 + \displaystyle\sum_{  \substack{i \leq K \\ i \not = j  }} \tilde{\eta}_{i ,per}\left( X^{\theta_{i}^\star}_{i,t} \left(\omega \right) -  x_{i} \right) 
 + \displaystyle\sum_{ i > K } \tilde{\eta}_{i ,per}\left( X^{\theta_{i}^\star}_{i,t}  -  x_{i} \right) 
 \notag \\
 &\qquad
  + \frac{1}{2} \displaystyle\sum_{   \substack{i \leq K \\ i \not = j  }  }  \tilde{\eta}_{i,tem} \theta_{i,t}^{\star} \left(\omega \right) 
    + \frac{1}{2} \displaystyle\sum_{  i > K  }  \tilde{\eta}_{i,tem} \theta_{i,t}^{\star} 
  \notag \\
&\qquad + \left( \tilde{\eta}_{j,per}  - \eta_{j,per} \right)  \left(  X_{j,t}^{\theta_{j}^{\star}} \left(\omega \right) - x_{j}   \right) 
 + \frac{1}{2} \left( \tilde{\eta}_{j,tem} - \eta_{j,tem}\right)  \theta_{j,t}^{\star} \left(\omega \right) + \tilde{W}_t  \left(\omega \right)  . 
\end{align}

The quantity on the LHS of (\ref{unf price chg actual}) plays a role in determining Agent $j$'s strategy (see Lemma \ref{1 pl soln acc to play model}).  
Substituting (\ref{unf price chg actual}) into (\ref{x theta star}) and applying the half-angle formula for $\tanh \left(  \cdot \right)$, 
we get
\begin{align*}
&A_{jj}\left( t \right) \theta_{j,t}^{\star} \left(\omega \right) 
- \displaystyle\sum_{\substack{ i \leq K \\ i \not = j}} A_{ji}\left( t \right) \theta_{i,t}^{\star}\left(\omega \right) \notag \\
&\quad = B_{jj}\left( t \right)  X^{\theta_j^\star}_{j,t} \left(\omega \right)  
+ \displaystyle\sum_{\substack{ i \leq K \\ i \not = j}} B_{ji}\left( t \right)  X^{\theta_{i}^\star}_{i,t}  \left(\omega \right) + C_{j} \left( t , \omega \right) .
\end{align*}
It follows that the uncertain agents' strategies are characterized by the ODE system
\begin{align}\label{DG syst lin ODE 1}
A \left(t \right)  \theta_{t}^{u,\star} \left( \omega \right) &= 
B \left( t \right) X^{u,\theta^\star}_{t} \left(\omega \right)  + C \left( t, \omega \right) \notag \\
X^{u,\theta^\star}_{0} \left(\omega \right) &= x^u. 
\end{align}
 Corollary \ref{cert agents opt strats}, Lemma \ref{Phi j props lem} and a standard existence and uniqueness theorem 
 (see Sections 1.1 and 3.1 of \cite{barbu}) finish the argument. 
\qed

\subsection{Proof of Lemma \ref{ODE HOM sing pt mult lem}}\label{ODE HOM sing pt mult lem PRF SS}

As $t \uparrow t_e$, 
\begin{align*}
\begin{array}{c}
 \left\{ A \left(t \right)  \dot{X}_{t}^{u} \left( \omega \right) = 
B \left( t \right) X_{t}^{u}\left(\omega \right) \right\} \\
\iff \\
\left\{ 
\left[ \det  A \left(t \right) \right] \dot{X}_{t}^{u} \left( \omega \right) = 
\left[ \text{adj}  A \left(t \right) \right]
B \left( t \right) X_{t}^{u}\left(\omega \right) \right\} . 
\end{array}
\end{align*}
Here, adj denotes the usual adjugate operator.

We can find a non-negative integer $m$ such that the multiplicity of the zero of $\det A$ at $t_e$ is 
$\left( m + 1 \right)$ by Lemma \ref{Phi j props lem}. 
Hence, there is a unique non-vanishing analytic function $f$ such that 
\begin{equation}\label{f defn FIN}
\det A \left( t \right) = \left( t - t_e \right)^{m+1} f \left( t \right)
\end{equation}
on a small neighborhood of $t_e$. 
Note that $f$ is non-vanishing, as the zeroes of $\det A$ are isolated and $\det A \left( T \right) = 1$ (see Lemma \ref{Phi j props lem}). 
We then define the analytic (see Lemma \ref{Phi j props lem}) map $D$ by
\begin{equation}\label{D defn FIN}
D \left( t \right) \triangleq 
\left[ \text{adj}  A \left(t \right) \right] B \left( t \right) / f \left( t \right) 
\end{equation}
and arrive at (\ref{DG syst lin ODE HOM REG}).

Since $\det A \left( \cdot \right)$ has a root at $t_e$, 
the rank of $A \left( t_e \right)$ is no more than $K - 1$. 
We conclude by observing that adj $A \left( t_e \right)$ has rank 1 when $A \left( t_e \right)$ 
has rank $K - 1$; otherwise, adj $A \left( t_e \right)$ must be the zero matrix.
The comments about the rank of $D \left( t_e \right)$ immediately follow.
\qed

\subsection{Proof of Proposition \ref{opt trad speeds uncer}}\label{opt trad speeds uncer PRF SS}

$D \left( t_e \right) \not = 0$ since $\lambda \not = 0$. 
Then (\ref{DG syst lin ODE HOM}) has a singular point of the first kind at $t_e$ (see our discussion above). 
$\lambda \not \in \mathbb{Z}$ by hypothesis, so Theorem 6.5 of \cite{codd+carl+ode}
implies that a fundamental solution of (\ref{DG syst lin ODE HOM}) 
on $\left[ t_e - \rho , t_e \right)$ for some $\rho > 0$ is given by
\begin{align}\label{fund soln homog R nonzer FIN1}
P \left( t \right) \left| t - t_e \right|^{D \left( t_e \right)} . 
\end{align}
In (\ref{fund soln homog R nonzer FIN1}), 
$P\left( \cdot \right)$ is an analytic $M_K \left( \mathbb{R} \right)$-valued function with $P \left( t_e \right) = I_K$. 
Moreover, $P \left( t \right)$ is invertible for all $t \in \left[ t_e - \rho , t_e \right)$ and\footnote{Any fundamental solution of (\ref{DG syst lin ODE HOM REG})
is invertible everywhere, as are matrix exponentials.}
\begin{equation}\label{fund soln inv FIN1}
 \left( P \left( t \right) \left| t - t_e \right|^R \right)^{-1}
 = 
 \left| t - t_e \right|^{- R}   \left[ P \left( t \right) \right]^{-1} .
\end{equation}

The solution of (\ref{DG syst lin ODE}) satisfies
\begin{align*}
\left(t - t_e \right) \theta_{t}^{u,\star} \left( \omega \right)  &= 
D \left( t \right) X^{u,\theta^\star}_{t} \left(\omega \right) 
+  
\displaystyle\frac{  \text{adj} \left[ A \left( s \right) \right] C \left( s , \omega \right)}{ f \left( s \right)}
 .
\end{align*}
near $t_e$ (argue as in Lemma \ref{ODE HOM sing pt mult lem}). 
Since 
\begin{equation*}
P \left( t \right) \left| t - t_e \right|^{D \left( t_e \right)}  \rho^{-D \left( t_e \right)}  \left[ P \left( t_e - \rho \right)  \right]^{-1}
\end{equation*}
is also a fundamental solution of (\ref{DG syst lin ODE HOM REG}) on $\left[ t_e - \rho , t_e \right)$\footnote{See Theorem 2.5 of 
Coddington \& Carlson (\cite{codd+carl+ode}).}
and equals $I_K$ at $t_e - \rho$, 
we can apply variation of parameters\footnote{See Theorem 2.8 of Coddington \& Carlson (\cite{codd+carl+ode}).}
to obtain
\begin{align}\label{orig var param NO eign}
 &X^{u,\theta^\star}_{t} \left( \omega \right) \notag \\
 &\, = P \left( t \right) \left| t - t_e \right|^{D \left( t_e \right)} \left[ \rho^{-D \left( t_e \right)}  \left[ P \left( t_e - \rho \right)  \right]^{-1}\right]
 \cdot \Bigg[ X^{\theta^\star}_{t_e - \rho} \left( \omega \right) \notag \\
 &\quad 
+
 \displaystyle\int_{t_e - \rho }^t 
 \left( P \left( t_e - \rho \right)  \rho^{D \left( t_e \right)}
\left| s - t_e \right|^{-D \left( t_e \right) } \left[ P \left( s \right) \right]^{-1}
 \right) 
 \left( \displaystyle\frac{  \text{adj} \left[ A \left( s \right) \right] C \left( s , \omega \right)}{ \left( s - t_e \right) f \left( s \right)} \right)
 ds 
\Bigg] . 
\end{align}

We can find an eigenbasis $\left\{v_1, \dots, v_{K} \right\}$ for $D \left( t_e \right)$ such that $v_K$ corresponds to $\lambda$. 
We then define 
the continuous 
real-valued functions $\left\{ F_1 \left( \cdot , \omega \right), \dots , F_K \left( \cdot , \omega \right) \right\}$
on $\left[ t_e - \rho , t_e \right]$
and the constants 
$\left\{ y_1 \left( \omega \right) , \dots , y_K \left( \omega \right) \right\}$
as certain eigenbasis coordinates:
\begin{align}\label{y F defn eign}
\displaystyle\sum_{j = 1}^{K}  F_j \left( s, \omega \right) v_{j} &\triangleq  \displaystyle\frac{  \left[ P \left( s \right) \right]^{-1} \text{adj} \left[ A \left( s \right) \right] C \left( s , \omega \right)}{ f \left( s \right)} \notag \\
\displaystyle\sum_{j = 1}^{K}  y_j \left( \omega \right) v_{j} &\triangleq \rho^{- D \left( t_e \right)}  \left[ P \left( t_e - \rho \right)  \right]^{-1} X^{\theta^\star}_{t_e - \rho} \left( \omega \right) . 
\end{align}
Taken with (\ref{orig var param NO eign}), these definitions immediately give (\ref{gen limit diag R main blow up KZ})  
after recalling that for any matrix $Q \in M_K \left( \mathbb{R} \right)$ with 
eigenvalue $\gamma$ and corresponding eigenvector $v$, 
we have
\begin{align*}
\left| t - t_e \right|^Q v = \left| t - t_e \right|^{\gamma} v.
\end{align*}
\qed

\subsection{Proof of Proposition \ref{det A no root no blow up}}\label{det A no root no blow up PRF SS}

We know that $S^{exc} \left( \omega \right)$, the $X^{\theta_{j}^\star}_{j} \left( \omega \right)$'s and 
the $\theta^{\star}_{j} \left( \omega \right)$'s are all uniquely defined and continuous on $\left[ 0, T \right)$ 
(see Lemma \ref{act 1st order ODE lem}). 
Corollary \ref{cert agents opt strats} implies that 
$X^{\theta_j^\star}_{j} \left( \omega \right)$ and 
$\theta^{\star}_{j} \left( \omega \right)$
are continuous at $T$ for $j > K$ (the certain agents). It also gives us
\begin{equation*}
\displaystyle\lim_{t \uparrow T} X^{\theta_j^\star}_{j,t} \left( \omega \right) = 0
\end{equation*}
for $j > K$. 
It remains to show that 
\begin{equation}\label{suff to show no T blow up}
\displaystyle\lim_{t \uparrow T} X^{u,\theta^\star}_{t} \left( \omega \right) = 0 
\quad \text{and} \quad
\displaystyle\lim_{t \uparrow T} \theta^{u,\star}_{t} \left( \omega \right) \in \mathbb{R}^K. 
\end{equation}
As discussed above, one difficulty is that the diagonal entries of $B$ in (\ref{DG syst lin ODE})
explode at $T$ (see Lemma \ref{Phi j props lem}); however, the approach for resolving this issue 
is similar to that used to analyze solution behavior near $t_e$.

First, we show that (\ref{DG syst lin ODE HOM}) (after replacing $t_e$ with $T$)
has a singular point of the first kind at $T$. 
Now $ \sinh \left( \tau_{j} \left( \cdot \right) \right)$ has a zero of multiplicity 1 at $T$
since 
\begin{equation*}
\displaystyle\frac{d \sinh \left( \tau_{j} \left( t \right) \right) }{dt}    \Bigg|_{t = T}
= - \displaystyle\sqrt{\displaystyle\frac{ \kappa_{j}}{ \eta_{j,tem}}}   
\cosh \left( \tau_{j} \left( t \right) \right)  \Bigg|_{t = T} = - \displaystyle\sqrt{\displaystyle\frac{ \kappa_{j}}{ \eta_{j,tem}}}. 
\end{equation*}
Hence, there is a unique non-vanishing analytic function $g_j$ such that 
\begin{equation}\label{gj props FIN}
\sinh \left( \tau_{j} \left( t \right) \right) = \left(t - T \right) g_j \left( t \right)
\quad \text{and} \quad
g_j \left( T \right) = - \displaystyle\sqrt{\displaystyle\frac{ \kappa_{j}}{ \eta_{j,tem}}}
\end{equation}
on a small neighborhood of $T$. 
Near $T$, it follows that the entries of $\left( t - T \right) B \left( t  \right)$ are given by
\begin{align}\label{B with gj eqn}
\left( t - T \right) B_{ik} \left( t  \right) =
\left\{
\begin{array}{ll}
\left( t - T \right) \left( \tilde{\eta}_{i,per}  - \eta_{i,per} \right)  \Phi_i \left( t \right)  & \,\\
\qquad - \displaystyle\sqrt{\displaystyle\frac{ \kappa_{i}}{ \eta_{i,tem}}}  
\left(
\displaystyle\frac{ \cosh \left( \tau_{i} \left( t \right) \right)}{ g_i \left( t \right) }
 \right)
 & \quad \text{if } i = k \\
 &\\
\left( t - T \right) \tilde{\eta}_{k,per}  \Phi_i \left( t \right)   
  & \quad \text{if } i \not =  k \\
\end{array}
\right. 
\end{align}
(see Definition \ref{Phi j defn}). 
On this region, the solution of (\ref{DG syst lin ODE HOM})
satisfies 
\begin{align}\label{DG syst lin ODE HOM T}
\left(t - T \right) \dot{X}_{t}^{u} \left( \omega \right) &= 
A^{-1} \left( t \right) \left( t - T \right) B \left( t \right) 
 X_{t}^{u}\left(\omega \right)   .
\end{align}
By (\ref{B with gj eqn}) and Lemma \ref{Phi j props lem}, 
(\ref{DG syst lin ODE HOM T}) has a singular point of the first kind at $T$.

Second, we find a fundamental solution of (\ref{DG syst lin ODE HOM T}) near $T$. 
We know that
\begin{equation*}
A^{-1} \left( T \right) = \left( t - T \right) B \left( t  \right)  \Bigg|_{t = T} = I_{K}
\end{equation*}
by (\ref{gj props FIN}), (\ref{B with gj eqn}), and Lemma \ref{Phi j props lem}. 
Theorem 6.5 of \cite{codd+carl+ode}
implies that a fundamental solution of (\ref{DG syst lin ODE HOM T}) 
on $\left[ T - \delta , T \right)$ for some $\delta > 0$ is given by
\begin{align}\label{fund soln homog R nonzer FIN1 Bprf}
Q \left( t \right) \left| t - T \right|^{I_K}  = Q \left( t \right) \left| t - T \right|. 
\end{align}
In (\ref{fund soln homog R nonzer FIN1 Bprf}), 
$Q$ is an analytic $M_K \left( \mathbb{R} \right)$-valued function with $Q \left( T \right) = I_K$. 
Also, $Q \left( t \right)$ is invertible for all $t \in \left[ T  - \delta , T \right)$.\footnote{Any fundamental solution of (\ref{DG syst lin ODE HOM T})
is invertible everywhere.}

Finally, we use our fundamental solution to solve (\ref{DG syst lin ODE}) and conclude the proof. 
Notice that $\tanh \left( \tau_j \left( \cdot \right) \right)$ also has a zero
of multiplicity 1 at $T$ since
\begin{equation*}
\displaystyle\frac{d \tanh \left( \tau_{j} \left( t \right) \right) }{dt}    \Bigg|_{t = T}
=  - \displaystyle\frac{1}{2} \displaystyle\sqrt{\displaystyle\frac{ \kappa_{j}}{ \eta_{j,tem}}}   
\text{sech}^2 \left( \tau_{j} \left( t \right) /2 \right)
 \Bigg|_{t = T} 
=  - \displaystyle\frac{1}{2} \displaystyle\sqrt{\displaystyle\frac{ \kappa_{j}}{ \eta_{j,tem}}}   . 
\end{equation*}
There is a unique non-vanishing analytic function $h_j$ such that 
\begin{equation}\label{hj props FIN}
\tanh \left( \tau_{j} \left( t \right) / 2\right) = \left(t - T \right) h_j \left( t \right) 
\end{equation}
on a neighborhood of $T$. 
In particular, the entries of $C \left(t , \omega \right) / \left( t - T \right)$ near $T$ are given by
\begin{align}\label{C divd by t - T  eqn}
\displaystyle\frac{C_{i} \left( t , \omega \right)}{ \left( t - T \right) } &=
\left(  \displaystyle\frac{  h_i \left( t \right) \nu_{i}^2    }{ \sqrt{ \eta_{i,tem}  \kappa_{i}}   \left( 1 + \nu_{i}^2 t \right)} \right)
 \left[ \displaystyle\frac{ \mu_{i}}{\nu_{i}^2} +  \left( S_{0} - S_{i,0} \right)+  \tilde{\beta}  t 
 - \displaystyle\sum_{ \substack{ k \leq K \\ k \not = i }  } \tilde{\eta}_{k,per}   x_{k}   \right. \notag \\
 &\qquad \qquad \qquad 
 - x_{i} \left( \tilde{\eta}_{i,per}  - \eta_{i,per} \right) 
  + \displaystyle\sum_{ k > K } \tilde{\eta}_{k ,per}\left( X^{\theta_{k}^\star}_{k,t}  -  x_{k} \right) \notag \\
   &\qquad \qquad \qquad \left.
      + \frac{1}{2} \displaystyle\sum_{  k > K  }  \tilde{\eta}_{k,tem} \theta_{k,t}^{\star}  + \tilde{W}_t \left( \omega \right) 
     \right] . 
\end{align}
Since 
\begin{equation*}
Q \left( t \right)  \left| t - T \right| \delta^{-1} Q^{-1} \left( T - \delta \right) 
\end{equation*}
is also a fundamental solution of (\ref{DG syst lin ODE HOM REG}) on $\left[ T - \delta  , T \right)$\footnote{See Theorem 2.5 of 
Coddington \& Carlson (\cite{codd+carl+ode}).}
and equals $I_K$ at $T- \delta$, 
we can apply variation of parameters\footnote{See Theorem 2.8 of Coddington \& Carlson (\cite{codd+carl+ode}).}
to obtain
\begin{align}\label{orig var param NO eign T}
 &X^{u,\theta^\star}_{t} \left( \omega \right) \notag \\
 &\quad = Q \left( t \right) \left| t - T \right| \delta^{-1}  Q^{-1} \left( T - \delta \right) 
 \cdot \Bigg[ X^{\theta^\star}_{T - \delta} \left( \omega \right) \notag \\
 &\qquad 
+
 \displaystyle\int_{T - \delta }^t 
 \left( Q \left( T - \delta \right)  \delta
\left| s - T \right|^{-1 } Q^{-1} \left( s \right)
 \right) 
    A^{-1} \left( s \right) C \left( s , \omega \right)  
 ds 
\Bigg] . 
\end{align}
By (\ref{C divd by t - T  eqn}), (\ref{orig var param NO eign T}), and Corollary \ref{cert agents opt strats}, we get (\ref{suff to show no T blow up}). 
\qed

\section{Section \ref{exmp sect} Proofs}\label{exmp sect PRF APP}

\subsection{Proof of Lemma \ref{te study lem SS FIN}}\label{te study lem SS FIN PRF SS}

First observe that  \eqref{semi-symmetric gull te in 0 T FIN REWRITE} is equivalent to
\begin{equation}\label{semi-symmetric gull te in 0 T FIN}
\left( K \tilde{\eta}_{tem} - \eta_{tem} \right) \Phi \left( 0 \right) > 2,
\end{equation}
by Definitions \ref{tau j notn} and \ref{Phi j defn}. By Definitions \ref{Phi j defn} and \ref{semi symm defn}, 
we see that $A$ is now given by 
\begin{align}\label{A calc SS FIN}
A_{ik} \left( t  \right) &\triangleq  
\left\{
\begin{array}{cc}
 1 -   \displaystyle\frac{1}{2} \left( \tilde{\eta}_{tem} - \eta_{tem}\right) \Phi \left( t \right)  
 & \quad \text{if } i = k \\
 -  \displaystyle\frac{1}{2}   \tilde{\eta}_{tem} \Phi \left( t \right) 
  & \quad \text{if } i \not =  k \\
\end{array}
\right. .
\end{align}
A short calculation shows that 
\begin{align}\label{det A semi symm tild FIN}
\det  A \left(  t \right) &=
 \left[ 1 +  \displaystyle\frac{1}{2}  \eta_{tem} \Phi \left( t \right)  \right]^{K-1}
\left[  1 - \displaystyle\frac{1}{2} \left(   K  \tilde{\eta}_{tem} - \eta_{tem} \right) \Phi \left( t \right)    \right] .
\end{align}
The first term in (\ref{det A semi symm tild FIN}) is always at least 1.
The second term is non-zero at 0 but does have a root on $\left( 0, T \right]$ if and only if 
(\ref{semi-symmetric gull te in 0 T FIN}) holds.\footnote{In fact, $t_e$ is the unique root of $\det A$ in this case.}
Both of these observations come from Lemma \ref{Phi j props lem}.

Now, (\ref{semi-symmetric gull te in 0 T FIN}) 
implies that $K \tilde{\eta}_{tem} > \eta_{tem}$. 
Since $t_e$ is a zero of $\det A$, we must have that 
\begin{equation}\label{key eqn Phi te FIN}
 1 - \displaystyle\frac{1}{2} \left(   K  \tilde{\eta}_{tem} - \eta_{tem} \right) \Phi \left( t_e \right)   = 0. 
\end{equation}
Hence, by Lemma \ref{Phi j props lem}, 
\begin{align}\label{det deriv SS FIN}
\displaystyle\frac{d \left[ \det A \left( t \right) \right] }{dt} \Bigg|_{t = t_e} &= 
- \displaystyle\frac{1}{2} \left(   K  \tilde{\eta}_{tem} - \eta_{tem} \right)  \left[ 1 +  \displaystyle\frac{1}{2}  \eta_{tem} \Phi \left( t \right)  \right]^{K-1} \dot{\Phi} \left( t \right)  \Bigg|_{t = t_e} > 0. 
\end{align}
\qed

\subsection{Proof of Lemma \ref{EV comps SS FIN}}\label{EV comps SS FIN PRF SS}

By (\ref{f defn FIN}), (\ref{det deriv SS FIN}), and Lemma \ref{te study lem SS FIN}, 
\begin{align}\label{f te SS FIN}
f \left( t_e \right) &= \displaystyle\frac{d \left[ \det A \left( t \right) \right] }{dt} \Bigg|_{t = t_e} \notag \\
&= - \displaystyle\frac{1}{2} \left(   K  \tilde{\eta}_{tem} - \eta_{tem} \right)  \left[ 1 +  \displaystyle\frac{1}{2}  \eta_{tem} \Phi \left( t_e \right)  \right]^{K-1} \dot{\Phi} \left( t_e \right) . 
\end{align}
A short calculation shows that adj $A\left( t \right)$ is given by
\begin{align}\label{adj A SS FIN}
&\left[ \text{adj} A \left( t \right) \right]_{ik} \notag \\
\quad &=
\left( 1 +  \displaystyle\frac{1}{2}  \eta_{tem} \Phi \left( t \right)   \right)^{K-2}
\left\{
\begin{array}{ll}
 1 -   \displaystyle\frac{1}{2} \left[ \left( K - 1 \right) \tilde{\eta}_{tem} - \eta_{tem}\right] \Phi \left( t \right)  
 & \quad \text{if } i = k \\
  \displaystyle\frac{1}{2}   \tilde{\eta}_{tem} \Phi \left( t \right) 
  & \quad \text{if } i \not =  k \\
\end{array}
\right. . 
\end{align}
It follows that 
\begin{align}\label{adj A B form SS FIN}
&\left[ \left( \text{adj} A \left( t \right) \right] B \left( t \right) \right]_{ik} \notag \\
&\quad = \tilde{\eta}_{per}  \Phi \left( t \right) \left( 1 +  \displaystyle\frac{1}{2}  \eta_{tem} \Phi \left( t \right)   \right)^{K-1}  \notag \\
&\qquad   + \left( 1 +  \displaystyle\frac{1}{2}  \eta_{tem} \Phi \left( t \right)   \right)^{K-2} 
\left( \eta_{per}  \Phi \left( t \right)  + \displaystyle\sqrt{\displaystyle\frac{ \kappa }{ \eta_{tem}}}   \coth \left( \tau \left( t \right) \right)  \right) \notag \\
&\qquad \qquad \cdot 
\left\{
\begin{array}{ll}
 \displaystyle\frac{1}{2} \left[ \left( K - 1 \right) \tilde{\eta}_{tem} - \eta_{tem}\right] \Phi \left( t \right)  - 1
 & \quad \text{if } i = k \\
 &\\
 - \displaystyle\frac{1}{2}   \tilde{\eta}_{tem} \Phi \left( t \right) 
  & \quad \text{if } i \not =  k \\
\end{array}
\right. . 
\end{align}
One can then check that the only potentially non-zero eigenvalue of 
\begin{align*}
D \left(t_e \right) =  \displaystyle\frac{ \left[ \text{adj} A \left( t_e \right) \right] B \left( t_e \right) }{  f \left( t_e \right) }
\end{align*}
is given by 
\begin{align}\label{lambda SS nearly FIN}
\lambda &= - \displaystyle\frac{ 
2 \left[
\left( K \tilde{\eta}_{per} - \eta_{per}\right)  \Phi \left( t_e \right)  
-  \displaystyle\sqrt{\displaystyle\frac{ \kappa }{ \eta_{tem}}}   \coth \left( \tau \left( t_e \right) \right)  
\right] 
}
{
 \left(   K  \tilde{\eta}_{tem} - \eta_{tem} \right)   \dot{\Phi} \left( t_e \right)
} 
\end{align}
 with corresponding eigenvector $v_K$ as above. 
We get (\ref{lambda formula FIN}) from (\ref{lambda SS nearly FIN}) after applying (\ref{spec Phi te relat FIN}).

Recall that $\Phi \left( t_e \right) > 0$ and $\dot{\Phi} \left( t_e \right) < 0$ by Lemma \ref{Phi j props lem}. 
Since $t_e$, $\Phi$, and $\tau$ do not depend on $\tilde{\eta}_{per}$ or $\eta_{per}$, 
we can ensure that $\lambda \not \in \mathbb{Z}$ by perturbing the latter parameters. 
$D \left( t_e \right)$ is then diagonalizable as observed in Proposition \ref{opt trad speeds uncer}, 
and $v_1, \dots , v_{K-1}$ can be computed using (\ref{adj A B form SS FIN}). 
\qed

\subsection{Proof of Theorems \ref{SS main blow up lem FIN} and \ref{no Inv expl lem SS FIN}}\label{SS main blow up lem FIN PRF SS}

Since our uncertain agents are semi-symmetric,
\begin{align}\label{C SS FIN 1}
C_{i} \left( t , \omega \right) &=  \Phi \left( t \right)  \tilde{W}_t \left( \omega \right) \notag \\
&\qquad +  \Phi \left( t \right) \left[   \tilde{\beta}  t 
  + \displaystyle\sum_{ k > K } \tilde{\eta}_{k ,per}\left( X^{\theta_{k}^\star}_{k,t}  -  x_{k} \right) 
      + \frac{1}{2} \displaystyle\sum_{  k > K  }  \tilde{\eta}_{k,tem} \theta_{k,t}^{\star}  
     \right] \notag \\
&\qquad + \Phi \left( t \right) \left[ \displaystyle\frac{ \mu_{i}}{\nu^2} +  \left( S_{0} - S_{i,0} \right) 
 - \displaystyle\sum_{ \substack{ k \leq K \\ k \not = i }  } \tilde{\eta}_{per}   x_{k}  - x_{i} \left( \tilde{\eta}_{per}  - \eta_{per} \right)  \right] 
\end{align}
for $t \leq t_e$ by Definition \ref{Phi j defn}. 
For convenience, we introduce the following deterministic function\footnote{
The function $c$ is deterministic by Corollary \ref{cert agents opt strats}.}
$c$ and the constants $c_1, \dots, c_K$:
\begin{align}\label{aux fxns eigen expl SS lem eqn}
c \left( t \right) &\triangleq \left[   \tilde{\beta}  t 
  + \displaystyle\sum_{ k > K } \tilde{\eta}_{k ,per}\left( X^{\theta_{k}^\star}_{k,t}  -  x_{k} \right) 
      + \frac{1}{2} \displaystyle\sum_{  k > K  }  \tilde{\eta}_{k,tem} \theta_{k,t}^{\star}  
     \right] \notag \\
  \displaystyle\sum_{i = 1}^K   c_i v_i  &\triangleq 
  \left[ 
 \begin{array}{c}
 \displaystyle\frac{ \mu_{1}}{\nu^2} +  \left( S_{0} - S_{1,0} \right) 
 - \displaystyle\sum_{ \substack{ k \leq K \\ k \not = 1 }  } \tilde{\eta}_{per}   x_{k}  - x_{1} \left( \tilde{\eta}_{per}  - \eta_{per} \right) \\
 \vdots \\
 \displaystyle\frac{ \mu_{K}}{\nu^2} +  \left( S_{0} - S_{K,0} \right) 
 - \displaystyle\sum_{ \substack{ k \leq K \\ k \not = K }  } \tilde{\eta}_{per}   x_{k}  - x_{K} \left( \tilde{\eta}_{per}  - \eta_{per} \right) 
   \end{array} \right] .
\end{align}

Using (\ref{C SS FIN 1}), we get that 
\begin{align}\label{C SS FIN simplif}
C \left( t , \omega \right) &=  
\tilde{W}_t \left( \omega \right) \Phi \left( t \right) v_K 
+ c \left( t \right) \Phi \left( t \right) v_K 
+ \Phi \left( t \right)   \displaystyle\sum_{i = 1}^K   c_i v_i . 
\end{align}
By (\ref{adj A SS FIN}), $\left\{ v_1 , \dots, v_K \right\}$ is an eigenbasis for $\text{adj} \left[ A \left( t \right) \right]$.  
Moreover, 
\begin{equation}\label{evals adj A t eqn 1}
\left( 1 +  \displaystyle\frac{1}{2}  \eta_{tem} \Phi \left( t \right)   \right)^{K-2}
\left[  1 - \displaystyle\frac{1}{2} \left(   K  \tilde{\eta}_{tem} - \eta_{tem} \right) \Phi \left( t \right)    \right]
\end{equation}
is the eigenvalue corresponding to each of  $v_1 , \dots, v_{K-1}$, while 
\begin{equation}\label{evals adj A t eqn 2}
\left( 1 +  \displaystyle\frac{1}{2}  \eta_{tem} \Phi \left( t \right)   \right)^{K-1}
\end{equation}
corresponds to $v_K$.

By (\ref{y F defn eign}), it follows that 
\begin{align}\label{large decpmp for Fjs SS eqn}
&\displaystyle\sum_{j = 1}^{K}  F_j \left( t, \omega \right) v_{j} \notag \\
&\quad =  \displaystyle\frac{  \left[ P \left( t \right) \right]^{-1} \text{adj} \left[ A \left( t \right) \right] C \left( t , \omega \right)}{ f \left( t \right)  } \notag \\
&\quad  = 
\tilde{W}_t \left( \omega \right)  \left(
 \displaystyle\frac{ \Phi \left( t \right) \left( 1 +  \displaystyle\frac{1}{2}  \eta_{tem} \Phi \left( t \right)   \right)^{K-1} }
 {   f \left( t \right)  }
  \right)  \left[ P \left( t \right) \right]^{-1}  v_K \notag \\
 &\qquad   \quad +   
  \left(
 \displaystyle\frac{ \Phi \left( t \right) \left( 1 +  \displaystyle\frac{1}{2}  \eta_{tem} \Phi \left( t \right)   \right)^{K-1} \left( c \left( t \right) + c_K \right) }
 {   f \left( t \right)  }
  \right)  \left[ P \left( t \right) \right]^{-1}  v_K \notag \\
   &\qquad   \quad +   
  \left(
 \displaystyle\frac{\Phi \left( t \right) \left( 1 +  \displaystyle\frac{1}{2}  \eta_{tem} \Phi \left( t \right)   \right)^{K-2}
\left[  1 - \displaystyle\frac{1}{2} \left(   K  \tilde{\eta}_{tem} - \eta_{tem} \right) \Phi \left( t \right)    \right] }
 {   f \left( t \right)  }
  \right)  
\notag \\
 &\qquad   \qquad \qquad \cdot     \displaystyle\sum_{i = 1}^{K-1} c_i \left[ P \left( t \right) \right]^{-1}  v_i    .
\end{align}
It follows that we can find 
analytic deterministic functions $F_{j,1}$ and $F_{j,2}$ such that 
\begin{equation}\label{Fj1 Fj2 defn}
F_{j} \left( t , \omega \right) \triangleq \tilde{W}_t \left( \omega \right)    F_{j,1} \left( t \right) + F_{j,2} \left( t \right) 
\end{equation}
for each $j \in \left\{ 1, \dots, K \right\}$.\footnote{
Note that $c$ is continuously differentiable on $\left[ 0 , t_e \right]$ by Corollary \ref{cert agents opt strats}.} 
Since $P \left( t_e \right) = I_K$ (see Proposition \ref{opt trad speeds uncer}), (\ref{f te SS FIN}) and Remark \ref{te impl formula FIN}
further imply that 
\begin{equation}\label{other Fjs zero at t_e 100}
F_{j,1} \left( t_e  \right) = F_{j,2} \left( t_e  \right) =  \cdots = F_{K-1,1} \left( t_e \right) = F_{K-1,2} \left( t_e \right) = 0
\end{equation}
and 
\begin{align}\label{FK zero at t_e 100} 
F_{K,1} \left( t_e  \right) = 
  -  
 \displaystyle\frac{ \Phi^2 \left( t_e \right) }
 { \dot{\Phi} \left( t_e \right) } > 0
 \quad \text{and} \quad 
 F_{K,2} \left( t_e  \right) =  -  
 \displaystyle\frac{ \Phi^2 \left( t_e \right) }
 { \dot{\Phi} \left( t_e \right) } 
 \left(  c \left( t_e \right) + c_K  \right) .
\end{align}
While $F_{K,1} \left( t_e  \right) > 0$, determining the sign of $F_{K,2} \left( t_e  \right)$
 is difficult, in general, as it depends upon the sign of $c \left( t_e \right) + c_K$ 
(see (\ref{aux fxns eigen expl SS lem eqn})).

 We see from (\ref{Fj1 Fj2 defn}) and (\ref{other Fjs zero at t_e 100})
 that the expression
 \begin{equation}\label{deriv exp theta vj}
 \displaystyle\frac{ F_j \left( s, \omega \right)}
{ \left| s - t_e \right| } 
 \end{equation}
is bounded near $t_e$ for each $j < K$ and almost every $\omega \in \tilde{\Omega}$. 
In particular, the coordinates of both 
\begin{equation*}
\displaystyle\sum_{j = 1}^{K-1} 
\left( y_j \left( \omega \right) -   \displaystyle\int_{t_e - \rho }^t 
\displaystyle\frac{ F_j \left( s, \omega \right)}
{ \left| s - t_e \right| }\, ds 
\right) 
 P \left( t \right)  v_{j}
\end{equation*}
and its time derivative are bounded near $t_e$ for such $\omega$ as well.

Since $P \left( t_e \right) = I_K$, the $v_K$-coordinate of $P\left( t \right) v_K$
tends to 1 $t \uparrow t_e$. 
For $j < K$, the $v_j$-coordinate of $P\left( t \right) v_K$
tends to 0 as $t \uparrow t_e$. 
In each situation, we can also obtain Lipschitz bounds on the convergence. 
Due to (\ref{gen limit diag R main blow up KZ}) 
and (\ref{Fj1 Fj2 defn}), 
potential explosions in the coordinates of $X^{u,\theta^\star}_{t} \left( \omega \right)$ 
are characterized by 
\begin{equation}\label{Xu explosion last integral eqn SCALAR}
\displaystyle\lim_{t \uparrow t_e}  \left[ \left| t - t_e \right|^{ \lambda  }   \left( y_K \left( \omega \right)  -
 \displaystyle\int_{t_e - \rho }^t 
\displaystyle\frac{  \tilde{W}_s \left(\omega \right) F_{K,1} \left( s  \right) + F_{K,2} \left( s  \right)   }
{ \left| s - t_e \right|^{1+ \lambda} }\, ds 
\right) 
\right] . 
\end{equation}
Specifically, 
\begin{align}\label{Xu explosion last integral eqn SCALAR directs}
\left\{ 
\left| \text{(\ref{Xu explosion last integral eqn SCALAR})} \right| < +\infty
\right\}
\quad &\iff  \quad 
\left\{  \displaystyle\lim_{t \uparrow t_e}  X^{u,\theta^\star}_{t} \left( \omega \right) 
    \quad \text{exists in } \mathbb{R}^K 
  \right\} \notag \\
\left\{ 
\text{(\ref{Xu explosion last integral eqn SCALAR})}= +\infty
\right\}
\quad &\iff  \quad 
\left\{  \displaystyle\lim_{t \uparrow t_e} X^{u,\theta^\star}_{t} \left( \omega \right) 
    = \left[ + \infty, \dots, + \infty \right]^{\top} 
  \right\} \notag \\
 \left\{ 
\text{(\ref{Xu explosion last integral eqn SCALAR})}= -\infty
\right\}
\quad &\iff \quad 
\left\{  \displaystyle\lim_{t \uparrow t_e} X^{u,\theta^\star}_{t} \left( \omega \right) 
    = \left[ - \infty, \dots, - \infty \right]^{\top} 
  \right\} . 
\end{align}
To finish the proof, we separately consider the $\lambda < 0$ 
and $\lambda > 0$ cases.

\vspace{2mm}

\noindent {\bf $\lambda < 0$ Case.}

\vspace{2mm}

Assume that $\lambda < 0$. 
It follows that 
\begin{equation*}
\displaystyle\lim_{t \uparrow t_e} 
 \displaystyle\int_{ t_e - \rho   }^t
\displaystyle\frac{   \left| \tilde{W}_s \left(\omega \right) F_{K,1} \left( s  \right) \right| }
{ \left| s - t_e \right|^{1  +  \lambda } } 
\, ds  < \infty
\quad \text{and} \quad 
\displaystyle\lim_{t \uparrow t_e} 
 \displaystyle\int_{ t_e - \rho   }^t
\displaystyle\frac{   \left|  F_{K,2} \left( s  \right) \right| }
{ \left| s - t_e \right|^{1  +  \lambda } } 
\, ds  < \infty .
\end{equation*}
Clearly, 
\begin{equation*}
\displaystyle\lim_{t \uparrow t_e} 
\left| t - t_e \right|^{ \lambda  }  = + \infty , 
\end{equation*}
meaning that 
\begin{align}\label{neg Lamb SS FIN pos expl 1}
&\left\{
y_K \left( \omega \right)
 - \displaystyle\lim_{t \uparrow t_e} 
 \displaystyle\int_{ t_e - \rho   }^t 
\displaystyle\frac{  F_{K,2} \left( s \right)}
{ \left| s - t_e \right|^{1  +  \lambda } }\, ds 
>  \displaystyle\lim_{t \uparrow t_e} 
 \displaystyle\int_{ t_e - \rho   }^t 
\displaystyle\frac{  \tilde{W}_s \left( \omega \right) F_{K,1} \left( s  \right)}
{ \left| s - t_e \right|^{1  +  \lambda } }\, ds 
\right\} \\
&\quad \implies \quad
  \left\{  \displaystyle\lim_{t \uparrow t_e} X^{u, \theta^\star}_{t} \left( \omega \right)= \left[ + \infty, \dots, + \infty \right]^{\top}
  \right\} \notag
  \end{align}
  and 
 \begin{align}\label{neg Lamb SS FIN NEG expl 1}
&\left\{
y_K \left( \omega \right)
 - \displaystyle\lim_{t \uparrow t_e} 
 \displaystyle\int_{ t_e - \rho   }^t 
\displaystyle\frac{  F_{K,2} \left( s \right)}
{ \left| s - t_e \right|^{1  +  \lambda } }\, ds 
<  \displaystyle\lim_{t \uparrow t_e} 
 \displaystyle\int_{ t_e - \rho   }^t 
\displaystyle\frac{  \tilde{W}_s \left( \omega \right) F_{K,1} \left( s  \right)}
{ \left| s - t_e \right|^{1  +  \lambda } }\, ds 
\right\} \\
&\quad \implies \quad
  \left\{  \displaystyle\lim_{t \uparrow t_e} X^{u, \theta^\star}_{t} \left( \omega \right)= \left[ - \infty, \dots, - \infty \right]^{\top}
  \right\} \notag
  \end{align} 
Arguing as in our discussion of (\ref{deriv exp theta vj}), 
 we see that the hypotheses in 
(\ref{neg Lamb SS FIN pos expl 1})
and 
 (\ref{neg Lamb SS FIN NEG expl 1})
also imply that 
\begin{equation*}
\left\{  \displaystyle\lim_{t \uparrow t_e} \theta^{u,\star}_{t} \left( \omega \right) 
    = \left[ + \infty, \dots, + \infty \right]^{\top} 
  \right\} 
  \quad \text{and} \quad 
  \left\{  \displaystyle\lim_{t \uparrow t_e} \theta^{u,\star}_{t} \left( \omega \right) 
    = \left[ - \infty, \dots, - \infty \right]^{\top} 
  \right\} ,
\end{equation*}
  respectively.\footnote{
  In particular, the coordinates of $\theta^{u,\star}_{t} \left( \omega \right)$
  will asymptotically explode at the rate $\left| t - t_e \right|^{-\lambda - 1}$.}
 Conditional on $\tilde{\mathcal{F}}_{t_e - \rho}$, the RHS of the inequality in 
(\ref{neg Lamb SS FIN pos expl 1}) (and \ref{neg Lamb SS FIN NEG expl 1})
 is deterministic. 
 Since $F_{K,1} \left( t_e \right) > 0$
 (see (\ref{FK zero at t_e 100})), 
 we finish our proof of Theorem~\ref{SS main blow up lem FIN}.

\vspace{2mm}

\noindent {\bf $\lambda > 0$ Case.}

\vspace{2mm}

Assume that $\lambda > 0$. 
We can find a constant $R_0 \left( \omega \right)$ such that 
\begin{equation}\label{lam pos no expl prelim 1}
\left| y_K \left( \omega \right)  -
 \displaystyle\int_{t_e - \rho }^t 
\displaystyle\frac{  \tilde{W}_s \left(\omega \right) F_{K,1} \left( s  \right) + F_{K,2} \left( s  \right)   }
{ \left| s - t_e \right|^{1+ \lambda} }\, ds 
\right| \leq \displaystyle\frac{R_0 \left( \omega \right) }{ \left| t - t_e \right|^{\lambda}} . 
\end{equation}
Hence, (\ref{Xu explosion last integral eqn SCALAR}) is bounded as $t \uparrow t_e$ and 
\begin{equation*}
\displaystyle\lim_{t \uparrow t_e} X^{u,\theta^\star}_{t} \left( \omega \right) 
 \end{equation*} 
 exists in $\mathbb{R}^K$ by our previous comments.

By our discussion surrounding (\ref{deriv exp theta vj}), 
we see that explosions in the coordinates of $\theta^{u,\star}_{t} \left( \omega \right)$ 
are characterized by 
\begin{align}\label{lamb pos theta expl key term 1}
\displaystyle\lim_{t \uparrow t_e} 
&\left[
- \lambda \left| t - t_e \right|^{\lambda - 1} 
\left(
y_K \left( \omega \right)  -
 \displaystyle\int_{t_e - \rho }^t 
\displaystyle\frac{  \tilde{W}_s \left(\omega \right) F_{K,1} \left( s  \right) + F_{K,2} \left( s  \right)   }
{ \left| s - t_e \right|^{1+ \lambda} }\, ds 
\right)
\right. \notag \\
&\qquad \left.
- 
\left( 
\displaystyle\frac{  \tilde{W}_t \left(\omega \right) F_{K,1} \left( t \right) + F_{K,2} \left( t  \right)   }
{ \left| t - t_e \right| }
\right)
\right] . 
\end{align}
More precisely, 
\begin{align}\label{lamb pos theta expl key term 1 limits}
\left\{ 
\text{(\ref{lamb pos theta expl key term 1})}= +\infty
\right\}
\quad &\iff \quad 
\left\{  \displaystyle\lim_{t \uparrow t_e} \theta^{u,\star}_{t} \left( \omega \right) 
    = \left[ + \infty, \dots, + \infty \right]^{\top} 
  \right\} \notag \\
 \left\{ 
\text{(\ref{lamb pos theta expl key term 1})}= -\infty
\right\}
\quad &\iff \quad 
\left\{  \displaystyle\lim_{t \uparrow t_e} \theta^{u,\star}_{t} \left( \omega \right) 
    = \left[ - \infty, \dots, - \infty \right]^{\top} 
  \right\} . 
\end{align}

Suggestively, we first rewrite the expression in (\ref{lamb pos theta expl key term 1}) as
\begin{align}\label{term by term explo pos lamb}
&F_{K,2} \left( t  \right)  \left( 
\lambda \left| t - t_e \right|^{\lambda - 1} 
 \displaystyle\int_{t_e - \rho }^t \displaystyle\frac{  1 }
{ \left| s - t_e \right|^{1+ \lambda} }\, ds
- \displaystyle\frac{  1 }
{ \left| t - t_e \right| }
\right) \notag \\ 
&\quad + 
\lambda \left| t - t_e \right|^{\lambda - 1} 
 \displaystyle\int_{t_e - \rho }^t \displaystyle\frac{  F_{K,2} \left( s  \right)  - F_{K,2} \left( t  \right)  }
{ \left| s - t_e \right|^{1+ \lambda} }\, ds
 \notag \\ 
 &\quad - \lambda \left| t - t_e \right|^{\lambda - 1} y_K \left( \omega \right) \notag \\
 &\quad + \tilde{W}_t \left( \omega \right) F_{K,1} \left( t  \right)  \left( 
\lambda \left| t - t_e \right|^{\lambda - 1} 
 \displaystyle\int_{t_e - \rho }^t \displaystyle\frac{  1 }
{ \left| s - t_e \right|^{1+ \lambda} }\, ds
- \displaystyle\frac{  1 }
{ \left| t - t_e \right| }
\right) \notag \\ 
&\quad + 
\lambda \left| t - t_e \right|^{\lambda - 1} 
 \displaystyle\int_{t_e - \rho }^t \displaystyle\frac{ \tilde{W}_s \left( \omega \right)  \left[  F_{K,1} \left( s  \right)  - F_{K,1} \left( t  \right)  \right] }
{ \left| s - t_e \right|^{1+ \lambda} }\, ds
 \notag \\ 
  &\quad + 
 \lambda \left| t - t_e \right|^{\lambda - 1}  F_{K,1} \left( t  \right)
 \displaystyle\int_{t_e - \rho }^t \displaystyle\frac{ \tilde{W}_s \left( \omega \right)  - \tilde{W}_t \left( \omega \right) }
{ \left| s - t_e \right|^{1+ \lambda} }\, ds 
\end{align}
Let $R_1$ and $R_2$ be the deterministic Lipschitz coefficients for $F_{K,1}$ and $F_{K,2}$. 
The first two lines of (\ref{term by term explo pos lamb}) are deterministic, 
and we can obtain the following bounds:
\begin{equation*}
\begin{split}
& \left| F_{K,2} \left( t  \right)  \left(
\lambda \left| t - t_e \right|^{\lambda - 1} 
 \displaystyle\int_{t_e - \rho }^t \displaystyle\frac{  1 }
{ \left| s - t_e \right|^{1+ \lambda} }\, ds
- \displaystyle\frac{  1 }
{ \left| t - t_e \right| }
\right)
\right| \notag \\
&\qquad 
\leq 
\displaystyle\frac{ \left| F_{K,2} \left( t \right) \right| \left| t- t_e \right|^{\lambda - 1} }{\rho^{\lambda}} 
\notag \\ 
&\left| \lambda \left| t - t_e \right|^{\lambda - 1} 
 \displaystyle\int_{t_e - \rho }^t \displaystyle\frac{  F_{K,2} \left( s  \right)  - F_{K,2} \left( t  \right)  }
{ \left| s - t_e \right|^{1+ \lambda} }\, ds \right| \notag \\
&\qquad 
\leq 
\left( \displaystyle\frac{ \lambda R_2 }{1 - \lambda} \right) 
\left( \rho^{1 - \lambda} \left| t- t_e \right|^{\lambda - 1} - 1 \right) 
\end{split}
\end{equation*}
In (\ref{term by term explo pos lamb}), the third line is 
deterministic conditional on $\tilde{\mathcal{F}}_{t_e - \rho}$. 
Lines 4 - 6 of (\ref{term by term explo pos lamb})
are stochastic conditional on $\tilde{\mathcal{F}}_{t_e - \rho}$. 
Letting $R_3 \left( \omega \right)$ be the maximum of 
$\left| \tilde{W}_t \left( \omega \right) \right|$ on $\left[ t_e - \rho, t_e \right]$, 
we notice that
\begin{equation*}
\begin{split}
& \left| \tilde{W}_t \left( \omega \right) F_{K,1} \left( t  \right)  \left( 
\lambda \left| t - t_e \right|^{\lambda - 1} 
 \displaystyle\int_{t_e - \rho }^t \displaystyle\frac{  1 }
{ \left| s - t_e \right|^{1+ \lambda} }\, ds
- \displaystyle\frac{  1 }
{ \left| t - t_e \right| }
\right) 
\right| \notag \\
&\qquad 
\leq 
\displaystyle\frac{ F_{K,1} \left( t \right) \left| \tilde{W}_t \left( \omega \right) \right| \left| t- t_e \right|^{\lambda - 1} }{\rho^{\lambda}} 
\notag \\ 
&\left|
\lambda \left| t - t_e \right|^{\lambda - 1} 
 \displaystyle\int_{t_e - \rho }^t \displaystyle\frac{ \tilde{W}_s \left( \omega \right)  \left[  F_{K,1} \left( s  \right)  - F_{K,1} \left( t  \right)  \right] }
{ \left| s - t_e \right|^{1+ \lambda} }\, ds
\right| \notag \\
&\qquad 
\leq 
\left( \displaystyle\frac{ \lambda R_1 R_3 \left( \omega \right)}{1 - \lambda} \right) 
\left( \rho^{1 - \lambda} \left| t- t_e \right|^{\lambda - 1} - 1 \right) . 
\end{split}
\end{equation*}


When $\lambda > 1$, it follows that we see that (\ref{lamb pos theta expl key term 1}) has the 
same behavior as 
\begin{equation}\label{lamb > 1 reduced limt}
\displaystyle\lim_{t \uparrow t_e} \left[
 \left| t - t_e \right|^{\lambda - 1}  
 \displaystyle\int_{t_e - \rho }^t \displaystyle\frac{ \tilde{W}_s \left( \omega \right)  - \tilde{W}_t \left( \omega \right) }
{ \left| s - t_e \right|^{1+ \lambda} }\, ds 
\right]
\end{equation}
(all other terms tend to 0 $\tilde{P}$-a.s.). 
Using integration by parts,
\begin{align}\label{int by parts lamb > 1}
&\left| t - t_e \right|^{\lambda - 1}    \displaystyle\int_{t_e - \rho }^t \displaystyle\frac{ \tilde{W}_s \left( \omega \right)  - \tilde{W}_t \left( \omega \right) }
{ \left| s - t_e \right|^{1+ \lambda} }\, ds  \notag \\
&\qquad \sim
\mathcal{N} \left( 0 ,  \left| t - t_e \right|^{2\lambda - 2}  \displaystyle\int_{t_e - \rho }^t  \left( 
\displaystyle\frac{\left| s - t_e \right|^{-\lambda}}{\lambda} - \displaystyle\frac{1}{ \lambda \rho^{\lambda}}
\right)^2 ds \right) . 
\end{align}
Asymptotically, the variance in (\ref{int by parts lamb > 1}) grows like $\left| t - t_e \right|^{-1}$ as $t \uparrow t_e$, completing
the proof of Theorem~\ref{no Inv expl lem SS FIN}.  
\qed

{\small

\bibliographystyle{siam}  
\bibliography{munkThesisBib0502} }

\end{document}